\newcommand\blfootnotea[1]{%
  \begingroup
  \renewcommand\thefootnote{}\footnote{#1}%
  \endgroup
}
\colorlet{linkcolor}{magenta}
\newlist{itemizec}{itemize}{2}
\setlist[itemizec,1]{label=\faCaretRight ,wide, parsep= 0.05pt, left = 15pt}
\newcommand{\AND}{\end{tabular}\par\begin{tabular}[t]{c}}
\def\E{\mathbb E}
\def\P{\mathbb P}
\def\R{\mathbb R}
\def\I{\mathbb I}
\def\N{\mathbb N}
\def\Z{\mathbb Z}
\newcommand{\eps}{\epsilon}
\newcommand\numberthis{\addtocounter{equation}{1}\tag{\theequation}}
\newcommand{\1}{\mathbbm{1}}
\let\vec\mathbf  %
\newcommand{\bA}{\vec{A}} \newcommand{\bB}{\vec{B}} \newcommand{\bC}{\vec{C}}
  \newcommand{\bF}{\vec{F}}
 \newcommand{\bH}{\vec{H}} \newcommand{\bI}{\vec{I}}
\newcommand{\bJ}{\vec{J}}
 \newcommand{\bT}{\vec{T}} \newcommand{\bU}{\vec{U}}
\newcommand{\bV}{\vec{V}}
\newcommand{\bv}{\vec{v}} \newcommand{\bw}{\vec{w}}
\newcommand{\cA}{\mathcal{A}} \newcommand{\cB}{\mathcal{B}} \newcommand{\cC}{\mathcal{C}}
 \newcommand{\cE}{\mathcal{E}} 
  \newcommand{\cL}{\mathcal{L}}
 \newcommand{\cN}{\mathcal{N}} 
\newcommand{\cS}{\mathcal{S}}  \newcommand{\cU}{\mathcal{U}}
  \newcommand{\cX}{\mathcal{X}}
 \newcommand{\cZ}{\mathcal{Z}}
\newcommand{\poly}{\mathrm{poly}}
\newcommand{\polylog}{\mathrm{polylog}}
\newcommand{\dtv}{\mathrm{TV}}
\def\argmax{\qopname\relax n{argmax}}
\def\argmin{\qopname\relax n{argmin}}
\newcommand{\SDA}{{\mathrm{SDA}}}
\newcommand{\VSTAT}{{\mbox{VSTAT}}}
\newcommand{\Ber}{\mathrm{Ber}}
\newcommand{\littlesum}{\mathop{\textstyle \sum}}
\crefname{equation}{Equation}{Equations}
\crefname{lemma}{Lemma}{Lemmata}
\crefname{claim}{Claim}{Claims}
\crefname{fact}{Fact}{Facts}
\crefname{theorem}{Theorem}{Theorems}
\crefname{proposition}{Proposition}{Propositions}
\crefname{corollary}{Corollary}{Corollaries}
\crefname{remark}{Remark}{Remarks}
\crefname{definition}{Definition}{Definitions}
\crefname{question}{Question}{Questions}
\crefname{condition}{Condition}{Conditions}
\crefname{figure}{Figure}{Figures}
\crefname{testingproblem}{Testing Problem}{Testing Problems}
\crefname{subproposition}{Proposition}{Propositions}
\newtheorem{theorem}{Theorem}[section]
\newtheorem{lemma}[theorem]{Lemma}
\newtheorem{claim}[theorem]{Claim}
\newtheorem{proposition}[theorem]{Proposition}
\theoremstyle{definition}
\newtheorem{fact}[theorem]{Fact}
\newtheorem{definition}[theorem]{Definition}
\newtheorem{testingproblem}[theorem]{Testing Problem}
\newtheorem{question}[theorem]{Question}
\newtheorem{remark}[theorem]{Remark}
\theoremstyle{definition}
\newcommand{\sign}{\text{sgn}}
\newenvironment{proofsketch}[1][Proof Sketch]{%
  \begin{proof}%
}{%
  \end{proof}%
}
\definecolor{Red}{rgb}{1,0,0}
\definecolor{Blue}{rgb}{0,0,1}
\definecolor{DGreen}{rgb}{0,0.55,0}
\definecolor{Purple}{rgb}{.75,0,.25}
\definecolor{Grey}{rgb}{.5,.5,.5}
\def\red{\color{Red}}
\def\grey{\color{Grey}}
\newlist{initemize}{itemize*}{1}
\setlist[initemize]{%
  label=\textbullet,            %
  labelsep=0.5em,               %
  itemjoin=\hspace{1em}         %
}
\newcommand{\distCont}{T}
\newcommand{\distDisc}{Q}
\newcommand{\distDiscT}{\widetilde{\distDisc}}
\newcommand{\hidDisc}{A}
\newcommand{\hidDiscT}{\widetilde{A}}
\newcommand{\hidCont}{B}
\newcommand{\distNull}{P}
\newcommand{\nullSQ}{\mathcal{P}}
\newcommand{\altSQ}{\mathcal{Q}}
\newcommand{\hiddenSQ}{\mathcal{H}}
\newcommand{\DG}[1]{%
 \hypersetup{linkcolor=black}%
  \hyperref[def:discrete-gaussian]{%
    \textsf{\textsc{dg}}'\!\bigl[#1\bigr]%
  }%
}
\newcommand{\NDG}[1]{%
 \hypersetup{linkcolor=black}%
  \hyperref[def:discrete-gaussian]{%
    \textsf{\textsc{dg}}\!\,\bigl[#1\bigr]%
  }%
}
\newcommand{\NDGs}[1]{%
 \hypersetup{linkcolor=black}%
  \hyperref[def:discrete-gaussian]{%
    \textsf{\textsc{dg}}\!\,[#1]%
  }%
}
\def\colorful{0}
  \newcommand{\anote}[1]{\footnote{{\bf [Ankit: {#1}\bf ]}}}
  \newcommand{\inote}[1]{\footnote{{\bf [[Ilias: {#1}\bf ]] }}}
  \newcommand{\cnote}[1]{\footnote{{\bf [[Chao: {#1}\bf ]] }}}
\newcommand{\todo}[1]{{\textbf{ [{\red{Todo}}: {#1}]}}}
\newcommand{\light}[1]{{}}
\newcommand{\anote}[1]{}
\newcommand{\inote}[1]{}
\newcommand{\cnote}[1]{}
\newcommand{\todo}[1]{}
\title{Information-Computation Tradeoffs 
for Noiseless Linear Regression with Oblivious Contamination\blfootnotea{Authors are listed in alphabetical order.}
}
\author{
Ilias Diakonikolas\thanks{Supported by NSF Medium Award CCF-2107079 and an H.I. Romnes Faculty Fellowship.}\\
University of Wisconsin, Madison \\
{\tt ilias@cs.wisc.edu} \\
\and 
Chao Gao\thanks{Supported by NSF Grants ECCS-2216912 and DMS-2310769 and an Alfred Sloan fellowship.}\\
University of Chicago \\
{\tt chaogao@uchicago.edu}
\and 
Daniel M.\ Kane\thanks{Supported by NSF Medium Award CCF-2107547.}\\
University of California, San Diego \\
{\tt dakane@ucsd.edu} \\
\and
John Lafferty\\
Yale University \\
{\tt john.lafferty@yale.edu}
\AND
Ankit Pensia\thanks{The majority of this work was done while the author was supported by Research Pod on Resilience in Brain, Natural, and Algorithmic Systems at the Simons Institute, UC Berkeley.}\\
Carnegie Mellon University\\
{\tt ankitp@cmu.edu} \\
}
\begin{document}

\maketitle

\begin{abstract}
We study the task of noiseless linear regression under Gaussian covariates in the presence of additive oblivious contamination. Specifically, we are given i.i.d.\ samples 
from a distribution $(x, y)$ on $\R^d \times \R$  
with $x \sim \cN(0,\bI_d)$ and $y = x^\top \beta + z$, 
where $z$ is drawn independently of $x$ from an unknown distribution $E$. 
Moreover, $z$  satisfies $\P_E[z = 0] = \alpha>0$.
The goal is to accurately recover the regressor 
$\beta$ to small $\ell_2$-error. 
Ignoring computational considerations, this problem 
is known to be solvable using $O(d/\alpha)$ samples. 
On the other hand, the best known polynomial-time algorithms 
require $\Omega(d/\alpha^2)$ samples. Here we provide formal 
evidence that the quadratic dependence in $1/\alpha$ is 
inherent for efficient algorithms. Specifically, we show 
that any efficient Statistical Query algorithm
for this task requires  
VSTAT complexity
at least $\tilde{\Omega}(d^{1/2}/\alpha^2)$.
\end{abstract}

\setcounter{footnote}{0}
\thispagestyle{empty}

 \newpage

\section{Introduction}
\label{sec:introduction}

Linear regression is a prototypical 
supervised learning task
with a wide range of applications~\cite{Rousseeuw:1987, Dielman01, McD09}.
In the vanilla setting, we are given 
labeled samples $(x^{(i)}, y^{(i)})$,  
where the covariates $x^{(i)}$ are drawn i.i.d.\  
from a distribution on $\R^d$ 
and the labels $y^{(i)}$ are (potentially noisy) 
evaluations of a linear function. 
The goal of the learner is to approximately recover 
the hidden regression vector. 
In this standard setting, linear regression is well-understood both 
statistically and computationally. 
Specifically, under Gaussian covariates {with additive Gaussian noise}, the least-squares estimator 
is computationally efficient and statistically optimal.

In many real-world scenarios, the input data is subject 
to some form of contamination, e.g.,  
errors due to skewed and corrupted measurements, 
making even simple statistical estimation tasks algorithmically challenging.
In the context of linear regression, 
classical computationally efficient estimators 
inherently fail in the presence of data contamination. 
An important goal {in this context} is to understand the  
possibilities and limitations of computationally efficient estimation 
in the presence of contaminated data.

In this work, we study the fundamental problem 
of linear regression with Gaussian covariates in the presence of  
{\em oblivious} {additive} 
contamination in the responses   
(see \Cref{def:estimation-problem}).  
In the oblivious contamination model, an adversary is allowed 
to corrupt a $(1-\alpha)$-fraction of the labels 
{(by adding an adversarially selected 
value to the label)}, 
for some parameter 
$\alpha>0$, and is limited in their capability by requiring the 
contamination be {\em independent} of the samples.
Interestingly, the oblivious 
model information-theoretically allows 
for consistent estimation even for $\alpha \rightarrow 0$. 
This stands in contrast to the more challenging model of 
adversarial contamination~\cite{Hub64,DiaKan22-book}, 
where non-trivial guarantees are impossible  
if more than half of the labels are corrupted. 

To facilitate the subsequent discussion, we define our learning task below. 

\begin{definition}[Noiseless Linear Regression with Oblivious Contamination in Responses]
\label{def:estimation-problem}
Let $\alpha \in (0,1)$ be the probability of inliers.  
Let $E$ be a univariate distribution with $\P_{Z \sim E}(Z=0)\geq \alpha$. 
For $\beta \in \R^d$, we denote by $P_{\beta, E}$  the 
distribution on labeled examples $(x,y) \in \R^d \times \R$ 
defined as follows:
\begin{align*}
x \sim \cN(0,\bI_d)  \text{ and } y = x^\top \beta + Z, \text{ where $Z \sim E$ is independent of {$x$}. }
\end{align*}
Given i.i.d.\ samples $\{(x_i,y_i)\}_{i=1}^n$ from an unknown $P_{\beta^*,E}$, the goal is to construct an estimate $\widehat{\beta}$ such that $\|\widehat{\beta} - \beta^*\|_2$ is small.  
\end{definition}

The model of \Cref{def:estimation-problem} 
goes back to the work of 
Candes and Tao \cite{candes2005decoding}, who studied it 
(for more general design matrices)  
as a classical example of error correction. 
It is also a standard model in face recognition \cite{wright2008dense}, image inpainting \cite{nguyen2013exact}, privacy-preserving data analysis \cite{dwork2007price}, and model repair \cite{gao2020model}. 
A basic result in this area is that the true $\beta$ can be recovered exactly, as long as the design matrix 
satisfies restricted isometry (therefore, for Gaussian design) and the number of nonzero entries of the noise 
is not too large (detailed below) \cite{candes2005decoding,candes2005error,wright2008dense,nguyen2013exact,gao2020model}.
Interestingly, Candes and Tao \cite{candes2005decoding} noted that the model can also be recast as compressed sensing.

The statistical task of linear regression with Gaussian covariates under oblivious contamination 
has been extensively studied 
over the past decade~\cite{Tsakonas14, JaiTK14,BhatiaJK15,BhatiaJKK17, 
suggala2019adaptive,pesme2020online, dalalyan2019outlier, Steurer21Outliers}. 
The oblivious model has also been explored 
for other natural tasks, including PCA, sparse 
recovery~\cite{pesme2020online, d2021consistent}, 
and estimating a signal with additive oblivious 
contamination~\cite{SoSOblivious21}.
While most prior work has 
focused on Gaussian or subgaussian design 
matrices, a more recent line of investigation 
has developed efficient estimators 
in the distribution-free setting under mild 
assumptions~\cite{DiakonikolasKPT23, DiakonikolasKPT23b}.

Let us return to \Cref{def:estimation-problem} and discuss the precise quantitative aspects.
Ignoring computational constraints, the 
sample complexity $n$ required to obtain any non-trivial estimate of $\beta^*$ for the problem of \Cref{def:estimation-problem} is $n=d/\alpha$; 
in fact, $n = \Theta(d/\alpha)$ samples suffice to estimate $\beta^*$ \emph{exactly}. 
In contrast, the best known computationally efficient algorithms require sample complexity of 
$n=\Omega(d/\alpha^2)$ samples~\cite{gao2020model, Steurer21Outliers}.
Interestingly, known polynomial-time 
algorithms using $n=O(d/\alpha^2)$ samples 
succeed even for the (more challenging) {\em noisy} version of the estimation task---where (in addition to oblivious contamination) the clean labels are perturbed by random observation noise (e.g., Gaussian noise).\footnote{To be precise, in the \emph{noisy} version of the problem, the labels are of the form $y = x^\top \beta + \xi+Z$, where $\xi \sim N(0,\sigma^2)$. 
\Cref{def:estimation-problem} corresponds to the important special case of $\sigma=0$.
For the noisy case of $\sigma>0$, the information-theoretic error rate is 
$\|\widehat{\beta}-\beta^*\|_2 = \Theta\big(\sigma \cdot \sqrt{\frac{d}{n \alpha^2}}\big)$.}

While \emph{noisy} linear regression with oblivious contamination 
information-theoretically requires $\Omega(d/\alpha^2)$ samples, this is not the case for the \emph{noiseless} version considered in this work---where, as mentioned above, $O(d/\alpha)$ samples suffice. 
This quadratic gap in $1/\alpha$ 
between the information-theoretic optimum and the sample 
complexity of known polynomial-time algorithms can be  
significant in applications where the fraction of inliers 
$\alpha$ is small. Beyond practical considerations, given 
the fundamental nature of this estimation problem, it is 
natural to ask whether a computationally efficient algorithm 
with (near-)optimal sample complexity (i.e., within 
logarithmic factors of the optimal) exists. 
This leads to the central question motivating our work:
\begin{question}
\label{ques:main-intro}
Does there exist a constant $c>0$ so that for all $d \in \N,\alpha\in(0,1)$, there exists an algorithm, using 
$O(\frac{\poly(d)}{\alpha^{2 - c}})$ samples and 
running in $\poly(d,n)$ time, 
that computes an estimate $\widehat{\beta}$ 
such that $\|\widehat{\beta} - \beta^*\|_2$ is small?
\end{question}
Our main result answers this question in the negative 
for efficient Statistical Query (SQ) algorithms---a broad and well-studied
family of algorithms.

\subsection{Main Result}

To establish our negative result, 
we shall show that even the following (easier) testing task is computationally hard for SQ algorithms:

\begin{testingproblem}[Testing Version of Linear Regression with Oblivious Contamination]
\label{def:lin-regr-oblivious}
Let $\rho>0$ be the signal strength and $\alpha \in (0,1)$ be the inlier probability. 
Let $E$ be a (known) univariate distribution 
on $\R$ that assigns at least $\alpha$ probability 
to $0$. 
Let $R^*_{\rho,E}$ be the univariate distribution of $G+z$, where $G \sim \cN(0,\rho^2)$ and $z \sim E$ independently. The algorithm gets sample access to a distribution $(x,y) \sim \Theta$ with the goal of distinguishing:
\begin{itemize}[leftmargin=2em]
\item ``Null'':  $\Theta = P$, where under $P$: $x \sim \cN(0,\bI_d)$ and $y \sim R^*_{\rho,E}$  independently.
\item ``Alternate'':  First a unit vector $v$ is sampled uniformly, and then conditioned on $v$,  $\Theta = Q_v$,  where under $Q_v$: $x \sim \cN(0,\bI_d)$ and $y = \rho v^\top x + z $, where $z \sim E$ is independent of $x$. 
\end{itemize}
We say that an algorithm $\cA$ succeeds if the failure probability of $\cA$ is less than $1/10$ under both the ``null'' and the ``alternate''. 
\end{testingproblem}
Note that, under the null hypothesis, the features $x$ and the responses $y$ are independent of each other; while under the alternate hypothesis, they follow {the distribution $P_{\beta,E}$ of } \Cref{def:estimation-problem} with $\|\beta\|_2  = \rho$.
We show in \Cref{app:estimation_is_harder_than_testing} that a (computationally-efficient) estimation algorithm {for the task of estimating $\beta$} 
with error $\rho/4$ suffices to (computationally-efficiently) solve the testing problem above.
\begin{proposition}[Efficient Reduction of Testing to Estimation; Informal]
\label{prop:estimation-hardness}
If there exists a computationally-efficient algorithm to compute $\widehat{\beta}$ with $\|\widehat{\beta}-\beta^*\|\leq \rho/4$ with high probability, then it can be transformed into a computationally-efficient algorithm for \Cref{def:lin-regr-oblivious}. 
\end{proposition}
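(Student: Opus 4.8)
The plan is a two-stage reduction. Given an efficient estimation algorithm $\cE$ as in the hypothesis, I would use part of the sample to run $\cE$ and obtain a candidate regressor $\widehat\beta$, and then use a disjoint batch of fresh samples to decide whether $\widehat\beta$ is genuinely an approximate regressor for the observed data, declaring ``alternate'' if and only if this test passes. The point to keep in mind is the asymmetry between the two cases of \Cref{def:lin-regr-oblivious}: under the ``alternate'' the sample is an instance of \Cref{def:estimation-problem} with $\|\beta^*\|_2=\rho$, so $\cE$ returns $\widehat\beta$ with $\|\widehat\beta-\beta^*\|_2$ small; under the ``null'' the responses are independent of the covariates and $\cE$ has \emph{no} guarantee, so the $\widehat\beta$ it returns may be arbitrary.

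Stage 1 (accuracy boosting). I would first drive the estimation error well below $\alpha\rho$. Conditioned on the ``alternate'', the relabeled pairs $(x_i,\,y_i-x_i^\top\widehat\beta^{(k)})$ are again distributed as $P_{\beta^*-\widehat\beta^{(k)},E}$ --- the \emph{same} noise law $E$ and inlier probability $\alpha$ --- so feeding these to $\cE$ (on a fresh batch, independent of $\widehat\beta^{(k)}$) and adding the output to $\widehat\beta^{(k)}$ produces $\widehat\beta^{(k+1)}$ with $\|\beta^*-\widehat\beta^{(k+1)}\|_2\le\tfrac14\|\beta^*-\widehat\beta^{(k)}\|_2$. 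Starting from $\widehat\beta^{(0)}=0$ and iterating $O(\log(1/\alpha))$ times therefore yields $\widehat\beta$ with $\|\widehat\beta-\beta^*\|_2\le \eta:=c_0\alpha\rho$ for a small absolute constant $c_0$; the accumulated failure probability is kept below a small constant by amplifying each call of $\cE$ (run it $O(\log\log(1/\alpha))$ times and take a geometric median). Under the ``null'' this procedure still outputs some vector $\widehat\beta$, which is all we need.

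Stage 2 (verification). On a fresh batch of $m=O(1/\alpha)$ samples, form the residuals $r_i:=y_i-x_i^\top\widehat\beta$ and compute $\widehat p:=\tfrac1m\sum_i \1\{|r_i|\le \eta\}$. Under the ``alternate'', each inlier sample (those with $z_i=0$, a $\ge\alpha$ fraction) satisfies $r_i=(\beta^*-\widehat\beta)^\top x_i\sim\cN(0,\|\beta^*-\widehat\beta\|_2^2)$ with standard deviation $\le\eta$, so $\P[|r_i|\le\eta]\ge\alpha\,\P_{g\sim\cN(0,1)}[|g|\le 1]\ge 0.6\alpha$ and hence $\E[\widehat p]\ge 0.6\alpha$. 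Under the ``null'', $x$ and $y$ are independent and $\widehat\beta$ is independent of the fresh batch, so writing $y_i=G_i+z_i$ with $G_i\sim\cN(0,\rho^2)$ independent of everything else, $r_i=G_i+(z_i-x_i^\top\widehat\beta)$ has density at most $1/(\sqrt{2\pi}\rho)$; thus $\P[|r_i|\le\eta]\le 2\eta/(\sqrt{2\pi}\rho)=2c_0\alpha/\sqrt{2\pi}\le 0.1\alpha$ and $\E[\widehat p]\le 0.1\alpha$. Since $\widehat p$ is an average of $m=O(1/\alpha)$ i.i.d.\ indicators with mean $\Theta(\alpha)$, a multiplicative Chernoff bound shows $\widehat p$ lies within $0.2\alpha$ of its mean except with small constant probability, so declaring ``alternate'' iff $\widehat p\ge 0.3\alpha$ solves the testing problem with total failure probability below $1/10$ under both hypotheses.

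Every step --- the $O(\log(1/\alpha))$ calls to $\cE$, the affine relabelings $y\mapsto y-x^\top\widehat\beta^{(k)}$, and the single averaged-indicator statistic $\widehat p$ --- is polynomial time and, in particular, faithfully realizable by statistical queries (a query to a relabeled distribution is a query to the original, and $\widehat p$ is one bounded query), so a low-complexity estimator becomes a low-complexity tester. The main obstacle, and the reason Stage 1 is needed, is precisely the lack of any control on $\widehat\beta$ under the ``null'': the verification must therefore rely only on fresh samples and on a feature robust to an arbitrary $\widehat\beta$ --- here the $O(1/\rho)$ bound on the residual density forced by the $\cN(0,\rho^2)$ component of $y$ --- and for the resulting gap to be a constant \emph{factor} (rather than $\Theta(\alpha)$ versus a constant, which is all one gets from an error-$\rho/4$ estimator applied once) the estimation error must be pushed below $\alpha\rho$. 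The remaining points are routine: that $\cE$'s guarantee is of a form permitting the geometric contraction (it concerns an arbitrary hidden regressor and an unknown noise law, so it is), and that the $O(\log(1/\alpha))$ and $O(1/\alpha)$ overheads in sample count and the compounded failure probabilities are acceptable.
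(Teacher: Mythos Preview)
Your Stage~1 contraction does not follow from the hypothesis. The assumption is that $\cE$ achieves error $\rho/4$, where $\rho$ is the fixed signal-strength parameter of the testing problem --- not $\|\beta^*\|_2/4$. After one relabeling step the new hidden regressor $\beta^*-\widehat\beta^{(1)}$ has norm at most $\rho/4$, but applying $\cE$ to that instance is only guaranteed to return something within $\rho/4$ of it; in particular the zero vector is a valid output and no progress is made. Your closing remark that the guarantee ``concerns an arbitrary hidden regressor and an unknown noise law, so it is'' of the right form is precisely the mistake: uniformity over $(\beta^*,E)$ means the \emph{same} additive bound $\rho/4$ applies to every instance, not that the bound scales with $\|\beta^*\|_2$. (There is a repair you did not write: at step $k$ multiply the relabeled responses by $4^k$ so the new regressor $4^k(\beta^*-\widehat\beta^{(k)})$ again has norm $\le\rho$, invoke $\cE$ on the rescaled instance --- the rescaled noise $4^kZ$ still satisfies $\P[4^kZ=0]\ge\alpha$ --- then divide the output by $4^k$ and add. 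This recovers the geometric contraction.) Without a working Stage~1, your own analysis shows Stage~2 fails for small $\alpha$.

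The paper's reduction sidesteps all of this with a single idea that needs neither boosting nor a residual test. Run $\cE$ twice on disjoint batches: once on $S$ to get $\widehat\beta_1$, and once on $S'$ after applying a uniformly random rotation $\bU$ to the covariates, yielding $w$ and then $\widehat\beta_2:=\bU^\top w$. Under the alternate, rotational invariance of $\cN(0,\bI_d)$ makes the rotated batch an instance with regressor $\bU\beta^*$, so both $\widehat\beta_1$ and $\widehat\beta_2$ land within $\rho/4$ of $\beta^*$ and their normalized inner product is bounded away from zero. Under the null, $x$ and $y$ are independent, so the rotated batch has the same law regardless of $\bU$; hence $w$ is independent of $\bU$, the direction $\widehat\beta_2/\|\widehat\beta_2\|_2$ is uniform on the sphere independently of $\widehat\beta_1$, and their inner product is $O(1/\sqrt d)$ with high probability. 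Thresholding this inner product solves the test with exactly two black-box calls to an error-$\rho/4$ estimator --- no iteration, and no need to control what $\cE$ does under the null.
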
 

\noindent {\bf Basics on SQ Algorithms.}
Instead of getting sample access, SQ algorithms~\cite{Kearns98, FelGRVX17} interact with the underlying distribution $D$ 
through the following oracle. 

\begin{restatable}[VSTAT Oracle]{definition}{DefVSTATOracle} \label{def:vstat}
Let $D$ be a distribution on $\cX$. A statistical query is a bounded function $f : \cX \to [0, 1]$.
For a ``simulation complexity'' $m \in \N$, a \VSTAT(m) oracle for the distribution $D$ on the input $f$ returns a value $v$ such that  $|v-\E_D[f]|\leq \max\bigl\{1/m, \sqrt{(\E_D[f] (1 - \E_D[f]))/{m}}\bigr\}$.
\end{restatable}
That is, the $\VSTAT(m)$ oracle returns an estimate of $\E_D[f]$ with 
error comparable to the deviation in Bernstein's inequality for high-probability estimates of taking $m$ i.i.d.\ samples from the Bernoulli 
distribution with bias $\E_D[f]$. 
{We thus refer to $m$ as the simulation complexity.}

A \emph{Statistical Query (SQ) algorithm} is an algorithm 
whose objective is to learn some information about an unknown 
distribution $D$ by making adaptive calls 
to the corresponding $\VSTAT$ oracle.
{The complexity of an SQ algorithm is quantified by the total number of queries to the $\VSTAT$ oracle (viewed as a measure of the algorithm's running time) and the maximum simulation complexity of 
any such query (viewed as a measure of the algorithm's sample complexity).}

{In the context of our learning problem 
(\Cref{def:estimation-problem}), it is worth pointing out the 
following.
First, there exists an inefficient SQ algorithm
{with small simulation complexity, 
which in particular can be simulated using $\widetilde{O}(d/\alpha)$ many i.i.d.\ samples}
(see 
\Cref{app:inefficient-sq}). 
{Second, there exist efficient SQ algorithm whose simulation complexity matches the sample complexity $\widetilde{O}(d/\alpha^2)$ of known efficient algorithms}
(see \Cref{app:efficient-sq}).}

{With this context,} our main result is the following: 

\begin{theorem}[SQ Hardness of \Cref{def:lin-regr-oblivious}; informal]
\label{thm:sq-hardness-discrete-gaussian}
Consider the %
\Cref{def:lin-regr-oblivious}.
Suppose that (i) $\alpha \gg \frac{1}{d^{\polylog(d)}}$ (i.e., the fraction of inliers is not too tiny) and 
(ii) $\rho = \widetilde{\Theta}(\alpha)$. 
Then there exists a distribution $E$  
satisfying $\P_{Z \sim E}(Z = 0) \geq \alpha$ such 
that any SQ algorithm that solves 
\Cref{def:lin-regr-oblivious} 
either
\begin{itemize}[leftmargin=2em]
    \item uses $d^{\Omega(\log^2(d/\alpha))}$ many queries, or
    \item uses at least one query to $\VSTAT(m)$ 
for $m = \widetilde{\Omega} (\sqrt{d}/\alpha^2)$.
\end{itemize}
\end{theorem}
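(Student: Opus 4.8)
The plan is to run the standard \emph{hidden-direction} Statistical Query lower bound argument, instantiated with a noise distribution $E$ built from a carefully tuned discrete Gaussian, and then pull back the estimation statement via \Cref{prop:estimation-hardness}. First I would set up the hidden-direction structure. Fix a unit vector $v$ and split $x = (v^\top x)v + x^\perp$. Under $Q_v$ the component $x^\perp$ is a standard $(d-1)$-dimensional Gaussian independent of everything, while $(v^\top x,y)$ has the fixed $2$-dimensional law $A := \mathrm{law}(g,\rho g+z)$ with $g\sim\cN(0,1)$ and $z\sim E$; under $P$ the triple $(v^\top x,x^\perp,y)$ is a product and, crucially, the $y$-marginal of $A$ equals the $y$-marginal of $P$, namely $R^*_{\rho,E}=\cN(0,\rho^2)*E$. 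Hence $P$ and $Q_v$ agree on all marginals, the ``excess'' of $Q_v$ over $P$ is a function of $(v^\top x,y)$ alone, and by Mehler's formula the $P$-correlation of $Q_u$ and $Q_v$ equals $\sum_{j\ge 1}\langle u,v\rangle^{j}\,\|\widehat\psi_j\|_{L^2(R^*_{\rho,E})}^2$, where $\widehat\psi_j(y)=\E_A[H_j(v^\top x)\mid y]$ is the degree-$j$ normalized Hermite coefficient of the conditional law of $v^\top x$ given $y$ (the $j=0$ term vanishes by marginal matching). It then suffices, by the standard SQ-dimension lemma, to find a large near-orthogonal family $V$ of unit vectors over which these correlations, and the variances $\E_{v\sim V}[(\E_{Q_v}[f]-\E_P[f])^2]$ of bounded queries $f$, are small.

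Next I would build $E$. Take $E$ to be a discrete Gaussian on $\lambda\Z$ of width $\sigma_E$, with the parameters set so that $\P_E(0)=\Theta(\lambda/\sigma_E)\ge\alpha$ while $\rho/\lambda$ exceeds a sufficiently large multiple of $\sqrt{\log(d/\alpha)}$; since $\rho=\widetilde\Theta(\alpha)$ one may take $\lambda=\widetilde\Theta(\alpha)$ and $\sigma_E=\widetilde\Theta(1)$. By Poisson summation the characteristic function of $E$ is a sum of unit-height, width-$\Theta(1/\sigma_E)$ Gaussian bumps at the dual lattice $\frac{2\pi}{\lambda}\Z$; convolving with $\cN(0,\rho^2)$ damps all but the central bump by $e^{-\Omega((\rho/\lambda)^2)}$. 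Consequently $R^*_{\rho,E}$ is the Gaussian $\cN(0,\sigma^2)$, $\sigma^2=\sigma_E^2+\rho^2$, up to ghost corrections of size $e^{-\Omega((\rho/\lambda)^2)}$; in particular it has Fisher information $\widetilde\Theta(1)$, and the same Poisson-summation computation yields $\|\widehat\psi_j\|^2=(\rho/\sigma)^{2j}+(\text{ghosts}\le e^{-\Omega((\rho/\lambda)^2)})$. Thus $\|\widehat\psi_1\|^2=\rho^2/\sigma^2=\rho^2\cdot(\text{Fisher information of }R^*_{\rho,E})=\widetilde\Theta(\alpha^2)$, and since $\rho/\sigma=\widetilde\Theta(\alpha)\ll1$ the coefficients decay geometrically, giving $\big|\sum_{j\ge1}\langle u,v\rangle^{j}\|\widehat\psi_j\|^2\big|\le|\langle u,v\rangle|\,\widetilde O(\alpha^2)+e^{-\Omega((\rho/\lambda)^2)}$.

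To finish, choose $V$ to be a packing of the unit sphere with pairwise inner products at most $\tau$, where $\tau^2 d=\Theta(\log d\cdot\log^2(d/\alpha))$, so that $|V|=d^{\Omega(\log^2(d/\alpha))}$ (and $\tau<1$ precisely because $\alpha\gg d^{-\polylog(d)}$), and pick the multiple in $\rho/\lambda$ large enough that the ghost terms fall below $\alpha^2/\sqrt d$. Then the worst-case $P$-correlation over $V$ is $\gamma=\widetilde\Theta(\tau\alpha^2)=\widetilde\Theta(\alpha^2/\sqrt d)$, and the matching query-variance bound $\E_{v\sim V}[(\E_{Q_v}[f]-\E_P[f])^2]\le\gamma\,\E_P[f^2]$ follows from the same Hermite estimate together with the fact that $v^\top x$ and $v^\top x'$ decorrelate at rate $O(1/\sqrt d)$ under a uniformly random $v$; note this argument integrates only against $P$ and against $Q_v\otimes Q_v$, so the fact that $Q_v$ (because of its atom at $0$) is singular with respect to $P$ does no harm. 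Feeding $\gamma$ and $|V|$ into the SQ-dimension lemma gives the theorem: any SQ algorithm solving \Cref{def:lin-regr-oblivious} with this $E$ either uses $d^{\Omega(\log^2(d/\alpha))}$ queries or makes one query to $\VSTAT(\Omega(1/\gamma))=\VSTAT(\widetilde\Omega(\sqrt d/\alpha^2))$; applying \Cref{prop:estimation-hardness} transfers the hardness to the estimation task.

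The hard part is the construction in the second paragraph: one must produce a \emph{single} distribution $E$ that simultaneously places $\ge\alpha$ mass at $0$ (forcing a lattice no coarser than $\widetilde\Theta(\alpha)$), keeps $R^*_{\rho,E}$ close enough to a pure Gaussian that its Fisher information and every coefficient $\|\widehat\psi_j\|^2$ are tightly controlled (forcing $\rho$ to sit a polylog factor above the lattice spacing, whence the hypotheses $\rho=\widetilde\Theta(\alpha)$ and $\alpha\gg d^{-\polylog}$), and suppresses the dual-lattice ghost contributions below $\alpha^2/\sqrt d$; the Poisson-summation estimates that make all three hold at once, in particular the sharp evaluation $\|\widehat\psi_j\|^2=(\rho/\sigma)^{2j}+(\text{ghosts})$, are the technical heart, as is checking that the singular (atom-supported) part of $Q_v$ cannot be exploited except through the many-queries branch. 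It is also worth isolating why the bound is $\sqrt d/\alpha^2$ rather than $d/\alpha^2$: the cross-covariance $\E_{Q_v}[xy]=\rho v$ is an unavoidable degree-one discrepancy that no choice of $E$ can moment-match away, so the leading correlation is linear in $\langle u,v\rangle$, i.e.\ of order $\tau$, and $\tau=\widetilde\Theta(1/\sqrt d)$ is the smallest inner-product scale at which a $d^{\Omega(\log^2(d/\alpha))}$-sized near-orthogonal packing still exists.
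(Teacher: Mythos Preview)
Your high-level plan---take $E$ to be a discrete Gaussian with spacing $\lambda=\widetilde\Theta(\alpha)$, exploit that the conditional law $A_y$ of $v^\top x$ given $y$ is then a discrete Gaussian whose Hermite coefficients are close to those of $\cN(\mu_y,\sigma^2)$, and read off a correlation of order $\rho^2|\langle u,v\rangle|$---is exactly the construction the paper uses. The numerology you extract (leading term $(\rho/\sigma)^{2}|\langle u,v\rangle|$ forcing the $\sqrt d/\alpha^2$ scale, the packing size $d^{\Omega(\log^2(d/\alpha))}$, the condition $\rho/\lambda\gtrsim\sqrt{\log(d/\alpha)}$) is also correct.

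There is, however, a real gap in how you execute the SQ step. Because $A_y$ is discrete, $Q_v$ is \emph{singular} with respect to $P$: the Radon--Nikodym derivative $dQ_v/dP$ does not exist, $\chi^2(Q_v,P)=\infty$, and the Mehler identity $\chi_P(Q_u,Q_v)=\sum_{j\ge1}\langle u,v\rangle^{j}\|\widehat\psi_j\|^2$ is simply undefined on the left. So you cannot ``feed $\gamma$ and $|V|$ into the SQ-dimension lemma'' (\Cref{def:SDA}, \Cref{prop:SDA-upper-bound}); that lemma presupposes finite pairwise correlations. Your parenthetical fix---``this argument integrates only against $P$ and $Q_v\otimes Q_v$''---does not close the hole: to bound $\E_v[(\E_{Q_v}[f]-\E_P[f])^2]$ via Hermite analysis you still need the identity $\E_{Q_v}[f]=\sum_{k}\bA_{k}\langle v^{\otimes k},\bT_k\rangle$ for the \emph{full} series, and for a discrete $A_y$ that series need not converge to the right value without a truncation argument. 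Relatedly, your claim that the Poisson-summation ghosts are uniformly $\le e^{-\Omega((\rho/\lambda)^2)}$ is false for large $j$: the $m\ne0$ dual-lattice terms in $\E_{A_y}[h_j]$ pick up a polynomial prefactor of order $(\rho/\lambda)^{j}$ (equivalently, the paper's \Cref{lem:Fixed-y-hermite-coefficients} has $k^{O(k)}$ in front of the exponential), so around $j\asymp(\rho/\lambda)^2$ the ghost is $\Theta(1)$, not exponentially small. This does not kill the final estimate once you also use $|\langle u,v\rangle|^{j}$, but it does mean the ``sharp evaluation $\|\widehat\psi_j\|^2=(\rho/\sigma)^{2j}+e^{-\Omega((\rho/\lambda)^2)}$'' as stated is wrong and the tail needs separate treatment.

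The paper resolves both issues by inserting an auxiliary \emph{continuous} problem: replace $A_y$ by its continuous counterpart $B_y=\cN(\mu_y,\sigma^2)$ to get alternates $T_v$ with finite $\chi^2(T_v,P)$, prove a genuine SDA bound for $P$ vs.\ $\{T_v\}$ (this is your $(\rho/\sigma)^{2}|\langle u,v\rangle|$ calculation, now legal), and then, for each bounded $f$, show $|\E_{Q_v}[f]-\E_{T_v}[f]|$ is $(d/\alpha)^{-\log^2(d/\alpha)}$ with high probability over $v$ via a truncated Hermite expansion (\Cref{prop:decomposition-disc-vs-cont-main-body}), your ghost estimate for low degrees, a crude $e^{O(\mu_y^2)}$ bound for high degrees, and concentration of $\langle v^{\otimes k},\bT_{k,y}\rangle$. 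The two pieces are then glued at the level of the success event $\cE_{f,v,m}$ using \Cref{eq:sq-mid-step-intro} rather than SDA. Your Poisson-summation viewpoint is a perfectly good replacement for the paper's moment-comparison lemma at low degree, but the detour through $T_v$ (or an equivalent truncation/convergence argument) is not optional: it is precisely what lets one pass from ``Hermite coefficients of $A_y$ are close to those of $B_y$'' to an actual query-level bound for the singular $Q_v$.
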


{Informally speaking, \Cref{thm:sq-hardness-discrete-gaussian} 
shows that no SQ algorithm can solve the testing problem 
(and, via \Cref{prop:estimation-hardness}, the estimation 
problem of approximating $\beta^{\ast}$) 
with less than super-polynomial in $d$ many queries, 
unless using queries whose simulation complexity is at least
$\widetilde{\Omega} (\sqrt{d}/\alpha^2)$.}
That is, either the algorithm ``uses''
$\widetilde{\Omega} (\sqrt{d}/\alpha^2)$ 
many ``samples'' (in the sense of simulation complexity mentioned above) or it takes super-polynomial ``time'' (in the sense of number of queries). 
{We thus obtain evidence that the quadratic dependence in $1/\alpha$ on 
the sample size is required for computationally efficient algorithms. 

It is worth noting that the SQ-hard instances that we construct for 
the testing problem {\em are} efficiently solvable with 
$\widetilde{O}(\sqrt{d}/\alpha^2)$ samples. We conjecture that the correct 
dependence on $d$ is in fact linear (i.e., 
an $\Omega(d/\alpha^2)$ lower bound on the computational sample 
complexity). This is left as an interesting question for future work (see \Cref{sec:conc}). 
}

Finally, while the focus of this work is on the SQ model, 
SQ-hardness results typically translate to quantitatively 
similar hardness for low-degree polynomial 
tests~\cite{Hopkins-thesis,KunWB19}, via the 
work of~\cite{BreBHLS21}. While we do not 
establish a formal theorem in this regard, we 
believe that our SQ-hard instances are also 
hard for low-degree polynomials.

\subsection{Overview of Techniques}
\label{sub:overview_of_techniques}

We wish to show that it is hard to solve \Cref{def:lin-regr-oblivious} 
with fewer than $\sqrt{d}/\rho^2$ samples 
(we will ultimately set $\rho = \widetilde{\Theta}(\alpha)$).
The first question we face is to make a judicious choice 
of the {contamination} distribution $E$ that 
\hypertarget{Crit1}{{\color{linkcolor}(I)}} satisfies our noise model, 
namely $\P_{Z \sim E}(Z=0)\geq \alpha$; 
and \hypertarget{Crit2}{{\color{linkcolor}(II)}} it is SQ-hard 
to distinguish the null and alternate hypotheses.

\paragraph{Choice of {Contamination} Distribution: Intuition.}
{A natural first step to consider} 
is what happens if {we select the contamination distribution $E$} 
to be the standard Gaussian, 
i.e., $E = \cN(0,1)$. In this case, the testing task corresponding 
to \Cref{def:lin-regr-oblivious}
is {\em information-theoretically} 
{impossible} with $o(\sqrt{d}/\rho^2)$ samples.
Unfortunately, this {choice} does not fit our criterion \hyperlink{Crit1}{(I)}, requiring that the {contamination distribution} 
must be \emph{exactly} $0$ with probability at least $\alpha$.

Inspired from the information-theoretic sample complexity lower bound for 
the Gaussian {contamination} setting, we instead consider 
a scenario where the {contamination} is given by a distribution $E$, 
which is a {\em discrete} Gaussian with spacing $s$ 
(see \Cref{def:discrete-gaussian}). Heuristically, the
discrete Gaussian
approximately matches its low-degree moments with the {continuous} Gaussian case, and thus, 
{prior work~\cite{DiaKS17} hints}
that it is SQ-hard to distinguish between the cases 
of discrete Gaussian and continuous Gaussian contamination. Since the case of 
continuous Gaussian contamination information-theoretically requires
$\Omega(\sqrt{d}/\rho^2)$ samples, {intuitively we are moving in the right direction.} Note that the aforementioned discrete Gaussian $E$ 
assigns probability $\Omega(s)$ to $0$.
Taking $s = \Theta(\alpha)$, we simultaneously satisfy  criterion \hyperlink{Crit1}{(I)} above and have a reasonable chance of satisfying \hyperlink{Crit2}{(II)}. 

The above is the key intuitive idea underlying our proof. 
However, there are a number of important technical steps 
required to make the analysis work towards 
satisfying \hyperlink{Crit2}{(II)}.

\paragraph{Discrete Noise and Non-Gaussian Component Analysis.}
For a unit vector $v$, let $\distDisc_v$ be the distribution over $(x,y)$ 
such that $y = \rho v^\top  x + Z$, where $Z\sim E$ independently of $x$ 
and $E$ is the suitable discrete Gaussian distribution. 
Let $\distNull$ be the distribution over $(x,y)$ corresponding to the null hypothesis, 
namely  $x$ and $y$ are independent with correct marginals (i.e., $x~\sim \cN(0,\bI_d)$ and $y\sim \rho^2 G + Z$, where $G\sim\N(0,1)$ and $Z\sim E$ are independent).
We wish to show that it is SQ-hard to distinguish between $\distDisc_v$, 
for random $v$, and $\distNull$.
We note that conditioning on the value of $y$, 
$\distDisc_v$ is a standard Gaussian in the directions orthogonal to $v$ 
and is given by some known distribution, $\hidDisc_y$, in the $v$–direction. 
This means that the testing problem we are considering is effectively 
a {\em conditional} Non-Gaussian Component Analysis (NGCA) 
problem 
(\Cref{def:ngca-lin-regr}). Unfortunately, {there are several technical obstacles} preventing us from 
applying existing tools in the literature~\cite{DiaKS17,DiaKS19}.

{The first technical hurdle arises from the fact}  
that $\hidDisc_y$ is a discrete distribution, and in particular has infinite chi-squared norm with respect to the standard Gaussian. 
In particular, this means that the standard SQ–dimension 
related techniques for proving lower bounds will not work here. 
Instead, we need to {leverage and} adapt the recent work of \cite{DiaKRS23} that {directly uses Gaussian Fourier analysis}  
to establish SQ lower bounds even when the chi-squared distance 
is infinite. Unfortunately, the latter work \cite{DiaKRS23} 
does not give SQ-lower bounds for {\em conditional} 
Non-Gaussian Component Analysis tasks 
(as the one we are dealing with here).  Consequently, 
we will require a careful adaptation of their techniques in our context.

\paragraph{Connection with Continuous Gaussian Contamination.}
A key requirement for the Gaussian Fourier analysis to go through 
in \cite{DiaKRS23} is that $\hidDisc_y$'s have well-behaved moments.
Unfortunately, an additional technical challenge arising in our context is that 
directly bounding the relevant moments of $\hidDisc_y$ (which belongs to the family of discrete Gaussians) is challenging. 

Instead, for the purpose of the analysis, we again leverage the connection with continuous Gaussian contamination. 
Specifically, we choose $\hidCont_y$ to be a continuous Gaussian counterpart 
of the discrete Gaussian $\hidDisc_y$.
Let the resulting distribution on $(x,y)$ be $\distCont_v$ 
(which is a continuous counterpart of $\distDisc_v$). 
Note that this is again an instance of conditional NGCA. 
Since the $\hidCont_y$'s are now (continuous) Gaussians 
(and hence satisfy many desirable properties, e.g., continuity), 
it can be shown that if $v$ and $w$ are nearly orthogonal vectors, 
$\distCont_v$ and $\distCont_w$ will have small chi-squared inner product 
with respect to $P$. 

\paragraph {Hardness of Continuous Noise Contamination.} The fact that two random unit vectors have small inner product with high probability can be used to show 
that the task of testing between $P$ and $\{\distCont_v\}_{v \sim \cS^{d-1}}$ 
has large SQ dimension. 
This implies SQ-hardness of this basic testing problem. 
In fact, it will imply the more powerful result that for any bounded function $f$, with high probability over $v$, the expectations $\E_{\distCont_v}[f]$ and $\E_P[f]$ cannot be distinguished by a $\VSTAT(o(m_0))$ query for $m_0:=\sqrt{d}/(\rho^2\cdot\log^4d)$; see \Cref{lem:sq-hardness-continuous-noise}.

 \paragraph{Quantitative Relationship between Discrete and Continuous Gaussian Noise.} 
We now return to the challenge of computing moments of discrete Gaussians 
$\hidDisc_y$ (for performing Gaussian Fourier analysis). We resolve this issue 
by comparing these moments to the moments of $\hidCont_y$. 
As $\hidDisc_y$ will be a discrete version of the Gaussian $\hidCont_y$, 
this relationship will be relatively manageable to prove. 
We then combine this ingredient with techniques involving  Hermite analysis 
from \cite{DiaKRS23} to show the following: 
for any bounded test function $f$, 
with high probability over the choice of a random $v$, 
it holds that $|\mathbf{E}_{\distDisc_v}[f] - \mathbf{E}_{\distCont_v}[f]|$ is tiny (inverse super-polynomial in $m_0$)  as long as $s \ll \tfrac{\rho}{\polylog(d)}$ (\Cref{prop:distinguish-discrete-vs-cont}).

 \paragraph{Putting Everything Together.} 
{Combining the above}, we obtain the following: for any $f$, with high probability over random $v$, it holds that (i)  
$|\mathbf{E}_{\distDisc_v}[f] - \mathbf{E}_{\distCont_v}[f]|$ 
is {inverse} super-polynomially small in $m_0$, and (ii) 
$|\mathbf{E}_{\distCont_v}[f] - \mathbf{E}_{P}[f]|$ is smaller than the threshold for $\VSTAT(o(m_0))$. Therefore, by a union bound and a triangle inequality, 
it follows that with high probability 
$|\mathbf{E}_{\distDisc_v}[f] - \mathbf{E}_P[f]|$ is also  
smaller than the threshold for $\VSTAT(o(m_0))$, implying SQ-hardness (\Cref{prop:sq-lower-bound-intro}).

\subsection{Related Work}

Our work is broadly situated in the field of robust statistics, which has a long history dating back to Huber and Tukey~\cite{Hub64,Tuk60}.
Robust statistics aims to design estimators that are tolerant to data contamination.
Focusing on high-dimensional data, our work studies the statistical and computational aspects of robust estimation, which has seen a flurry of work in the last decade since \cite{DiaKKLMS16-focs,LaiRV16}; see \cite{DiaKan22-book} for a recent book on this topic.
For designing robust estimators, the choice of contamination model naturally plays a crucial role. 
This work is part of a broader effort to understand computational and statistical aspects of natural, not fully adversarial, contamination models; see, e.g.,~\cite{BhatiaJK15,BhatiaJKK17,ZhuJS19,DGT19,DiakonikolasKane22-massart-sq,DKMR22, DKRS22, DKKTZ22, DiakonikolasKPT23, DDKW23, DDKW23b, ma2024estimationmissingcompletelyrandom,NieGS24,PenPit24,DZ24, KotGao25,DiaIKP25-mean-shift}.

Historically, the prototypical contamination model in robust statistics has been Huber's contamination model~\cite{Hub64}, which was strengthened to total variation distance~\cite{Huber65} and strong contamination models~\cite{DiaKKLMS16-focs}.
The task of linear regression under these contamination models is now well understood both statistically~\cite{CheGR16} and computationally~\cite{DiaKS19,PenJL20,DiaKPP23-huber-optimal}.
As mentioned earlier, it is information-theoretically impossible to achieve consistency in these models 
if the proportion of contamination  
is bounded away from zero.
Thus, an important direction is to understand the possibilities and limitations in other, less adversarial, contamination models.
The \emph{oblivious} adversary studied here 
is one such model, and indeed it does lead to consistent estimation even when the oblivious outliers constitute the majority of the observed data; see the discussion below \Cref{def:estimation-problem}.
Our work shows that while this weaker contamination model is benign from the perspective of information-theoretic rates, it does present surprising information-computation tradeoffs.

\section{Preliminaries}

\paragraph{Notation.}  
For a univariate distribution $E$, we define $R^*_{\rho,E}$ to be the univariate distribution of $G+z$, where $G \sim \cN(0,\rho^2)$ and $z \sim E$ independently. 
For two vectors $v$ and $w$ in $\R^d$, we use $\langle v, w\rangle$ and $u^\top w$ interchangeably to denote the standard inner product $\sum_{i \in [d]} v_iw_i$. 
A degree-$k$ tensor, or $k$-tensor in short, in $d$-dimensions  $\bv$ is an element of $(\R^{d})^{\otimes k}$ 
with entries $(\bv_{i_1,\dots,i_k})_{i_1 \in [d],\dots, i_k \in [d]}$.
For a vector $v$, we use $v^{\otimes k}$ to denote the $k$-tensor with entries $\prod_{\ell=1}^d v_{i_\ell}$.
For two $k$-tensors $\bv$ and $\bw$, we use $\langle \bv,\bw\rangle$ to denote the inner product $\sum_{i_1,\dots, i_k}\bv_{i_1,\dots,i_k}\bw_{i_1,\dots,i_k}$ and use $\|\bv\|_2:= \sqrt{\langle \bv,\bv\rangle}$.
A $k$-tensor function $\bF:\cX \to (\R^d)^{\otimes k}$ maps each $x \in \cX$ to a $k$-tensor.

We use $\widetilde{\Omega}$, $\widetilde{\Theta}$ notation to hide $\polylog$ factors in the arguments. 
For two non-negative functions,  $a$ and $b$, we use $a \lesssim b$ (similarly $a \gtrsim b$) to say that there exists a constant (independent of other problem parameters) such that $a \leq C b$ (respectively, $a\geq Cb)$; if $a\lesssim b$ and $b\gtrsim a$, then we say $a \asymp b$.

\paragraph{Concentration.} 
We say that a real-valued random variable $X$ is $\sigma$-subgaussian if $\P(|X]| >  t ) \lesssim \exp(-t^2/\sigma^2)$ for all $t> 0$.
We shall  use the following fact about the moment generating function of $X^2$ for subgaussian random variables: 
\begin{restatable}{fact}{FactSubgaussian}
    \label{fact:subgaussian-square}
    There exists a finite constant $a_0 > 0$ such that if $X$ is $\sigma$-subgaussian then $\left|\E\Big[e^{a \frac{X^2}{\sigma^2}}\Big] -1 \right| \lesssim |a|$ for $|a| \leq a_0$.
\end{restatable}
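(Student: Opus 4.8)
The plan is to reduce to the case $\sigma = 1$ and then bound the moment generating function of $X^2$ by splitting on the sign of $a$. Set $Y := X/\sigma$; then $Y$ is $1$-subgaussian with the same implied constant, i.e.\ there is a finite $C \ge 1$ with $\P(|Y| > t) \le C e^{-t^2}$ for all $t > 0$ (obtained from $\P(|X|>\sigma t)\lesssim \exp(-(\sigma t)^2/\sigma^2)$), and $\E[e^{aX^2/\sigma^2}] = \E[e^{aY^2}]$. So it suffices to show $|\E[e^{aY^2}] - 1| \lesssim |a|$ for $|a|$ at most some absolute constant $a_0 \in (0,1)$.

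First I would establish a one-sided bound on the MGF of $Y^2$ for $a = b \in [0,1)$. By the layer-cake formula applied to the nonnegative variable $e^{bY^2}-1$ and the shift $w = 1+u$, one gets $\E[e^{bY^2} - 1] = \int_0^\infty \P\bigl(e^{bY^2} > 1+u\bigr)\, du = \int_1^\infty \P\bigl(|Y| > \sqrt{(\log w)/b}\,\bigr)\, dw \le C \int_1^\infty w^{-1/b}\, dw = \frac{Cb}{1-b}$, where the final equality uses $1/b > 1$. Since $e^{bY^2} \ge 1$ pointwise, this gives $0 \le \E[e^{bY^2}] - 1 \le \frac{Cb}{1-b}$; restricting to $b \le a_0$ then yields $|\E[e^{aY^2}] - 1| \le \frac{C}{1-a_0}\,a$ for $a \in [0, a_0]$.

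Next I would handle $a = -b$ with $b > 0$. Here $0 < e^{-bY^2} \le 1$, so $\E[e^{-bY^2}] - 1 \le 0$, and the elementary inequality $1 - e^{-t} \le t$ for $t \ge 0$ gives $1 - \E[e^{-bY^2}] \le b\,\E[Y^2]$. The second moment is bounded by an absolute constant: again by layer-cake, $\E[Y^2] = \int_0^\infty \P(|Y| > \sqrt{t})\, dt \le C \int_0^\infty e^{-t}\, dt = C$. Hence $|\E[e^{aY^2}] - 1| = 1 - \E[e^{-bY^2}] \le C|a|$ for all $a \le 0$. Combining the two cases with, say, $a_0 := 1/2$ gives $|\E[e^{aX^2/\sigma^2}] - 1| \le 2C|a|$ for $|a| \le 1/2$, which is the claimed bound.

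Each ingredient is an elementary tail integral, so there is no genuine obstacle here; the only points needing a little care are that the bound on $\E[e^{bY^2}]$ blows up as $b \to 1$, which forces $a_0$ to be taken bounded away from $1$ (any fixed $a_0 \in (0,1)$ works, and the implied constant $\tfrac{1}{1-a_0}$ is absorbed into $\lesssim$), and that the constant implicit in the definition of ``$\sigma$-subgaussian'' must be tracked through the rescaling $X \mapsto X/\sigma$, under which it is unchanged.
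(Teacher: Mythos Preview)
Your proof is correct. It differs from the paper's argument: the paper expands $e^{aX^2/\sigma^2}-1$ as a Taylor series $\sum_{i\ge 1} a^i X^{2i}/(\sigma^{2i}i!)$, invokes the subgaussian moment bound $\E|X|^p \le (C\sigma\sqrt{p})^p$ term by term, and then sums the resulting geometric series; this handles both signs of $a$ at once via the triangle inequality on the series. You instead work directly with the tail bound through the layer-cake representation, splitting on the sign of $a$. Your route is slightly more self-contained in that it avoids the detour through moment bounds (which are themselves derived from the tail bound), and your treatment of $a<0$ via $1-e^{-t}\le t$ gives the bound for all negative $a$ with no restriction. The paper's series approach is a bit more uniform across signs and makes the dependence on the subgaussian constant explicit through the moment formula. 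Both arguments are elementary and essentially equivalent in strength.
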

The next fact follows from H\"older's inequality:
\begin{restatable}{fact}{FactSubgaussianConc}
    \label{fact:subgaussian-cond}
    Let $X$ be a $\sigma$-subgaussian random variable. Then, for any event $\cE$, we have that $\E[|X| \big\vert \cE] \lesssim \sigma \sqrt{\log(1/\min(\P(X \in \cE), \P(X \not\in \cE))} $.
\end{restatable}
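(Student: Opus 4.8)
The plan is to reduce to the single genuinely nontrivial regime and then combine H\"older's inequality with the moment bound implied by subgaussianity. Write $p := \P(\cE)$, so the asserted upper bound is $\sigma\sqrt{\log(1/\min(p,1-p))}$. First I would dispose of the easy regimes: if $p \ge 1/e$ (so $p$ is bounded away from $0$, though possibly close to $1$), then the crude estimate $\E[|X| \mid \cE] = \E[|X|\1_\cE]/p \le \E[|X|]/p \le e\,\E[|X|]$, together with $\E[|X|] \lesssim \sigma$ (immediate from $\sigma$-subgaussianity), gives $\E[|X| \mid \cE] \lesssim \sigma$; and in this regime $\min(p,1-p) \le 1/2$, so $\log(1/\min(p,1-p)) \ge \log 2$ and hence the right-hand side is $\asymp \sigma$ or larger, so the inequality holds. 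It therefore remains to treat the case $p < 1/e$, where $\min(p,1-p) = p$ and we must show $\E[|X| \mid \cE] \lesssim \sigma\sqrt{\log(1/p)}$.

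For this case, fix a free exponent $r > 1$ and let $r' = r/(r-1)$ be its H\"older conjugate. Applying H\"older's inequality to $|X|$ and $\1_\cE$ yields
\[
\E[|X|\,\1_\cE] \;\le\; \bigl(\E|X|^r\bigr)^{1/r}\,\bigl(\E\,\1_\cE^{\,r'}\bigr)^{1/r'} \;=\; \bigl(\E|X|^r\bigr)^{1/r}\,p^{\,1-1/r},
\]
and dividing both sides by $p = \P(\cE)$ gives $\E[|X| \mid \cE] \le (\E|X|^r)^{1/r}\,p^{-1/r}$. The subgaussian tail bound yields the standard moment estimate $(\E|X|^r)^{1/r} \lesssim \sigma\sqrt{r}$, uniformly over $r \ge 1$: for $r \ge 2$ one writes $\E|X|^r = \int_0^\infty r t^{r-1}\,\P(|X|>t)\,dt \lesssim \sigma^r\,r\,\Gamma(r/2)$ and invokes Stirling, while for $1 \le r \le 2$ it also follows from $\E|X|^r \le (\E X^2)^{r/2} \lesssim \sigma^r$. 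Substituting, we obtain $\E[|X| \mid \cE] \lesssim \sigma\sqrt{r}\,p^{-1/r}$ for every $r \ge 1$.

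It remains to optimize over $r$. Since $p < 1/e$ we have $\log(1/p) > 1$, so we may take $r = \log(1/p)$; then $p^{-1/r} = \exp\!\bigl(\tfrac{1}{r}\log(1/p)\bigr) = e$, and therefore
\[
\E[|X| \mid \cE] \;\lesssim\; \sigma\, e\, \sqrt{\log(1/p)} \;\lesssim\; \sigma\sqrt{\log(1/p)}\,,
\]
which is exactly the claimed bound in the remaining regime, completing the proof.

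I do not expect a genuine obstacle here: the argument is a routine application of H\"older's inequality together with subgaussian moment growth, and the only points requiring a little care are (i) the boundary regimes where $\log(1/\min(p,1-p))$ is merely a constant, which are handled by the trivial bound $\E[|X|\mid\cE] \le \E[|X|]/\P(\cE) \lesssim \sigma$, and (ii) verifying that the moment estimate $(\E|X|^r)^{1/r} \lesssim \sigma\sqrt r$ holds with an absolute constant for all $r \ge 1$. If one prefers, steps (ii) and the optimization can be replaced by directly invoking a standard subgaussian norm inequality (e.g.\ $\E\exp(c X^2/\sigma^2) \le 2$ for a suitable absolute constant $c$, as in \Cref{fact:subgaussian-square}), from which $\E[|X|\1_\cE] \le \E[|X|\exp(c X^2/\sigma^2)\exp(-cX^2/\sigma^2)\1_\cE]$ and a Chernoff-type truncation at level $t \asymp \sigma\sqrt{\log(1/p)}$ give the same conclusion.
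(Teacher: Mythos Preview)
Your proof is correct and follows exactly the approach the paper indicates: the paper simply asserts that the fact ``follows from H\"older's inequality'' without further detail, and your argument is the natural fleshing-out of that remark (H\"older with exponent $r$, the subgaussian moment bound $(\E|X|^r)^{1/r}\lesssim\sigma\sqrt{r}$, and optimizing $r=\log(1/p)$). The case split to handle $p$ bounded away from $0$ is appropriate and your treatment is clean.
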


 \paragraph{Hermite Polynomials.} 
For $k \in \N$, we use $h_k: \R \to \R$ to denote the $k$-th normalized probabilist's Hermite polynomial, which is the degree-$k$ polynomial defined by $h_k(x) := \frac{1}{\sqrt{k!}} (-1)^k e^{x^2/2} \frac{d^k}{dx^k} e^{-x^2/2}$. 
We shall also use  the $k$-th Hermite tensor function $\bH_k(x)$ as defined in \cite[Definition 2.2]{DiaKRS23}.

 \paragraph{Fourier Analysis.} For a distribution $P$ on a domain $\cX$, we use $L^2(\cX, P)$ to denote the space of all functions $f:\cX \to \R$ with $\E_{x \sim P}[f^2(x)]< \infty$.
For two functions $f,g \in L^2(\cX, P)$, we use $\langle f, g\rangle_P $ to denote the inner product $\E_{x \sim P}[f(x)g(x)]$ and $\|f\|_{L_2(P)}$ to denote $\langle f, f \rangle _P$.
For a function $f:\R^d \to \R$ and an $\ell \in \N$, we define $f^{\leq \ell}$ to be the degree-$\ell$ Hermite approximation function  $f^{\leq \ell}(x):= \sum_{k=0}^\ell \langle \bA_k, \bH_k(x)\rangle$ where $\bA_k := \E_{x \sim P}[f(x) \bH_k(x)]$ is a $k$-tensor, which is equal to $\langle f, \bH_k\rangle_P$ elementwise. 
We extend this definition to $f:\R^{d} \times \R$ as follows: First, for each $y \in \R$, we define $f_y:\R^d \to \R$ as $x \mapsto f(x,y)$ and then define $f^{\leq \ell}(x,y) := f_y(x)^{\leq \ell}$; that is, for each $y$, we perform degree-$\ell$ approximation of $f_y$.  
We use $f^{>\ell} := f - f^{\leq \ell}$ to denote the residual.
\begin{restatable}{fact}{FactHermite}
\label{fact:Hermite}
 For every function $f:\R^d \to [-1,1]$,  $\|f^{>\ell}\|_{L_2(\cN(0,\bI_d))} \to 0$ as $\ell \to \infty$.
Furthermore, for any $f:\R^d \times \R \to [-1,1]$ and univariate measure $R$, $\|f^{>\ell}\|_{L_2(\cN(0,\bI_d) \times R)} \to 0$ as $\ell \to \infty$.
\end{restatable}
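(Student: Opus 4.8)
\textbf{Proof proposal for \Cref{fact:Hermite}.} The plan is to reduce both statements to the standard fact that the Hermite tensors $\{\bH_k\}_{k \geq 0}$ form a complete orthonormal system for $L^2(\cN(0,\bI_d))$. For the first (unconditional) statement, note that $f^{\leq \ell}$ is precisely the orthogonal projection of $f$ onto the span of $\{\bH_k : k \leq \ell\}$ in the Hilbert space $L^2(\cN(0,\bI_d))$: indeed the coefficient tensors $\bA_k = \E_{x \sim P}[f(x)\bH_k(x)]$ are the Hermite--Fourier coefficients, and by orthonormality of the $\bH_k$ the partial sums $f^{\leq \ell}$ are exactly the truncations of the Hermite expansion of $f$. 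Since $f:\R^d \to [-1,1]$ is bounded and $\cN(0,\bI_d)$ is a probability measure, $f \in L^2(\cN(0,\bI_d))$ with $\|f\|_{L^2}^2 \leq 1$. Completeness of the Hermite system then gives $\|f - f^{\leq \ell}\|_{L^2(\cN(0,\bI_d))} = \|f^{>\ell}\|_{L^2(\cN(0,\bI_d))} \to 0$ as $\ell \to \infty$; equivalently this is just Parseval, $\|f^{>\ell}\|_{L^2}^2 = \sum_{k > \ell} \|\bA_k\|_2^2$, which is the tail of a convergent series.

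For the second (conditional) statement, I would condition on $y$ and integrate. Write $\mu := \cN(0,\bI_d) \times R$ for the product measure on $\R^d \times \R$ (with $x$-marginal $\cN(0,\bI_d)$ and $y$-marginal $R$, and $x \perp y$). By Fubini,
\begin{align*}
\|f^{>\ell}\|_{L^2(\mu)}^2 = \E_{y \sim R}\left[ \E_{x \sim \cN(0,\bI_d)}\bigl[(f_y(x) - f_y^{\leq \ell}(x))^2\bigr] \right] = \E_{y \sim R}\left[ \|f_y^{>\ell}\|_{L^2(\cN(0,\bI_d))}^2 \right],
\end{align*}
where the inner equality uses that, for each fixed $y$, the definition $f^{\leq \ell}(x,y) := f_y(x)^{\leq \ell}$ matches the one-variable construction applied to $f_y$. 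Now for each fixed $y$, the function $f_y:\R^d \to [-1,1]$ is bounded, so by the first part of the fact (already proved) the integrand $\|f_y^{>\ell}\|_{L^2(\cN(0,\bI_d))}^2$ tends to $0$ as $\ell \to \infty$; moreover it is dominated for all $\ell$ by $\|f_y\|_{L^2(\cN(0,\bI_d))}^2 \leq 1$, which is integrable against the probability measure $R$. The dominated convergence theorem then lets me pass the limit inside the expectation over $y$, giving $\|f^{>\ell}\|_{L^2(\mu)}^2 \to 0$, i.e. $\|f^{>\ell}\|_{L^2(\cN(0,\bI_d) \times R)} \to 0$.

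The only genuinely delicate point is the measurability needed to justify Fubini and dominated convergence: one must check that $(x,y) \mapsto f^{\leq \ell}(x,y)$ is jointly measurable (so that $f^{>\ell}$ is too) and that $y \mapsto \|f_y^{>\ell}\|_{L^2(\cN(0,\bI_d))}^2$ is measurable. This follows because $f^{\leq \ell}(x,y) = \sum_{k=0}^{\ell} \langle \bA_k(y), \bH_k(x)\rangle$ with $\bA_k(y) = \E_{x \sim \cN(0,\bI_d)}[f(x,y)\bH_k(x)]$, and each entry of $\bA_k(y)$ is measurable in $y$ by joint measurability of $f$ together with Fubini applied to the (absolutely integrable, since $f$ is bounded and $\bH_k \in L^2$) integrand $f(x,y)(\bH_k(x))_{i_1,\dots,i_k}$. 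I do not expect any real obstacle here beyond this bookkeeping; the substantive content is entirely contained in the completeness of the Hermite basis for the one-variable (more precisely, fixed-$y$) case.
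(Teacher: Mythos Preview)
Your proposal is correct and matches the paper's own proof essentially step for step: the first part is reduced to completeness of the Hermite basis, and the second part conditions on $y$, applies the first part pointwise, and then invokes dominated convergence with the uniform bound coming from $\|f_y^{>\ell}\|_{L^2}^2 \le \|f_y\|_{L^2}^2 \le 1$ (the paper uses a slightly looser bound of $4$ via the triangle inequality, but that is cosmetic). Your extra remarks on measurability are more careful than the paper, which does not comment on this.
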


\vspace{-0.2cm}

\subsection{Statistical Query Algorithms}

\label{sub:statistical_query_algorithms}

Recall that, instead of getting direct sample access, SQ algorithms interact with the underlying distribution through a $\VSTAT$ oracle (see Definition~\ref{def:vstat}).
Observe that there are many ways of implementing a $\VSTAT(m)$ oracle, especially when the SQ algorithm $\cA$ makes multiple queries---all we require is that each response is a valid $\VSTAT(m)$ response to each query.
Furthermore, the algorithm should successfully solves the inference task as long as each response to its query is a valid $\VSTAT(m)$ response irrespective of how these responses were generated.

We state the preliminaries of SQ for the following generic testing problem.
\begin{testingproblem}[Generic Testing Problem]
\label{def:generic-sq}
Let $\nullSQ$ and $\{\altSQ_v\}_{v \in \cS^{d-1}}$ be distributions over a domain $\cZ$, which correspond to ``null'' and ``alternate'', respectively.
\begin{itemize}[leftmargin=*]
  \item First sample $\Gamma \sim \Ber(1/2)$ and  $v \sim \cS^{d-1}$ independently (unknown to the statistician).

\item   Then set $\Theta = \nullSQ$ (``null'') if $\Gamma = 0$ and  $\Theta = \altSQ_v$ (``alternate'') otherwise. 

\item The statistician gets (either sample/oracle) access to the distribution $\Theta$ and generates $\widehat{\Gamma} \in \{0,1\}$ using an algorithm $\cA$.  
\end{itemize}
\end{testingproblem}

We say an SQ algorithm $\cA$ successfully solves a problem with query complexity $q$ and simulation complexity $m$ if, for any $\VSTAT(m)$ oracle on the underlying distribution ($\Theta$ above),
$\cA$ iteratively (potentially also adaptively and randomly) makes queries $f_1,\dots, f_q$ (each $f_i$ is bounded in $[0,1]$ and could depend on the previous queries and their responses) and outputs $\widehat{\Gamma}$ such that $\P(\widehat{\Gamma} \neq \Gamma) \leq 0.1$.

An SQ lower bound is an information-theoretic lower bound of the following form: any successful SQ algorithm $\cA$ must have either $q \geq q_0$ or $m \geq m_0$.
Observe that to prove an SQ lower bound of this flavor, it suffices to construct one $\VSTAT(m_0)$ oracle 
so that no SQ algorithm can reliably identify $\Gamma$ with at most $q_0$ queries.
In the remainder of this section, we detail the technical results required for proving such lower bounds.

\begin{restatable}[Pairwise Correlation]{definition}{DefPairwiseCor}
For a reference distribution $\nullSQ$, and candidate distributions $\altSQ_1$ and $\altSQ_2$,
the pairwise correlation between $\altSQ_1$ and $\altSQ_2$ with respect to $\nullSQ$ is defined as 
$\chi_\nullSQ(\altSQ_1,\altSQ_2) := \E_{Z \sim \nullSQ}\big[ \frac{q_1(Z) q_2(Z)}{p^2(Z)} -1\big]$,
where $q_1(\cdot), q_2(\cdot), p(\cdot)$ denote the densities of $\altSQ_1,\altSQ_2$, and $\nullSQ$ with respect to a common measure, respectively.
When $\altSQ_1=\altSQ_2$, the pairwise correlation becomes the same as the 
$\chi^2$-divergence between $\altSQ_1$ and $\nullSQ$, i.e., 
$\chi^2(\altSQ_1,\nullSQ) =  \int_{\cZ}\frac{q_1^2(x)}{p(x)} dx - 1$.
\end{restatable}

The statistical dimension is then defined using these pairwise correlations:
\begin{restatable}[Statistical dimension from \cite{BreBHLS21}]{definition}{DefStatDimension}
\label{def:SDA}
The  statistical dimension of \Cref{def:generic-sq} 
 at simulation complexity $m$
 is defined as:    
\vspace{-0.1cm}
\begin{align*}
\textstyle
\SDA(m) := \max\Big\{q \in \N: { \sup_{\cE: \P_{v,v'}(\cE) \geq 1/q^2}} \E_{v,v'}\left[ \left| \chi_\nullSQ( \altSQ_v,\altSQ_{v'}) \right| \big|\cE \right] \leq \frac{1}{m}    \Big\}\,,
\end{align*}
where (i) $v,v'$ are two uniform, independent unit vectors, and (ii) the inner supremum is taken over events $\cE \subset \cS^{d-1} \times \cS^{d-1}$ on $v$ and $v'$ (drawn i.i.d.\ from the unit sphere) that have probability at least $1/q^2$. 
\end{restatable}
We now define the notion of success of a query $f$ that will be useful to us:
\begin{restatable}[Success of a query on a distribution]{definition}{DefNotionSuccess}
\label{def:notion-of-success}
We say that a query $f:\cZ \to [0,1]$ succeeds on distinguishing $\altSQ_v$ and $\nullSQ$ with simulation complexity $m$, denoted by the event $\cE_{f,v,m}$,  if 
$|\E_{\altSQ_v}[f(Z)] - \E_{\nullSQ}[f(Z)]| \geq \max\Big(\frac{1}{m}, \min\Big(\sqrt{\tfrac{a(1-a)}{m}}, \sqrt{\tfrac{b(1-b)}{m}} \Big) \Big)$ for $a:= \E_\nullSQ[f]$ and $b:= \E_{\altSQ_v}[f]$.
\end{restatable}
We are now equipped to state the generic SQ lower bounds that we will use repeatedly.
These results follow from the ideas implicit in prior works \cite{FelGRVX17,BreBHLS21,DiaKRS23} and their proofs are deferred to \Cref{app:statistical_query_algorithms}.
\begin{proposition}[Generic SQ Lower Bound]
\label{prop:sq-lower-bound-intro}
Consider \Cref{def:generic-sq}. Then
\begin{enumerate}[leftmargin=*, label=(C.\Roman*), ref=\theproposition\,(C.\Roman*)]
  \item \label[subproposition]{prop:SDA-upper-bound} For  any query $f: \cZ \to [0,1]$,
 $\P_{v \sim \cS^{d-1}}\left( \cE_{f,v.m}  \right) \leq \frac{1}{\SDA(7m)}$.
\item \label[subproposition]{eq:sq-mid-step-intro}
 Suppose for all queries $f:\cZ \to [0,1]$, it holds that  $\P_{v \sim \cS^{d-1}}\left( \cE_{f,v,m}  \right) \leq \frac{1}{q}$.
Then any  SQ algorithm $\cA$ that solves \Cref{def:generic-sq} must
use either $\Omega(q)$ queries in expectation or at least one query as powerful
as $\VSTAT(m + 1)$.
\end{enumerate}
\end{proposition}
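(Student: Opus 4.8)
The plan is to establish the two parts in sequence, with the first part being the essentially technical ``decoupling'' step and the second part being a standard adaptive-algorithm-to-SDA reduction. For \Cref{prop:SDA-upper-bound}, fix a query $f:\cZ\to[0,1]$ and write $a=\E_\nullSQ[f]$, $b_v = \E_{\altSQ_v}[f]$. The key identity is that for any $v$, the difference $b_v - a$ equals $\E_{Z\sim\nullSQ}\bigl[f(Z)\bigl(\tfrac{q_v(Z)}{p(Z)}-1\bigr)\bigr]$, so by Cauchy–Schwarz, $(b_v-a)^2 \le \E_\nullSQ[f^2]\cdot\chi^2(\altSQ_v,\nullSQ)$ — but that gives the wrong shape; instead I would use the sharper bound coming from pairing two independent copies: $\E_{v,v'}\bigl[(b_v-a)(b_{v'}-a)\bigr] = \E_{v,v'}\bigl[\E_{Z\sim\nullSQ}[(f(Z)-a)(\tfrac{q_v}{p}-1)]\cdot(\cdots)'\bigr]$, which after expanding and using $\E_\nullSQ[(f-a)^2]\le a$ (since $f\in[0,1]$, $\Var_\nullSQ(f)\le a(1-a)\le a$, and symmetrically $\le 1-a$) is controlled by $\min(a(1-a),\cdots)$ times $\E_{v,v'}|\chi_\nullSQ(\altSQ_v,\altSQ_{v'})|$. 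Then I would argue by contradiction: if $\P_v(\cE_{f,v,m}) > 1/\SDA(7m)$, set $q_0 = \SDA(7m)$ and consider the event $\cE = \{(v,v'): \cE_{f,v,m}\text{ and }\cE_{f,v',m}\text{ and }(b_v-a),(b_{v'}-a)\text{ same sign}\}$, which has probability at least (roughly) $\tfrac12\P_v(\cE_{f,v,m})^2 > \tfrac{1}{2 q_0^2} \ge \tfrac{1}{(7m)\cdots}$ — here one has to be slightly careful about the sign-agreement factor and the exact constant $7$, which is exactly why the ``$7m$'' appears. On this event $(b_v-a)(b_{v'}-a) \ge \max(1/m^2, \min(\tfrac{a(1-a)}{m},\tfrac{b_v(1-b_v)}{m},\cdots))$-type lower bound, and dividing through by the $a(1-a)$-type factor yields $\E_{v,v'}[|\chi_\nullSQ(\altSQ_v,\altSQ_{v'})| \mid \cE] > 1/(7m)$, contradicting the definition of $\SDA(7m)$ since $\P(\cE)\ge 1/q_0^2 = 1/\SDA(7m)^2$. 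The careful handling of the three-way $\min$ in \Cref{def:notion-of-success} (separating the regimes $1/m$ vs.\ $\sqrt{a(1-a)/m}$) is the fiddly part, and matching constants is where one must be honest about where the $7$ comes from.

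For \Cref{eq:sq-mid-step-intro}, this is the standard argument that an adaptive SQ algorithm can be ``simulated'' by an oracle that always answers with the null expectation. Concretely, suppose $\P_v(\cE_{f,v,m}) \le 1/q$ for every query $f$. Consider the specific $\VSTAT(m+1)$ oracle that, on query $f$, returns the value $\E_\nullSQ[f]$ whenever this is a legal $\VSTAT(m+1)$ response for the true distribution $\Theta$, and otherwise returns an arbitrary legal response. When $\Gamma=0$ (null), $\E_\nullSQ[f]$ is trivially legal, so the oracle always returns $\E_\nullSQ[f]$ and the algorithm's entire transcript is a fixed (deterministic given its internal coins) function independent of anything. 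When $\Gamma=1$ with hidden direction $v$: by the definition of $\VSTAT(m+1)$ and of $\cE_{f,v,m}$, returning $\E_\nullSQ[f]$ is a legal $\VSTAT(m+1)$ response for $\altSQ_v$ unless $\cE_{f,v,m+1}$ holds — wait, more precisely, $\E_\nullSQ[f]$ fails to be legal for $\altSQ_v$ only on an event contained in $\cE_{f,v,m}$ (one should check $\VSTAT(m+1)$ vs.\ $\VSTAT(m)$ constants line up, which is why the statement says $\VSTAT(m+1)$). So along the (at most $q$) queries the algorithm makes, a union bound over $v$ shows that with probability $\ge 1 - q\cdot(1/q)$... which is vacuous — the fix is the usual one: condition on $v$, let $\cB_v$ be the event (over the algorithm's randomness and the query sequence) that the oracle ever deviates from $\E_\nullSQ[f]$; since the queries are determined by past (null-like) responses until the first deviation, the first query is fixed and $\P_v(\text{1st query triggers deviation})\le 1/q$, and inductively the whole tree has at most $q$ nodes so $\P_v(\cB_v)\le q/q' $ — one takes $q' = $ (the actual query budget) and concludes that if the algorithm uses $o(q)$ queries then $\P_v(\cB_v) = o(1)$, hence on a $1-o(1)$ fraction of $v$ the transcript under $\altSQ_v$ is identical to the transcript under $\nullSQ$, so $\widehat\Gamma$ has the same distribution in both cases and cannot distinguish — contradicting the $0.1$ error requirement. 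Hence the algorithm needs $\Omega(q)$ queries or one query at power $\VSTAT(m+1)$.

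The main obstacle I anticipate is \Cref{prop:SDA-upper-bound}, specifically getting the ``second-moment'' inequality $\E_{v,v'}[(b_v-a)(b_{v'}-a)] \lesssim (\text{variance factor})\cdot \E_{v,v'}|\chi_\nullSQ(\altSQ_v,\altSQ_{v'})|$ with the right variance factor — namely $\min(a(1-a), b_v(1-b_v), b_{v'}(1-b_{v'}))$-flavored rather than just $a(1-a)$ — because \Cref{def:notion-of-success} is phrased with a $\min$ over the null and alternate variances, and one needs the chi-square-based bound to respect that. The resolution is to use the identity $(f(Z)-a)\bigl(\tfrac{q_v(Z)}{p(Z)}-1\bigr)$ together with the fact that under $\altSQ_v$, $\E_{\altSQ_v}[(f-b_v)^2]\le b_v(1-b_v)$, passing between the $\nullSQ$ and $\altSQ_v$ inner products via the density ratio — i.e., one writes the covariance against whichever measure gives the smaller variance bound. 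This is conceptually routine but needs care; I would consult the $\SDA$-based lower-bound machinery in \cite{FelGRVX17,BreBHLS21} and the Fourier-analytic refinement in \cite{DiaKRS23} for the exact manipulation, which is why the proposition is stated as following from ``ideas implicit in prior works'' with the proof deferred to \Cref{app:statistical_query_algorithms}.
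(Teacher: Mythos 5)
Your overall plan mirrors the paper's: for \Cref{prop:SDA-upper-bound} you set up the second-moment inequality via $b_v - a = \langle q_v - 1, f\rangle_\nullSQ$, apply Cauchy--Schwarz, and argue by contradiction against the $\SDA$ definition; for \Cref{eq:sq-mid-step-intro} you introduce a ``null-simulating'' oracle and union-bound over queries. However, there are genuine gaps in both parts.

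For \Cref{prop:SDA-upper-bound}, you correctly identify the ``fiddly'' step---the three-way $\min$ in \Cref{def:notion-of-success}---but you do not resolve it, and it cannot be waved away. The paper's proof closes this by invoking Lemma~3.5 of~\cite{FelGRVX17}: whenever $|a_1 - a_2| \ge \max\bigl(\tfrac1m,\,\min(\sqrt{a_1(1-a_1)/m},\,\sqrt{a_2(1-a_2)/m})\bigr)$, one also has $|a_1 - a_2| \ge \sqrt{a_1(1-a_1)/(3m)}$. Combined with the WLOG normalization $a_1 \le 1/2$ (so $a_1(1-a_1) \ge a_1/2$), this gives the clean lower bound $\1_{\cE_{f,v,m}}\,|\langle q_v - 1,f\rangle_\nullSQ| \ge \sqrt{a_1/(6m)}\cdot\1_{\cE_{f,v,m}}$, and the constant $6m$ (then padded to $7m$) is what matches the SDA definition after squaring. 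Your sign-agreement conditioning is also a worse route: requiring $(b_v-a)$ and $(b_{v'}-a)$ to agree in sign costs a factor $1/2$ in $\P(\cE)$, which does \emph{not} sit cleanly against the $1/q^2$ threshold in \Cref{def:SDA} and forces a recomputation of the constant. The paper avoids this entirely by keeping $\sgn\langle q_v - 1, f\rangle_\nullSQ$ \emph{inside} the $v$-expectation: after Cauchy--Schwarz against the fixed function $f$, the term $\|\E_v[\1_{\cE_{f,v,m}}(q_v - 1)\sgn(\cdot)]\|_{L_2(\nullSQ)}^2$ expands into $\E_{v,v'}[\1\1\,\sgn_v\sgn_{v'}\langle q_v - 1, q_{v'} - 1\rangle_\nullSQ]$, and the sign product is simply bounded by $1$ in absolute value---no probability splitting, no lost factor of $2$.

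For \Cref{eq:sq-mid-step-intro}, your argument is in the right spirit but the bookkeeping of randomness is muddled, and as written the union bound does not land. Phrases like ``the event (over the algorithm's randomness and the query sequence)'' and ``$\P_v(\cB_v) \le q/q'$'' conflate two different sources of randomness. The paper's cleaner framing: (i) define a specific \emph{deterministic} oracle $\bV^*$ that returns $\E_\nullSQ[f]$ under the null, and under the alternate returns $\E_\nullSQ[f]$ whenever it is a valid $\VSTAT(m)$ response for $\altSQ_v$; (ii) because $\bV^*$ is deterministic, it suffices to consider deterministic algorithms; (iii) under the null the transcript is fixed, so correctness forces output ``null''; (iv) under the alternate, $\cE_{f,v,m}^\complement$ implies the query is ``good'' (i.e., $\E_\nullSQ[f]$ is valid), and the union bound is over the $q' = aq$ queries with probability taken over $v$ only---each fixed query is bad with probability at most $1/q$, so all are good with probability at least $1-a$. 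On that $v$-event the transcript is identical to the null transcript, the algorithm outputs ``null'', and the total failure probability is at least $\tfrac12(1-a) \ge 0.25$, a contradiction. You should also note that the oracle being a valid $\VSTAT(m)$ oracle suffices because any valid $\VSTAT(m)$ response is automatically a valid $\VSTAT(m')$ response for all $m' \le m$; this is what justifies the ``$\VSTAT(m+1)$'' phrasing in the proposition.
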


\paragraph{Non-Gaussian Component Analysis.}
We will primarily consider \Cref{def:generic-sq} of a particular form called Non-Gaussian Component Analysis (NGCA).
We begin by defining the following High-Dimensional Hidden Direction Distribution: 
\begin{definition}[High-Dimensional Hidden Direction Distribution] \label{def:high-dim-distribution}
For a unit vector $v \in \R^d$ and a distribution $\hiddenSQ$ on the real line, we define $P^{\hiddenSQ}_v$ to be the distribution over $\R^d$, where $P^{\hiddenSQ}_v$ is the product distribution whose orthogonal projection onto the direction of $v$ is $\hiddenSQ$, 
and onto the subspace perpendicular to $v$ is the standard $(d{-1})$-dimensional normal distribution. 
In particular, if $\hiddenSQ$ is a continuous distribution with probability density function (pdf) $\hiddenSQ(x)$, 
 then $P^{\hiddenSQ}_v(x)$ has the pdf $\hiddenSQ(v^\top x) \phi_{\bot v}(x)$, where $\phi_{\bot v}(x) = \exp\left(-\|x - (v^\top x)v\|_2^2/2\right)/(2\pi)^{(d-1)/2}$.
\end{definition}
\cite{DiaKS17} established SQ lower bounds for the NGCA problem, where the null and the alternate are  $\cN(0,\bI_d)$ and $\{P^\hiddenSQ_v\}_{v \sim \cS^{d-1}}$, respectively and (i) $\hiddenSQ$ (nearly) matches many moments with $\cN(0,1)$ and (ii) has finite $\chi^2(\hiddenSQ,\cN(0,1))$.
For linear regression, we will need the following generalization:  
\begin{testingproblem}[Conditional NGCA]
\label{def:ngca-lin-regr}
Let $\{\hiddenSQ_y\}_{y\in\R}$ be a family of univariate distributions and $R$ be a univariate distribution.
Consider \Cref{def:generic-sq} over $(x,y)$ on the domain  $\R^d \times \R$ with 
\begin{itemize}[leftmargin=*]
  \item (``Null'') Under $\nullSQ$:  $x \sim \cN(0,\bI_d)$ and $y \sim R$ independently.
  \item (``Alternate'') Under $\altSQ_v$: $y \sim R$ and conditioned on $y=y_0$, $X|_{y=y_0} \sim P^{\hiddenSQ_{y_0}}_v$. 
\end{itemize}

\end{testingproblem}
Building on \cite{DiaKS17}, \cite{DiaKS19} showed SQ-hardness for the problem above if (i) $\hiddenSQ_y$ matches its low-degree moments with $\cN(0,1)$ for (nearly) all $y\in \R$, and (ii) $\chi^2(\altSQ_v, \nullSQ)<\infty$. Unfortunately, neither of these conditions holds for us, and we need more flexible and powerful tools to bypass these limitations. 

\subsection{Discrete Gaussians} 
\label{ssub:discrete_gaussian}

The following definition of Discrete Gaussian distributions will be central in our analysis. 
\begin{definition}[Discrete Gaussian Distributions]
\label{def:discrete-gaussian}
For a center $\mu \in \R$, deviation $\sigma > 0$, base $\theta$, and spacing $s > 0$,
define the positive measure $\DG{\mu, \sigma, \theta, s}$ to be the positive measure over $\theta + s \Z$ that assigns mass $s \phi_{\mu, \sigma}( \theta + si )$ for all $i \in \Z$; here, $\phi_{\mu, \sigma}$ denotes the pdf of the Gaussian distribution with mean $\mu$ and standard deviation $\sigma$.
We use $\NDG{\mu,\sigma,\theta,s}$ to denote the normalized probability distribution and refer to it as a Discrete Gaussian.
\end{definition}
The next facts state that the Discrete Gaussian behaves similarly to the vanilla (continuous) Gaussians with respect to translation and low-degree polynomials: 
\begin{restatable}[Translation of Discrete Gaussian]{fact}{FactDiscreteGaussian}
\label{fact:discrete-translation}
For any $\theta \in \R$, $s > 0$, $\mu \in \R$, $\sigma \in \R_+$, the random variables $X \sim \NDG{\mu, \sigma, \theta, s}$ and $X':= \sigma Y + \mu$ for $Y \sim \NDG{0, 1, \theta', s'}$ with $\theta'= (\theta - \mu)/\sigma$ and $s' = s/\sigma$
have the same law.
\end{restatable}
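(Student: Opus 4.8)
\textbf{Proof proposal for \Cref{fact:discrete-translation}.}

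The plan is to verify that the two distributions agree pointwise on their common support, which then forces equality of laws since both are discrete probability measures supported on the same countable set. First I would identify the supports: $X \sim \NDG{\mu,\sigma,\theta,s}$ is supported on $\theta + s\Z$, while $X' = \sigma Y + \mu$ with $Y \sim \NDG{0,1,\theta',s'}$ is supported on $\mu + \sigma(\theta' + s'\Z) = \mu + \sigma\theta' + \sigma s' \Z$. Plugging in $\theta' = (\theta-\mu)/\sigma$ and $s' = s/\sigma$, this becomes $\mu + (\theta - \mu) + s\Z = \theta + s\Z$, so the supports coincide. This is the routine but necessary bookkeeping step.

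Next I would compare the point masses. By \Cref{def:discrete-gaussian}, the unnormalized measure $\DG{\mu,\sigma,\theta,s}$ assigns mass $s\,\phi_{\mu,\sigma}(\theta + si)$ to the atom $\theta + si$, and the normalization constant is $Z_{\mu,\sigma,\theta,s} := \sum_{i\in\Z} s\,\phi_{\mu,\sigma}(\theta+si)$. For $X'$, the atom $\theta + si$ corresponds to $Y = (\theta + si - \mu)/\sigma = \theta' + s' i$, which has unnormalized mass $s'\,\phi_{0,1}(\theta' + s' i)$ under $\DG{0,1,\theta',s'}$. The key computation is the change-of-variables identity $\phi_{\mu,\sigma}(x) = \frac{1}{\sigma}\phi_{0,1}\!\big(\tfrac{x-\mu}{\sigma}\big)$, applied at $x = \theta+si$: this gives $s\,\phi_{\mu,\sigma}(\theta+si) = \tfrac{s}{\sigma}\phi_{0,1}(\theta'+s'i) = s'\,\phi_{0,1}(\theta'+s'i)$. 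Hence the \emph{unnormalized} masses already agree atom-by-atom, and therefore so do the normalization constants ($Z_{\mu,\sigma,\theta,s} = Z_{0,1,\theta',s'}$) and the normalized probabilities. Thus $\P(X = \theta+si) = \P(X' = \theta+si)$ for every $i\in\Z$, which is precisely equality in law.

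There is no real obstacle here — this is essentially the observation that the construction in \Cref{def:discrete-gaussian} restricts a Gaussian density to a lattice and renormalizes, an operation that commutes with affine reparametrization of both the Gaussian and the lattice. The only point requiring a moment of care is checking that the spacing and base transform consistently so that the lattice $\theta' + s'\Z$ maps \emph{exactly} onto $\theta + s\Z$ under $y\mapsto \sigma y + \mu$ (rather than onto a shifted or rescaled copy), which is handled by the support computation above. I would write this up in three or four lines.
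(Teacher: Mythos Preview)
Your proposal is correct and follows essentially the same approach as the paper: verify that the supports coincide (both equal $\theta + s\Z$), then check that the point masses agree via the Gaussian change-of-variables identity $\phi_{\mu,\sigma}(x) = \sigma^{-1}\phi_{0,1}((x-\mu)/\sigma)$. Your version is slightly tidier in that you observe the \emph{unnormalized} masses match exactly (so the normalizers coincide), whereas the paper works up to proportionality and then concludes; but this is a cosmetic difference, not a different argument.
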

\begin{fact}[{\cite[Fact C.3]{DiaKRS23} and \cite[Lemma 3.12]{DiakonikolasKane22-massart-sq}}] We have the following:
\label{fact:polynomial-discrete-vs-gaussian}
\label{fact:polynomial-discrete-vs-gaussian-appendix}
\begin{itemize}[leftmargin=2em]
    \item For any polynomial $p$ of degree at most $k$, any $\theta \in \R$ and $s > 0$, we have that
$$\left|\E_{G \sim \cN(0,1)} [p(G)] - \E_{Y \sim \NDG{0, 1, \theta, s}} [p(Y)]\right| \lesssim \sqrt{\E_{G \sim \cN(0,1)}[p^2(G)]} k! 2^{O(k)} \exp(- \Omega(1/s^2)).$$
\item (Monomials for the unnormalized measure) For any $k \in \N \cup \{0\}$ and $s \geq 0$:
$$\left|\E_{G \sim \cN(0,1)} [G^k] - \E_{Y \sim \DG{0, 1, \theta, s}} [Y^k]\right| \lesssim  k! (O(s))^k \exp(- \Omega(1/s^2)),$$
where $\E_{Y \sim \DG{0, 1, \theta, s}}[f(Y)$ refers to the Lebesgue integral of $f$ with respect to the positive measure $\DG{0, 1, \theta, s}$.
In particular, the total mass of $\DG{0,1,\theta, s}$ is $1 \pm \exp(-\Omega(1/s^2))$.

\end{itemize}
\end{fact}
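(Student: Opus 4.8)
The plan is to derive both parts from the \emph{Poisson summation formula}, exploiting the near self-duality of the Gaussian under the Fourier transform. Throughout, write $\phi := \phi_{0,1}$ for the standard Gaussian density and $\mathrm{He}_k := \sqrt{k!}\,h_k$ for the unnormalized probabilist Hermite polynomial, so that $\mathrm{He}_k\phi = (-1)^k\phi^{(k)}$. By \Cref{def:discrete-gaussian}, for any polynomial $p$ we have $\int p\,d\DG{0,1,\theta,s} = s\sum_{i\in\Z}(p\phi)(\theta+si)$, so both bullets reduce to controlling the quadrature error $s\sum_{i\in\Z}(p\phi)(\theta+si) - \int_{\R}(p\phi)(x)\,dx$, where the integral equals $\E_{G\sim\cN(0,1)}[p(G)]$ (and $p(x)=x^k$ in the second bullet). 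Applying Poisson summation to the Schwartz function $x\mapsto(p\phi)(sx+\theta)$ shows this error equals $\sum_{j\neq0}\widehat{p\phi}(j/s)\,e^{2\pi i j\theta/s}$, the $j=0$ term being $\widehat{p\phi}(0)=\int p\phi$ which cancels; since each phase has modulus one, the error is at most $\sum_{j\neq0}\bigl|\widehat{p\phi}(j/s)\bigr|$, uniformly in $\theta$.

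Next I would compute the transform explicitly. With the convention $\widehat g(\xi)=\int g(x)e^{-2\pi i x\xi}\,dx$ one has $\widehat\phi(\xi)=e^{-2\pi^2\xi^2}$, hence $\widehat{h_k\phi}(\xi)=\tfrac{(-2\pi i\xi)^k}{\sqrt{k!}}e^{-2\pi^2\xi^2}$. Expanding $p=\sum_{k\le K}c_k h_k$ in the normalized Hermite basis (so $\E_G[p(G)]=c_0$ and $\sum_k c_k^2=\E_G[p^2(G)]$ by Parseval) gives $\widehat{p\phi}(\xi)=e^{-2\pi^2\xi^2}\sum_{k\le K}c_k\tfrac{(-2\pi i\xi)^k}{\sqrt{k!}}$, so by Cauchy--Schwarz $\bigl|\widehat{p\phi}(j/s)\bigr|\le\sqrt{\E_G[p^2(G)]}\;e^{-2\pi^2 j^2/s^2}\bigl(\sum_{k\le K}\tfrac{(2\pi j/s)^{2k}}{k!}\bigr)^{1/2}$. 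For $j\neq0$ and $K\lesssim 1/s^2$ the partial exponential sum is dominated by its largest term $\tfrac{(2\pi j/s)^{2K}}{K!}$, so summing the super-geometrically decaying series over $j$ (which is controlled by $j=\pm1$) yields $\bigl|\sum_{j\neq0}\widehat{p\phi}(j/s)\bigr|\lesssim\sqrt{\E_G[p^2(G)]}\,\sqrt{K}\,\tfrac{(2\pi/s)^K}{\sqrt{K!}}\,e^{-2\pi^2/s^2}$, and the elementary bound $(2\pi/s)^K/(K!)^{3/2}\le e^{O(1/s^{2/3})}$ absorbs the polynomial-in-$1/s$ factor into a slightly smaller $e^{-\Omega(1/s^2)}$, leaving the first-bullet bound $\sqrt{\E_G[p^2(G)]}\,k!\,2^{O(k)}e^{-\Omega(1/s^2)}$. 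The second bullet is the same computation with $p(x)=x^k$, where $\widehat{x^k\phi}(\xi)=(\pm i)^k\mathrm{He}_k(2\pi\xi)e^{-2\pi^2\xi^2}$ and tracking the monomial coefficients against the $(1/s)^k$ growth produces the sharper prefactor $k!\,(O(s))^k$; the case $k=0$ gives total mass $\widehat\phi(0)\pm\sum_{j\neq0}e^{-2\pi^2 j^2/s^2}=1\pm e^{-\Omega(1/s^2)}$.

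Two routine points finish the proof. To pass from the unnormalized $\DG{0,1,\theta,s}$ to $\NDG{0,1,\theta,s}$ in the first bullet, divide by the total mass $1\pm e^{-\Omega(1/s^2)}$, which multiplies the error by $1+o(1)$ and introduces only an additional additive term $|\E_G[p(G)]|\cdot O(e^{-\Omega(1/s^2)})\le\sqrt{\E_G[p^2(G)]}\cdot O(e^{-\Omega(1/s^2)})$. And Poisson summation applies since $p\phi$ is Schwartz, with all estimates uniform in $\theta$ because $|e^{2\pi i j\theta/s}|=1$.

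The main obstacle is not conceptual but the bookkeeping in this final absorption step: one must verify that the combinatorial factors ($k!$, $2^{O(k)}$, Hermite coefficients) together with the polynomial-in-$1/s$ factors from $\widehat{p\phi}(1/s)$ are always dominated by a weakened $e^{-\Omega(1/s^2)}$, \emph{uniformly in $k$}. This needs a short case split according to whether $k\lesssim 1/s^2$ or $k\gtrsim 1/s^2$: in the former regime the estimate above applies directly, while in the latter the claimed right-hand side is already $\gtrsim\sqrt{\E_G[p^2(G)]}$, so it trivially dominates the true error $\lesssim\sqrt{\E_G[p^2(G)]}$ (using $\E_{\NDG{0,1,\theta,s}}[p^2]\asymp\E_G[p^2(G)]$, which itself follows from the $k\lesssim 1/s^2$ case applied to $p^2$). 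For the monomial bound in this large-$k$ regime one instead invokes the Cramér-type estimate $|\mathrm{He}_k(t)|\lesssim\sqrt{k!}\,e^{t^2/4}$, so that the surviving Gaussian factor $e^{-\pi^2 j^2/s^2}$ still controls the tail.
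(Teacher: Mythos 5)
The paper offers no proof of this Fact; it is a direct citation of \cite[Fact C.3]{DiaKRS23} and \cite[Lemma 3.12]{DiakonikolasKane22-massart-sq}. Your Poisson-summation route, with $\widehat{h_k\phi}(\xi)=\tfrac{(-2\pi i\xi)^k}{\sqrt{k!}}e^{-2\pi^2\xi^2}$, Parseval, and Cauchy--Schwarz, is the standard way to prove these discrete-Gaussian moment bounds and almost certainly matches the cited arguments in spirit. The computation for $k\lesssim 1/s^2$ is sound once unpacked: writing $\tfrac{(2\pi/s)^K}{\sqrt{K!}}=K!\cdot\tfrac{(2\pi/s)^K}{(K!)^{3/2}}$ and optimizing over $K$ at $K\approx(2\pi/s)^{2/3}$ does give $\tfrac{(2\pi/s)^K}{(K!)^{3/2}}\le e^{O((1/s)^{2/3})}$, which $e^{-2\pi^2/s^2}$ safely absorbs, and the $k=0$ computation correctly gives the total-mass statement. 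The monomial bound is also verifiable by the route you sketch, using $|\mathrm{He}_k(2\pi/s)|\lesssim (2\pi/s)^k$ for $2\pi/s\gtrsim k$ and then checking $\tfrac{(2\pi/(Cs))^k}{k!\,s^{2k}}e^{-(2\pi^2-c)/s^2}\le 1$ via $\tfrac{x^k}{k!}\le e^x$.

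The one genuine gap is your handling of the large-degree regime $k\gtrsim 1/s^2$. You assert $\E_{\NDG{0,1,\theta,s}}[p^2]\asymp\E_G[p^2(G)]$ and justify it by ``the $k\lesssim 1/s^2$ case applied to $p^2$.'' But $p^2$ has degree $2k\gtrsim 1/s^2$, so that case does not apply, and the argument is circular. (Moreover, it is far from obvious that this comparability holds uniformly in $k$: nothing prevents a high-degree polynomial from being anomalously large at lattice points relative to its Gaussian $L^2$ norm.) The fix is to avoid the detour through $p^2$ altogether and bound $|\E_{\DG{0,1,\theta,s}}[p]|=|\sum_j\widehat{p\phi}(j/s)|$ directly in this regime: split the sum at $(2\pi j/s)^2\le k$, where $\sum_{\ell\le k}(2\pi j/s)^{2\ell}/\ell!\le e^{(2\pi j/s)^2}$ gives a contribution of $\sqrt{\E_G[p^2]}$ per $j$ (with $\lesssim\sqrt{k}\,s+1$ such $j$), and at $(2\pi j/s)^2>k$, where the truncated sum is $\lesssim(k{+}1)\tfrac{x^k}{k!}$ and $e^{-x/2}x^{k/2}\le e^{-k/2}k^{k/2}\approx\sqrt{k!}$ makes the per-$j$ contribution $\lesssim\sqrt{k}\,\sqrt{\E_G[p^2]}$ with geometric decay. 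This yields $|\E_{\DG{}}[p]|\lesssim\mathrm{poly}(k)\sqrt{\E_G[p^2]}$, which is dominated by $k!\,2^{O(k)}e^{-\Omega(1/s^2)}\ge k!\,2^{O(k)}e^{-O(k)}\gg 1$ when $k\gtrsim 1/s^2$, closing the case split.
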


\section{Proof of Main Result: \Cref{thm:sq-hardness-discrete-gaussian}}
\label{sec:generalized_SQ_lower_bound}
In this section, we will prove our main result \Cref{thm:sq-hardness-discrete-gaussian}.
The first step is to make a judicious choice of the noise distribution $E$.
For reasons outlined in \Cref{sub:overview_of_techniques}, we choose $E$ to be a discrete Gaussian with $\sigma^2\approx 1$ and spacing $s$ (eventually set to $\widetilde{\Theta}(\alpha)$).

As the second step, we note that the resulting testing problem is an instance of conditional NGCA.
\begin{testingproblem}[NGCA with Discrete Gaussian]
\label{def:ngca-lin-regr-discrete}
For $y \in \R$, define the distribution $A_y:= \NDG{\mu_y,  \sigma,\theta_y,s'}$ with parameter values in \Cref{parameters}.
Consider \Cref{def:ngca-lin-regr} with
    \begin{itemize}
        \item (Marginal of $y$)  $R := R^*_{\rho,E}$ with $E = \NDG{0,\sigma,0,s}$.\footnote{Recall that $R^*_{\rho,E}$ is defined to be the univariate distribution of $x + z$ for $x \sim \cN(0,\rho^2)$ and $z \sim E$ independent of each other.}
        \item (Conditional NGCA) For each $y\in\R$,  $\hiddenSQ_y$ is equal to
   $A_y=\NDG{\mu_y,  \sigma,\theta_y,s'}$. 
    \end{itemize}
    We denote the corresponding null by $\distNull$ and the alternate for direction $v$ by $\distDisc_v$.
\end{testingproblem}
We mention the parameter choices below:
\begin{mdframed}
\begin{definition}
\label{parameters}
Let signal strength $\rho \in (0,\rho_0)$ for sufficiently small $\rho_0>0$, standard deviation $\sigma\in (0.5,1)$, spacing $s \in (0,1)$ satisfy the following values:

\noindent
\begin{minipage}[t]{0.48\linewidth}
  \begin{itemize}[nosep]
    \item $\sigma = \sqrt{1 - \rho^2}$,
    \item $s' = s/\rho \leq 0.001$,
  \end{itemize}
\end{minipage}\hfill
\begin{minipage}[t]{0.48\linewidth}
  \begin{itemize}[nosep]
    \item $\mu_y := \rho\,y$,
    \item $\theta_y = y/\rho$.
  \end{itemize}
\end{minipage}\end{definition}
\end{mdframed}
That is, for each $y$, the conditional distribution of the covariates in the hidden direction is a discrete Gaussian with mean $\mu_y$ (scaling linearly with $y$) and standard deviation $\sigma$ (slightly smaller than $1$).
While these parameters might look a bit obscure, they perfectly resemble the typical setting of $E= \cN(0,\sigma^2)$.\footnote{The conditional distribution of $x|y$ in the hidden direction $v$ for $E = \cN(0,\sigma^2)$ would be $\cN(\mu_y, \sigma^2)$~\cite[Fact 3.3]{DiaKPPS21}.}

The next result  shows that \Cref{def:lin-regr-oblivious} is equivalent to \Cref{def:ngca-lin-regr-discrete}.

\begin{restatable}{proposition}{LemAyForLinRegression}
\label{lem:A_y-for-lin-regression}
 \Cref{def:ngca-lin-regr-discrete} is equivalent to \Cref{def:lin-regr-oblivious} when $E= \NDG{0,\sigma,0,s}$. 
\end{restatable}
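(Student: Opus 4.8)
The plan is to verify that \Cref{def:ngca-lin-regr-discrete} and \Cref{def:lin-regr-oblivious} (with $E=\NDG{0,\sigma,0,s}$) are \emph{literally} the same instance of the generic template \Cref{def:generic-sq}: both draw the hidden direction $v$ uniformly from $\cS^{d-1}$, so it suffices to show that their null distributions agree and that, for every fixed unit vector $v$, their alternate distributions $\distDisc_v$ agree; once this is done any algorithm (SQ or sample-based) behaves identically on the two, so they are equivalent. Agreement of the nulls is immediate on unwinding definitions: in \Cref{def:ngca-lin-regr-discrete} the null is the null of conditional NGCA (\Cref{def:ngca-lin-regr}) with $R=R^*_{\rho,E}$, namely $x\sim\cN(0,\bI_d)$ and $y\sim R^*_{\rho,E}$ independently, which is exactly the null $\distNull$ of \Cref{def:lin-regr-oblivious}. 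So the only real content is the alternate: writing $\distDisc_v$ for the law of $(x,y)$ with $x\sim\cN(0,\bI_d)$ and $y=\rho v^\top x+z$, $z\sim E$ independent (the alternate of \Cref{def:lin-regr-oblivious}), I must show this coincides with the $\distDisc_v$ of \Cref{def:ngca-lin-regr-discrete}, i.e.\ that $y$ has marginal $R^*_{\rho,E}$ and that $x\mid(y=y_0)\sim P^{A_{y_0}}_v$ with $A_{y_0}=\NDG{\mu_{y_0},\sigma,\theta_{y_0},s'}$ as in \Cref{parameters}.

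Fix $v$ and decompose $x=t\,v+x_\perp$ with $t:=v^\top x$ and $x_\perp:=x-tv$. By standard Gaussian facts, $t\sim\cN(0,1)$, $x_\perp$ is a standard Gaussian supported on $v^\perp$, and $t$, $x_\perp$, $z$ are mutually independent. First, $y=\rho t+z$ with $\rho t\sim\cN(0,\rho^2)$ and $z\sim E$ independent, so by definition $y\sim R^*_{\rho,E}=R$, as needed. Second, $y$ is a deterministic function of $(t,z)$, and $x_\perp$ is independent of $(t,z)$, hence independent of $(t,y)$; therefore, conditionally on $y=y_0$, the variable $x_\perp$ retains its unconditional law (a standard Gaussian on $v^\perp$) and is independent of $t$. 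Comparing with \Cref{def:high-dim-distribution}, this is precisely the statement that $x\mid(y=y_0)\sim P^{\hiddenSQ}_v$ with $\hiddenSQ=\mathrm{Law}(t\mid y=y_0)$; so it remains only to identify this conditional law and check it equals $A_{y_0}$.

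For the last step I would run a routine Bayes computation: conditioning the (continuous $\times$ discrete) pair $(t,z)$ on $\rho t+z=y_0$ forces $z=y_0-\rho t$ with $t$ ranging over the (symmetric) lattice $\theta_{y_0}+s'\Z$ (using $\theta_{y_0}=y_0/\rho$ and $s'=s/\rho$), and assigns to such a $t$ conditional mass proportional to $\phi_{0,1}(t)\,\phi_{0,\sigma}(y_0-\rho t)$. Completing the square with $\sigma^2=1-\rho^2$ and $\mu_{y_0}=\rho y_0$,
\[
-\tfrac12\Big(t^2+\tfrac{(y_0-\rho t)^2}{1-\rho^2}\Big)=-\tfrac{(t-\rho y_0)^2}{2(1-\rho^2)}-\tfrac{y_0^2}{2},
\]
so the $t$-dependent part is $\propto\phi_{\mu_{y_0},\sigma}(t)$. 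Since the weight $s'$ appearing in \Cref{def:discrete-gaussian} is constant over the lattice and disappears under normalization, this is exactly the probability that $\NDG{\mu_{y_0},\sigma,\theta_{y_0},s'}=A_{y_0}$ assigns to $t$. (Equivalently, one can first apply \Cref{fact:discrete-translation} to reduce to the standard discrete Gaussian $\NDG{0,1,\cdot,\cdot}$ and avoid carrying $\mu_{y_0}$ and $\theta_{y_0}$ through the algebra.) This gives $\mathrm{Law}(t\mid y_0)=A_{y_0}$, hence the two alternates agree, completing the proof.

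The computation is elementary, so there is no genuine obstacle; the only points needing care are bookkeeping the discrete normalization (making sure the lattice weights and the constant $\phi_{0,\sigma}$ prefactors really cancel against the normalizing constant) and verifying that completing the square reproduces \emph{exactly} the parameter choices $\mu_y=\rho y$, $\theta_y=y/\rho$, $\sigma^2=1-\rho^2$, $s'=s/\rho$ of \Cref{parameters}. This is simply the ``discretization of the Gaussian conditional $\cN(\mu_y,\sigma^2)$'' picture noted after \Cref{parameters}, made precise.
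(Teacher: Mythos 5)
Your proposal is correct and follows essentially the same route as the paper's proof: decompose $x$ into its component along $v$ and its orthogonal part, observe the orthogonal part remains standard Gaussian and independent after conditioning on $y$, and then identify $\mathrm{Law}(v^\top x\mid y=y_0)$ by Bayes' rule and completing the square, matching the parameters $\mu_y=\rho y$, $\sigma^2=1-\rho^2$, $\theta_y=y/\rho$, $s'=s/\rho$. Your version is a bit cleaner in distinguishing the scalar projection from the full vector and in explicitly verifying the marginal of $y$, but the argument is the same.
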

\begin{proof}
First, by definition the distributions $P$ under \Cref{def:lin-regr-oblivious} and \Cref{def:ngca-lin-regr-discrete} are the same.
For $Q$, we shall do the calculations explicitly.

As a starting point, it is easy to see that the conditional distribution of $X$ given 
$y$ under \Cref{def:lin-regr-oblivious} is an instance of NGCA, as in \Cref{def:ngca-lin-regr-discrete}.
To see this, define $x' = v^\top x$ to be the projection of $x$ along $v$, and define $x_{\perp} = x - (v^\top x) v$ to be its orthogonal projection.
Observe that $x'$ and $x_\perp$ are distributed as standard (multivariate) Gaussian and are independent of each other (because $X \sim \cN(0,\bI_d)$).
Hence, the conditional distribution of $y$ given $X \equiv (x',x_\perp)$ can be written as $y = \rho x' + Z$, implying that $y$ is independent of $x_\perp$.
Therefore, the conditional distribution of $X$ given 
$y = y_0$ follows like a standard (multivariate) Gaussian in subspace orthogonal to $v$.
Along the direction $v$, the distribution of $X$ is equivalent to the conditional distribution of $x'$ given $y$, which we denote by $\widetilde{J}_y$.
Our goal is to show that $\widetilde{J}_y$ is equal to $\NDG{\mu_y, \sigma , \theta_y, s'}$ as in \Cref{def:ngca-lin-regr-discrete}.

Observe that marginal distribution of $Y = \cN(0,\rho^2) + \NDG{0,\sigma,0,s}$ (independently drawn) is a Gaussian mixture with countable components: 
\begin{align*}
Y \sim \sum_{i \in \Z} w(i) \cN(si, \rho^2),
\end{align*}
with $w(i) = c s (2 \pi)^{-1/2}\exp(-s^2i^2/(2\sigma^2))$,
where $c$ denotes the normalization constant.
Since $Z$ is discrete over the domain $s \Z$,
the conditional distribution of $X$ given $Y= y_0$ is discrete with support $(y_0 - s \Z)/\rho = \theta_{y_0} - s'\Z$, which is the same support as $\NDG{\mu_{y_0}, \sigma , \theta_{y_0}, s'}$.
For any $x_0$ in this discrete set,
the conditional probability of $X = x_0$ given $y = y_0$ is given by the following (where we hide multiplicative terms that do not depend on $x_0$ under the normalization constant):
\begin{align*}
 \P(X = x_0|X + Z = y_0) &\propto f_X(x_0) \P(Z = y_0 - \rho x_0)\\
&\propto  \left(\exp(-x_0^2/2)\right)  \left( w(y_0 - \rho x_0) \right)\\
&\propto \exp(-x_0^2/2) \exp \left(- (y_0- \rho x_0)^2/(2 \sigma^2)\right) \\
&\propto \exp \left( - \frac{1}{2}\left( x_0^2 + \frac{\rho^2 x_0^2}{\sigma^2}  - \frac{2y_0 \rho x_0}{\sigma^2} \right)  \right) \\
&\propto \exp \left( - \frac{1}{2}\left( \frac{x_0^2}{\sigma ^2}   - \frac{2y_0 \rho x_0}{\sigma^2} \right)  \right) \\
&\propto \exp \left( - \frac{1}{2}\left( \frac{x_0}{\sigma }   -  \frac{y_0 \rho   }{\sigma} \right)^2  \right) \\
&\propto \exp \left( - \frac{1}{2 \sigma ^2}\left( x_0   - y_0 \rho \right)^2  \right) \\
&\propto \exp \left( - \frac{1}{2 \sigma ^2}\left( x_0   - \mu_{y_0} \right)^2  \right), 
 \end{align*}
 which is the mass assigned by $\NDG{\mu_{y_0},\sigma,\theta_{y_0},s'}$;
Here we repeatedly use that $\rho^2 + \sigma^2 = 1$.

\end{proof}

Thus, to prove \Cref{thm:sq-hardness-discrete-gaussian}, it suffices to consider \Cref{def:ngca-lin-regr-discrete},  which is a conditional NGCA instance.
Since the distributions $\{A_y\}_{y \in \R}$ are (necessarily) degenerate, the lower bound machinery of $\SDA$ and pairwise correlations developed in \cite{DiaKS17,DiaKS19} for (conditional) NGCA lead only to vacuous bounds.
To bypass this degeneracy, we will instead  use  \Cref{eq:sq-mid-step-intro} and will show that for any bounded query $f$,
\begin{align}
\label{eq:goal-section-3}
  \P_{v \sim \cS^{d-1}}\{\left|\E_{Z\sim \distDisc_v}[f(Z)] - \E_{Z \sim \distNull}[f(Z)] \right| \geq \text{``large''}\} \leq \text{``tiny''}\,,
\end{align}
where the notion of being ``large'' is according to \Cref{def:notion-of-success} for $m=\widetilde{o}(\rho^2/\sqrt{d})$. 
However, it is unwieldy to compute (or upper bound) this probability. Hence, we first take a detour to a related testing problem with the more usual continuous Gaussian noise in the next section.

\subsection{Conditional NGCA with Continuous Gaussian}
\label{sec:ngca-continuous-noise}
As mentioned in the introduction, we leverage the similarity of discrete Gaussian with continuous Gaussian (with respect to polynomials) as an analysis tool.
We define the analogous testing problem with continuous Gaussian noise below.  
\begin{testingproblem}
\label{def:ngca-lin-regr-discrete-continuous-gaussian}
For $y\in \R$, let $B_y$ denote the distribution  $\cN\left(\mu_y, \sigma^2\right)$ with parameters as in \Cref{parameters}.
Consider \Cref{def:ngca-lin-regr} with
    \begin{itemize}
        \item (Marginal of $y$)  $R := R^*_{\rho,E}$ with $E = \NDG{0,\sigma,0,s}$.
        \item (Conditional NGCA) For each $y\in\R$,  $\hiddenSQ_y$ is equal to $B_y$.
    \end{itemize}
    We denote the corresponding null by $\distNull$ (same as Problem \ref{def:ngca-lin-regr-discrete}) and the alternate for direction $v$ by $\distCont_v$.

\end{testingproblem}
\begin{remark}
Observe that the alternate above $T_v$ does not correspond to the following (Gaussian) linear model: $y= \rho v^\top x + z$ for $x \sim \cN(0,\bI_d)$ and $z \sim \cN(0,\sigma^2)$ independently of $x$. This is because the marginal of $Y$ under the aforementioned linear model would have been Gaussian $\cN(0,1)$, while it is $R$ in \Cref{def:ngca-lin-regr-discrete-continuous-gaussian} (which is not Gaussian).
\end{remark}

Before establishing the connection with discrete Gaussian quantitatively, 
we first establish that \Cref{def:ngca-lin-regr-discrete-continuous-gaussian} is SQ-hard.
In fact, we show the stronger result that the associated SDA is large.
\begin{restatable}[SQ Hardness of Continuous Noise]{proposition}{PropSDAContinuous}
\label{lem:sq-hardness-continuous-noise}
Consider \Cref{def:ngca-lin-regr-discrete-continuous-gaussian}.
Then for any $m \in \N$ and $q \in \N$ satisfying 
$\frac{\rho^2 \sqrt{\log(1/q)}}{\sqrt{d}} \lesssim \frac{1}{m}$, 
we have that $\SDA(m) \gtrsim q$.

\end{restatable}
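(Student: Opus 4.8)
The goal is to lower-bound the statistical dimension $\SDA(m)$ of the conditional NGCA instance with continuous Gaussian $B_y = \cN(\mu_y, \sigma^2)$. By definition of $\SDA$, it suffices to show that for two independent uniform unit vectors $v, v'$, conditioned on any event $\cE$ of probability at least $1/q^2$, the expected pairwise correlation $\E_{v,v'}[|\chi_P(T_v, T_{v'})| \mid \cE]$ is at most $1/m$. The standard route (following \cite{DiaKS17,DiaKS19}) is to first show that $|\chi_P(T_v, T_{v'})|$ depends only on the inner product $\langle v, v'\rangle$ — in fact it is a function $\Phi(\langle v,v'\rangle)$ that decays rapidly as $|\langle v, v'\rangle| \to 0$ — and then use the concentration of $\langle v, v'\rangle$ near $0$ on the sphere (of order $1/\sqrt d$) together with the conditioning-on-rare-events trick to control the conditional expectation.

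First I would compute $\chi_P(T_v, T_{v'})$ explicitly. Since $T_v$ is the conditional-NGCA distribution whose $y$-marginal is $R$ and whose conditional law of $x$ given $y$ is $P^{B_y}_v$ (standard Gaussian perpendicular to $v$, $\cN(\mu_y, \sigma^2)$ along $v$), one factorizes the integral over $y \sim R$: for each fixed $y$, the inner integral is the pairwise correlation of two hidden-direction Gaussians $\cN(\mu_y,\sigma^2)$ in directions $v,v'$ relative to the standard Gaussian. This is a classical computation; writing $\tau = \langle v,v'\rangle$ and using the Mehler-type / Gaussian-integral identity, the per-$y$ contribution has the form $\exp\!\big(c(\mu_y,\sigma)\cdot \tau^2/(1-\tau^2)\big)$ (or a closely related closed form), and crucially its deviation from $1$ is $O(\tau^2)$ times a factor polynomial in $\mu_y$. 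Integrating $\mu_y = \rho y$ against $y \sim R$ — where $R$ is $\cN(0,\rho^2)$ convolved with a discrete Gaussian of spacing $s$, hence subgaussian with variance $\Theta(1)$ — the relevant moments $\E_R[\mu_y^{2k}] = \rho^{2k}\E_R[y^{2k}]$ are bounded by $\rho^{2k} (O(1))^{2k} (2k)!^{O(1)}$. Thus I expect to arrive at a bound of the shape $|\chi_P(T_v,T_{v'})| \leq \Psi(\tau)$ where $\Psi(\tau) \lesssim \tau^2$ for $|\tau|$ small, possibly $\Psi(\tau) = \exp(O(\rho^2 \tau^2/(1-\tau^2)^2)) - 1$ or similar, finite since $\sigma^2 = 1-\rho^2 > 1/4$ keeps all Gaussian integrals convergent.

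Next I would invoke the standard spherical concentration estimate: for $v, v'$ uniform and independent on $\cS^{d-1}$, $\P(|\langle v,v'\rangle| \geq t) \leq 2\exp(-d t^2/2)$ (up to constants). Combining with $\Psi(\tau) \lesssim \rho^2 \tau^2$ on the bulk, one gets $\E_{v,v'}[|\chi_P(T_v,T_{v'})|] \lesssim \rho^2/d$ unconditionally. The conditioning on an event $\cE$ with $\P_{v,v'}(\cE) \geq 1/q^2$ costs a $\log$ factor: on such an event, $|\langle v,v'\rangle|$ can be as large as $\sqrt{\log(q^2)/d} = \sqrt{2\log q/d}$ with non-negligible probability, so the conditional expectation is bounded by roughly $\rho^2 \log(1/q)/d$ — more precisely $\E_{v,v'}[|\chi_P(T_v,T_{v'})| \mid \cE] \lesssim \rho^2 \sqrt{\log(1/q)}/\sqrt d$ after accounting for the tail shape of $\Psi$ (a $\sqrt{\log(1/q)}$ rather than $\log(1/q)$ because $\Psi$ itself may grow faster than quadratically in the tail, so one balances the tail mass against $\Psi$'s growth; in any case the stated hypothesis $\rho^2\sqrt{\log(1/q)}/\sqrt d \lesssim 1/m$ is exactly what is needed). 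Under this hypothesis the $\SDA$ condition $\E[|\chi|\mid\cE] \leq 1/m$ holds, giving $\SDA(m) \gtrsim q$.

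The main obstacle I anticipate is the precise control of the pairwise correlation for \emph{non-centered} hidden Gaussians integrated against the non-Gaussian marginal $R$: the means $\mu_y = \rho y$ vary with $y$, so unlike the centered NGCA setting of \cite{DiaKS17} one cannot simply read off a "number of matching moments" and must instead carefully track the $\mu_y$-dependence and show the $y$-integral converges with the right dependence on $\rho$ — this is where $\sigma^2 = 1-\rho^2$ (keeping conditional variances bounded away from the degenerate value $0$ and from $1/2$, which would make $\tau^2/(1-\tau^2)$-type exponents blow up) and the subgaussianity of $R$ (Facts~\ref{fact:subgaussian-square}, \ref{fact:polynomial-discrete-vs-gaussian}) get used. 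Once the clean bound $|\chi_P(T_v,T_{v'})| \lesssim \Psi(\langle v,v'\rangle)$ with the right tail behavior is in hand, the remainder is the routine sphere-concentration-plus-conditioning argument already packaged in the $\SDA$ framework.
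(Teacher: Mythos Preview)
Your overall strategy matches the paper's: factorize $\chi_P(T_v,T_{v'})$ over $y\sim R$, compute the per-$y$ Gaussian pairwise correlation explicitly, integrate using the subgaussianity of $R$, and finish with sphere concentration of $\tau=\langle v,v'\rangle$ conditioned on an event of mass $\ge 1/q^2$.

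There is, however, a quantitative misconception that makes your closing paragraph confused. You write that the per-$y$ correlation deviates from $1$ by $O(\tau^2)$; this would be true for \emph{centered} hidden Gaussians (matching first moment), but here $B_y=\cN(\mu_y,\sigma^2)$ with $\mu_y=\rho y\neq 0$, so only the zeroth moment matches and the deviation is \emph{linear} in $\tau$. Concretely, the paper uses a closed-form Gaussian correlation identity (their \Cref{lem:chi-gaussian}) to get
\[
1+\chi_{\cN(0,\bI_d)}\bigl(P^{B_y}_v,P^{B_y}_{v'}\bigr)=\frac{\exp\!\bigl(\tfrac{\rho^2 y^2\cos\theta}{1+\rho^2\cos\theta}\bigr)}{\sqrt{1-\rho^4\cos^2\theta}},
\]
and after integrating $y\sim R$ via \Cref{fact:subgaussian-square} (exactly as you anticipated), obtains $|\chi_P(T_v,T_{v'})|\lesssim \rho^2|\cos\theta|$ --- linear in $\tau$, not quadratic. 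Then $\E[|\tau|\mid\cE]\lesssim \sqrt{\log q}/\sqrt d$ for $\tau$ being $\Theta(1/\sqrt d)$-subgaussian (\Cref{fact:subgaussian-cond}) gives the stated bound $\rho^2\sqrt{\log q}/\sqrt d$ directly. So the $\sqrt{\log q}/\sqrt d$ is not a ``tail-shape'' correction to a $\log q/d$ bound as you suggest; it is what linear dependence on $\tau$ produces from the start. Once you correct the order of $\tau$ in the per-$y$ formula, the rest of your plan goes through essentially verbatim.
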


\begin{proof}

To calculate the average SQ correlation between $\distCont_v$ and $\distCont_{v'}$, we can first calculate the average correlation between the conditional distributions and then take the average marginal over $y$ to obtain the following expression:
\begin{align}
\chi_{\cN(0,\bI_d)\times R} \left(\distCont_v, \distCont_{v'}\right) 
&= \E_{y \sim R}\left[\chi_{\cN(0,\bI_d)}\left(P^{B_{y}}_v, P^{B_{y}}_{v'}\right)\right].
\end{align}
Here, we crucially used that the marginal distribution of $y$ under $P$, $\distCont_v$ and $\distCont_{v'}$ is identical.

Observe that the distribution $P^{\hidCont_y}_v$ is equal to $\cN(\mu_y v, (\bI_d - vv  ^\top) + \sigma ^2 vv^\top)$. 
Using \Cref{lem:chi-gaussian} with $a = \mu_y = \rho y$, $\gamma = 1 - \sigma^2 = \rho^2 $, and $\cos \theta = v^\top v' $ to calculate $\chi_{\cN(0,\bI_d)}\left(P^{B_{y}}_v, P^{B_{y}}_{v'}\right)$, we obtain
\begin{align*}
1 + \chi_{\cN(0,\bI_d)}\left(P^{B_{y}}_v, P^{B_{y}}_{v'}\right)
&= \frac{\exp\left(\frac{\alpha^2 \cos\theta}{1 + \gamma\cos\theta }\right)}{\sqrt{1 - \gamma^2 \cos^2\theta}}= \frac{\exp\left(\frac{\rho^2 y^2 \cos\theta}{1 + \rho^2\cos\theta }\right)}{\sqrt{1 - \gamma^2 \cos^2\theta}} \\
&= \left(1+f(\theta)\right)\exp\left(g(\theta) y^2\right)
\end{align*}
for appropriately defined $f(\theta):= \frac{1}{\sqrt{1 - \rho^4 \cos^2 \theta}} - 1$ and $g(\theta) := \frac{\rho^2 \cos \theta}{1+ \rho^2\cos \theta}$.
Therefore, the average correlation over $y \in R$ is equal to
\begin{align}
\label{eq:explicit-calcualtion-chi-sq}
   \chi_{\cN(0,\bI_d)\times R} \left(\distCont_v, \distCont_{v'}\right) &= \left(1 + f(\theta)\right) \E_{y \sim R}\left[\exp\left(g(\theta) y^2\right)\right] - 1 \,.
\end{align}
Now observe that $|g(\theta)| \leq \rho^2 \leq \rho_0^2$ by assumption for a sufficiently small $\rho_0$. Therefore, if we define  $r(\theta):=\E_{y \sim R}\left[\exp\left(g(\theta) y^2\right)\right] - 1$,
then \Cref{fact:subgaussian-square} and \Cref{lemma:subgaussian-ness} imply that 
\begin{align*}
|r(\theta)|:=\left|\E_{y \sim R}\left[\exp\left(g(\theta) y^2\right)\right] - 1\right| \lesssim |g(\theta)|.
\end{align*}
Combining this with \Cref{eq:explicit-calcualtion-chi-sq}, we obtain
\begin{align*}
\left|\chi_{\cN(0,\bI_d)\times R} \left(\distCont_v, \distCont_{v'}\right)\right| 
&= (1 + f(\theta))(1 + r(\theta)) -1  \lesssim f(\theta) + |r(\theta)| && \tag{using $|r(\theta)|\lesssim1 $} \\
& \lesssim \rho^4 \cos^2 \theta +  \rho^2 |\cos \theta|  \lesssim \rho^2 |\cos \theta|,
\numberthis\label{eq:bound-on-chi-correlation}
\end{align*}
where we use that $\rho^2 |\cos \theta| \leq 0.1$.
In particular,
\begin{align}
\label{eq:chi-square-gaussian}
\chi^2 \left(\distCont_v, \cN(0,\bI_d)\times R\right) \lesssim \rho^2\,.     
\end{align}
We are now ready to show that $\SDA(m) \geq q$, for which we need to show the following:
\begin{align*}
\sup_{\cE: \P_{v,v'}((v,v') \in \cE) \geq 1/q^2} \E_{v,v'}\left[
  \left|\chi_{P} \left(\distCont_v, \distCont_{v'}\right)\right| \big| \cE
\right] \leq  \frac{1}{m}\,.
\end{align*}
Using \Cref{eq:bound-on-chi-correlation}, it suffices to show that 
\begin{align}
\label{eq:SDA-correlation-mid}
\sup_{\cE: \P_{v,v'}((v,v') \in \cE) \geq 1/q^2} \E_{v,v'}\left[
  \left|\rho^2 |v^\top v'|\right| \big| \cE
\right] \leq \frac{1}{m}\,.
\end{align}
If $v$ and $v'$ are two independent random unit vectors, then $W := v^\top v'$ is a centered $\Theta(1/\sqrt{d})$-subgaussian random variable~\cite[Theorem 3.4.6]{Vershynin18}.
Applying \Cref{fact:subgaussian-cond}, we obtain that $\E[|W| \big| \cE] \lesssim (1/\sqrt{d}) \sqrt{\log(1/\P(\cE))} \lesssim \frac{1}{\sqrt{d}} \sqrt{\log(q)}\,$.
Therefore, we have shown that the left hand side in \Cref{eq:SDA-correlation-mid}  is less than $\rho^2\left(\tfrac{O(1)}{\sqrt{d}} \cdot \sqrt{\log q} \right)$ and hence \Cref{eq:SDA-correlation-mid} holds if \begin{align}
\frac{\rho^2 \sqrt{\log(1/q)}}{ \sqrt{d}} \lesssim \frac{1}{m}, \,
\end{align}
which is the desired conclusion.

\end{proof}

As a consequence, \Cref{prop:SDA-upper-bound} implies that for any query $f$ bounded in $[0,1]$ and $m=o(\tfrac{\sqrt{d}}{\rho^2\log^4d})$, it holds that 
\begin{align}
\label{eq:goal-continuous-Gaussian}
  \P_{v \sim \cS^{d-1}}\{\left|\E_{\distCont_v}[f] - \E_{P}[f] \right| \geq \text{``threshold of $\VSTAT(m)$''}\} \leq \tfrac{1}{d^{\omega(\log^2 d)}}\,.
\end{align}

\subsection{Hardness of Distinguishing Discrete and Gaussian Noise}
\label{sec:indist}

Towards establishing \Cref{eq:goal-section-3}, a natural step after proving \Cref{eq:goal-continuous-Gaussian} is to argue that, with high probability, $\left|\E_{\distCont_v}[f(Z)] - \E_{{Q}_v}[f(Z)]\right|$ is small.
This is exactly what we establish in the next result, which is our main technical result:  
\begin{theorem}
\label{prop:distinguish-discrete-vs-cont}
Suppose that (i) $\alpha \gg \frac{1}{d^{\polylog(d)}}$  and $\rho^2 \geq s^2 \log^C(d/\alpha)$ for a large constant $C>0$.  
Then for any $f: \cU \to [0,1]$, it holds that
$\P_{v \sim \cS^{d-1}}\Big[\Big|\E_{\distDisc_v}[f] - \E_{\distCont_v}[f] \Big| \gtrsim \Big(\frac{\alpha}{d}\Big)^{\log^2(d/\alpha)} \Big] \leq \frac{1}{d^{\log^2d}} \,$.
\end{theorem}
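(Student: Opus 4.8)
The goal is to show that for any bounded test function $f:\R^d\times\R\to[0,1]$, the quantity $|\E_{\distDisc_v}[f]-\E_{\distCont_v}[f]|$ is inverse super-polynomially small with overwhelming probability over a uniformly random unit vector $v$. The plan is to exploit the Hermite/Gaussian Fourier machinery of \cite{DiaKRS23}, but adapted to the \emph{conditional} NGCA setting where we have a family $\{A_y\}_{y\in\R}$ versus $\{B_y\}_{y\in\R}$ indexed by the response $y$. The first step is to reduce everything to a fixed truncation level $\ell=\Theta(\log^2(d/\alpha))$: using \Cref{fact:Hermite}, replace $f$ by its degree-$\ell$ Hermite approximation $f^{\le\ell}(x,y)$ (approximating $f_y$ in the $x$-variable for each $y$), and bound the contribution of the residual $f^{>\ell}$ separately. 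Since $\distDisc_v$ and $\distCont_v$ both have the same marginal $R$ on $y$, and conditioned on $y$ differ only in the $v$-direction from a standard Gaussian, the residual contributes at most $\|f^{>\ell}\|$ terms that are killed by the moment-matching between discrete and continuous Gaussians (\Cref{fact:polynomial-discrete-vs-gaussian}) for the low-degree part and by standard tail bounds for the high-degree part. The key parameter regime is $s'=s/\rho\le 0.001$, so that $\exp(-\Omega(1/s'^2))=\exp(-\Omega(\rho^2/s^2))$, and the hypothesis $\rho^2\ge s^2\log^C(d/\alpha)$ makes this quantity inverse super-polynomial in $d/\alpha$.

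Second, I would write $\E_{\distDisc_v}[f]-\E_{\distCont_v}[f] = \E_{y\sim R}\big[\E_{x\sim P^{A_y}_v}[f_y(x)] - \E_{x\sim P^{B_y}_v}[f_y(x)]\big]$ and, for each fixed $y$, expand $f_y^{\le\ell}$ in the Hermite tensor basis: $\E_{x\sim P^{A_y}_v}[f_y^{\le\ell}(x)] = \sum_{k=0}^\ell \langle \bA_k^{(y)}, \E_{x\sim P^{A_y}_v}[\bH_k(x)]\rangle$, where $\bA_k^{(y)} = \E_{x\sim\cN(0,\bI_d)}[f_y(x)\bH_k(x)]$. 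Because $P^{A_y}_v$ (resp.\ $P^{B_y}_v$) is a product distribution that is standard Gaussian orthogonal to $v$ and $A_y$ (resp.\ $B_y$) along $v$, the Hermite-tensor moment $\E[\bH_k(x)]$ under $P^{A_y}_v$ is supported on the rank-one tensors $v^{\otimes j}$ and equals a coefficient $c_{k,j}(A_y)$ times (symmetrizations of) $v^{\otimes j}\otimes(\text{identity part})$, where $c_{k,j}(A_y)=\E_{t\sim A_y}[h_j(t)]$-type quantities — i.e., one-dimensional Hermite moments of $A_y$ minus those of $\cN(0,1)$ after the translation in \Cref{fact:discrete-translation}. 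The difference $\E_{x\sim P^{A_y}_v}[\bH_k(x)]-\E_{x\sim P^{B_y}_v}[\bH_k(x)]$ thus reduces to differences of univariate Hermite moments of $\NDG{\mu_y,\sigma,\theta_y,s'}$ versus $\cN(\mu_y,\sigma^2)$; after rescaling by $\sigma$ and recentering (\Cref{fact:discrete-translation}), these become $|\E_{G\sim\cN(0,1)}[h_j(G)] - \E_{Y\sim\NDG{0,1,\theta',s'}}[h_j(G)]|$, controlled by \Cref{fact:polynomial-discrete-vs-gaussian} at the price of $j!\,2^{O(j)}\exp(-\Omega(1/s'^2))$, uniformly in $y$ since $s'$ does not depend on $y$. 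This is the crucial point: the spacing in the hidden direction, $s'=s/\rho$, is a fixed constant, so the moment-matching error is \emph{uniform} over all conditioning values $y$.

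Third, I would combine the per-$y$ bound with control of the Hermite coefficients $\bA_k^{(y)}$: $\|\bA_k^{(y)}\|_2^2 = \E_{x\sim\cN(0,\bI_d)}[(f_y^{=k}(x))^2]\le \E[f_y^2]\le 1$, so $\|\bA_k^{(y)}\|_2\le 1$ for every $k$ and $y$. Pairing this with $\|\E_{P^{A_y}_v}[\bH_k] - \E_{P^{B_y}_v}[\bH_k]\|_2 \lesssim \sqrt{k!}\,2^{O(k)}\exp(-\Omega(\rho^2/s^2))$ (the $\sqrt{k!}$ coming from the norm of the Hermite tensor identity components, as in \cite{DiaKRS23}) and summing over $k\le\ell$ with $\ell=\Theta(\log^2(d/\alpha))$ gives, after taking $\E_{y\sim R}$ and using that $R$ is subgaussian (\Cref{lemma:subgaussian-ness}, so the $y$-dependence in $\bA_k^{(y)}$ and in the translated spacing parameters stays tame), a deterministic (in $v$!) bound of the form $|\E_{\distDisc_v}[f^{\le\ell}] - \E_{\distCont_v}[f^{\le\ell}]| \lesssim \ell!\,2^{O(\ell)}\exp(-\Omega(\rho^2/s^2)) \le (\alpha/d)^{\log^2(d/\alpha)}$ under the hypothesis $\rho^2\ge s^2\log^C(d/\alpha)$. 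Actually this bound holds for \emph{all} $v$, so the probabilistic statement is only needed to absorb the residual term $|\E_{\distDisc_v}[f^{>\ell}] - \E_{\distCont_v}[f^{>\ell}]|$; here I would use \Cref{fact:Hermite} together with a quantitative decay rate for $\|f^{>\ell}\|_{L_2(\cN(0,\bI_d)\times R)}$ and the concentration of $v^\top v'$ (equivalently, standard arguments that a random direction looks Gaussian), so that the residual's contribution is small except on a set of $v$ of measure at most $d^{-\log^2 d}$.

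\medskip

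\textbf{Main obstacle.} The delicate part is handling the high-degree residual $f^{>\ell}$ and the discreteness of $A_y$: since $A_y$ is supported on a lattice, its $\chi^2$-divergence to $\cN(0,1)$ is infinite, so one cannot simply bound $|\E_{\distDisc_v}[f^{>\ell}]|$ by a Cauchy–Schwarz against $\chi^2(\distDisc_v,P)$. Instead one must argue as in \cite{DiaKRS23} directly at the level of Fourier/Hermite coefficients, using that the discrete measure, though not $L^2$ against the Gaussian, has \emph{all finite moments growing only like those of a Gaussian} (\Cref{fact:polynomial-discrete-vs-gaussian}, second bullet) — and carefully tracking that the truncation level $\ell\sim\log^2(d/\alpha)$ is large enough that the genuine signal (which lives at Hermite level $\gtrsim\log d$, since $\rho\sim\alpha$ and the hidden direction is detectable only through $\Omega(\log d)$-degree moments) is captured, while small enough that $\ell!\,2^{O(\ell)}\exp(-\Omega(\rho^2/s^2))$ remains inverse super-polynomial. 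Getting these two competing constraints to coexist — which forces the precise hypothesis $\rho^2\ge s^2\log^C(d/\alpha)$ with $C$ large — is the technical heart of the argument.
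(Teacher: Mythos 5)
Your high-level plan---Hermite expansion conditionally on $y$, moment matching between the discrete Gaussian $A_y$ and the continuous Gaussian $B_y$ after the rescaling of \Cref{fact:discrete-translation} (which turns the spacing into the $y$-independent $s'=s/\rho$), and treating the high-degree residual separately---does track the paper's argument through \Cref{prop:decomposition-disc-vs-cont-main-body} and \Cref{lem:Fixed-y-hermite-coefficients}. But the way you propose to close the argument has a real gap.

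You fix $\ell = \Theta(\log^2(d/\alpha))$ and obtain a bound on $\bigl|\E_{\distDisc_v}[f^{\le\ell}] - \E_{\distCont_v}[f^{\le\ell}]\bigr|$ of order $\ell!\,2^{O(\ell)}e^{-\Omega(\rho^2/s^2)}$ that is deterministic in $v$. That part is fine. You then assert that the residual $f^{>\ell}$ can be handled via \Cref{fact:Hermite} together with ``a quantitative decay rate for $\|f^{>\ell}\|_{L_2(\cN(0,\bI_d)\times R)}$.'' There is no such decay rate for an arbitrary bounded $f$: for any fixed $\ell$, one can choose $f:\R^d\times\R\to[0,1]$ with $\|f^{>\ell}\|_{L_2}$ close to $1$. \Cref{fact:Hermite} only gives the soft statement that $\|f^{>\ell}\|\to 0$ as $\ell\to\infty$, with $\ell$ necessarily allowed to depend on $f$. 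But your deterministic low-degree bound diverges as $\ell\to\infty$ (the factor $\ell!\,2^{O(\ell)}$, and also the $L^\ell$ coming from $\max(1,|\mu_y|^k)$, eventually overwhelms $e^{-\Omega(\rho^2/s^2)}$), so you cannot send $\ell\to\infty$ to kill the residual while keeping the low-degree contribution small. This is the step that fails.

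The paper resolves exactly this tension with a two-level truncation. It introduces a second parameter $t$ (a fixed polylog in $d$, independent of $f$), and shows via \Cref{prop:concentration-of-tensor-dot-unit-vectors} that, with probability $1-\delta$ over the random direction $v$, the tensor terms obey $\sum_{k>t}\E_y|\langle v^{\otimes k},\bT_{k,y}\rangle| \lesssim d^{O(1)}(t\log(t/\delta)/d)^{t/4} + \ldots$, a super-polynomially small quantity. In that high-$k$ regime the Hermite coefficient difference is only bounded by the coarse $e^{O(\mu_y^2)}\le e^{O(L^2)}$ estimate of \Cref{lem:Fixed-y-hermite-coefficients}, but the tensor decay compensates, so the full sum $\sum_{k=1}^\ell(\cdots)$ is bounded \emph{uniformly in $\ell$}. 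Only then can one take $\ell\to\infty$ (depending on $f$, $\delta$, $\gamma$) to absorb the residual, which Appendix~\ref{app:handling_widetilde_f} carries out using $\chi^2(\distCont_v,P)<\infty$ together with the compact truncation $\widetilde{A}_y$ of $A_y$ (the latter is needed precisely because $\chi^2(\distDisc_v,P)=\infty$). In short: the probabilistic statement over $v$ is not merely needed to ``absorb the residual,'' as you claim---it is needed for the intermediate range $t < k \le \ell$, and this is the mechanism that makes the residual tractable at all. Your ``main obstacle'' paragraph also misdiagnoses the difficulty as a trade-off between ``capturing the signal'' at a given Hermite level and keeping $\ell$ small; the actual obstruction is purely one of uniformity over arbitrary bounded $f$, which no fixed $\ell$ can provide.
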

In the remainder of this section, we detail the proofs and intuition for the above result.

As a first step, we do a Hermite expansion of the function $f$ as in \cite{DiaKRS23}, but generalized to the setting of conditional NGCA.
However, for technical reasons due to the degeneracy of $A_y$ and hence $\distDisc{}_v$, we would need to perform another truncation operation.
\begin{definition}
Define $\hidDiscT_y$ to be the univariate distribution $A_y$ conditioned on $\{z: |z| \leq d\}$ and let $\distDiscT_v$ to be analogous to $\distDisc_v$ but with $\hidDiscT_y$ instead of $A_y$. 
\end{definition}
We now use the Hermite expansion to obtain the following result:
\begin{proposition}
\label{prop:decomposition-disc-vs-cont-main-body}
Let $f:\R^d\times \R \to [0,1]$. For any  $L \leq [1,\tfrac{d}{2}]$, $\ell\in \N$, and $\widetilde{f}:= f\cdot\1_{|y|\leq L}$, we have
\begin{align}
\label{eq:decomposition-disc-vs-cont-main}
\nonumber
\left|\E_{\distCont_v}[f] - \E_{\distDisc_v}[f]\right| &\lesssim e^{-\Omega(L^2)} + e^{-\Omega(d)} + \littlesum_{k=1}^{\ell}\max_{|y|\leq L}\left|\widetilde{\bA}_{k,y} - \bB_{k,y}\right| \cdot
\E_{y\sim R}\Big[
  \left|\langle v^{\otimes k}, \bT_{k,y}\rangle\right|
\Big]
\\
  & { \qquad + \big|\E_{\distDiscT_v}[\widetilde{f}^{> \ell}] - \E_{\distDisc_v}[\widetilde{f}^{> \ell}]\big|} \;,
\end{align}
where $\bT_{k,y} := \E_{x \sim \cN(0,\bI_d)}[\widetilde{f}_y(x)\bH_k(x)]$,  $\widetilde{\bA}_{k,y}:=\E_{x\sim \hidDiscT_y} [h_k(x)]$, and  $\bB_{k,y}:=\E_{x\sim B_y} [\widetilde{f}(x) ]$.

\end{proposition}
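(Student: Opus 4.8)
The goal is to bound $|\E_{\distCont_v}[f] - \E_{\distDisc_v}[f]|$. The plan is to interpolate through a chain of auxiliary distributions and bound each difference separately.

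First, the truncation in $y$: since the marginal of $y$ under both $\distCont_v$ and $\distDisc_v$ is the same distribution $R$, and since $R = R^*_{\rho,E}$ is subgaussian (by Lemma \ref{lemma:subgaussian-ness}), restricting $f$ to $\widetilde f = f\cdot \1_{|y|\le L}$ changes both expectations by at most $e^{-\Omega(L^2)}$ (the probability mass $R$ places outside $[-L,L]$). So it suffices to work with $\widetilde f$.

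Second, the truncation in $x$ along the hidden direction: replace $A_y$ by $\widetilde A_y$ (conditioning on $|z|\le d$) and correspondingly $\distDisc_v$ by $\distDiscT_v$. The key point is that under $\distDisc_v$ the projection onto $v$ is distributed (conditionally on $y$) as a discrete Gaussian with $\sigma \approx 1$, so it is subgaussian with parameter $O(1)$, and hence $\P[|z|>d]\le e^{-\Omega(d^2)}$, uniformly over $y$ in the truncated range (actually one must be slightly careful that $\mu_y = \rho y$ is bounded by $\rho L \le L$, which is fine since $L\le d/2$). Integrating over $y\sim R$, the total change from passing $\distDisc_v \to \distDiscT_v$ (on a bounded function) is $e^{-\Omega(d)}$. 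So now it suffices to bound $|\E_{\distCont_v}[\widetilde f] - \E_{\distDiscT_v}[\widetilde f]|$.

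Third, the Hermite split: write $\widetilde f = \widetilde f^{\le \ell} + \widetilde f^{>\ell}$ in the conditional-Hermite sense of the Preliminaries (degree-$\ell$ Hermite approximation in $x$ for each fixed $y$). For the low-degree part, since $\distCont_v$ and $\distDiscT_v$ are both instances of conditional NGCA with the same $y$-marginal $R$ and with hidden-direction conditionals $B_y$ and $\widetilde A_y$ respectively, and since the orthogonal complement is standard Gaussian in both cases, the expectation of a degree-$\le\ell$ (in $x$) function factors: for each $y$, $\E_{P^{\hiddenSQ_y}_v}[\widetilde f_y^{\le \ell}]$ equals $\sum_{k=0}^\ell \langle v^{\otimes k}, \bT_{k,y}\rangle \cdot (\text{$k$-th Hermite coefficient of } \hiddenSQ_y)$ — the standard identity that the Hermite tensor along a hidden direction collapses to the univariate Hermite coefficient (cf.\ \cite{DiaKS17,DiaKRS23}). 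Here $\bT_{k,y} = \E_{x\sim\cN(0,\bI_d)}[\widetilde f_y(x)\bH_k(x)]$ is the $k$-th Hermite tensor coefficient of $\widetilde f_y$, and its contraction with $v^{\otimes k}$ picks out the relevant scalar, while the univariate coefficients are $\widetilde{\bA}_{k,y}=\E_{x\sim\widetilde A_y}[h_k(x)]$ and $\bB_{k,y}=\E_{x\sim B_y}[h_k(x)]$ (note: the statement writes $\bB_{k,y}:=\E_{x\sim B_y}[\widetilde f(x)]$, which should be read as the $k$-th Hermite coefficient $\E_{x\sim B_y}[h_k(x)]$, consistent with $\widetilde{\bA}_{k,y}$). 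Taking the difference of the two low-degree expectations, subtracting term by term and taking absolute values, then using $|\langle v^{\otimes k},\bT_{k,y}\rangle|$ as the common factor and maximizing the coefficient gap over $|y|\le L$, yields
\[
\big|\E_{\distCont_v}[\widetilde f^{\le\ell}] - \E_{\distDiscT_v}[\widetilde f^{\le\ell}]\big| \le \sum_{k=1}^\ell \max_{|y|\le L}\big|\widetilde{\bA}_{k,y}-\bB_{k,y}\big|\cdot \E_{y\sim R}\big[|\langle v^{\otimes k},\bT_{k,y}\rangle|\big],
\]
where the $k=0$ term vanishes because both are probability distributions (zeroth coefficient $=1$). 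For the high-degree residual $\widetilde f^{>\ell}$: under $\distCont_v$ the hidden-direction conditional is an honest Gaussian $B_y$, so by the defining property of the Hermite decomposition $\E_{\distCont_v}[\widetilde f^{>\ell}] = 0$ exactly (the degree-$\ell$ Hermite approximation is exact in expectation against a Gaussian in the hidden direction — more precisely, $\E_{P^{B_y}_v}[\widetilde f_y^{>\ell}]$ has no contribution because $B_y$ is Gaussian and one can reduce to the standard Gaussian via Fact \ref{fact:discrete-translation}-type translation; this is where one uses that $B_y=\cN(\mu_y,\sigma^2)$ with $\mu_y,\sigma$ as in \ref{parameters} so the relevant Hermite orthogonality still applies after an affine change of variables). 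Hence only $|\E_{\distDiscT_v}[\widetilde f^{>\ell}] - \E_{\distDisc_v}[\widetilde f^{>\ell}]|$ survives — but wait, one wants $\E_{\distDiscT_v}[\widetilde f^{>\ell}]$ compared against $0=\E_{\distCont_v}[\widetilde f^{>\ell}]$; the stated bound instead keeps $|\E_{\distDiscT_v}[\widetilde f^{>\ell}]-\E_{\distDisc_v}[\widetilde f^{>\ell}]|$, which is obtained by first adding and subtracting $\E_{\distDisc_v}[\widetilde f^{>\ell}]$ and noting $\E_{\distDisc_v}[\widetilde f^{>\ell}]$ is itself handled later (in the remaining sections, via the discrete-vs-continuous moment comparison). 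Collecting the four error contributions $e^{-\Omega(L^2)}$, $e^{-\Omega(d)}$, the low-degree sum, and the residual term gives exactly the claimed inequality.

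The main obstacle, in my view, is the bookkeeping around the high-degree residual: one must be careful that $\widetilde f^{>\ell}$ is defined via the conditional (in $y$) Hermite decomposition, that this decomposition interacts correctly with the $y$-truncation (so that $\widetilde f^{>\ell} = (f\1_{|y|\le L})^{>\ell}$ and not something with extra cross terms), and that $\E_{\distCont_v}[\widetilde f^{>\ell}]$ genuinely vanishes — which requires that for each fixed $y$ the hidden-direction marginal under $\distCont_v$ is exactly $B_y$, an affine image of a standard Gaussian, so that the degree-$\ell$ Hermite projection is unbiased against it. The subgaussianity inputs for the two truncation steps are routine given Lemma \ref{lemma:subgaussian-ness} and the fact that discrete Gaussians with $\sigma=\Theta(1)$ are $O(1)$-subgaussian (Fact \ref{fact:subgaussian-square} territory); the genuinely delicate part is ensuring the Hermite-orthogonality identity is applied to the right distribution and that the residual term is isolated cleanly so it can be attacked separately in the subsequent sections using Fact \ref{fact:polynomial-discrete-vs-gaussian} to compare moments of $\widetilde A_y$ (discrete Gaussian) with $B_y$ (continuous Gaussian).
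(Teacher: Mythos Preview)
Your overall architecture---truncate in $y$, pass from $\distDisc_v$ to $\distDiscT_v$, split $\widetilde f=\widetilde f^{\le\ell}+\widetilde f^{>\ell}$, and use the Hermite--NGCA identity $\E_{P^{\hiddenSQ_y}_v}[\widetilde f_y^{\le\ell}]=\sum_{k\le\ell}\E_{\hiddenSQ_y}[h_k]\,\langle v^{\otimes k},\bT_{k,y}\rangle$---is exactly the paper's route, and you correctly spotted that the displayed definition of $\bB_{k,y}$ should read $\E_{x\sim B_y}[h_k(x)]$.

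The genuine gap is your treatment of the high-degree residual. You claim $\E_{\distCont_v}[\widetilde f^{>\ell}]=0$ because ``$B_y$ is Gaussian'' and ``the relevant Hermite orthogonality still applies after an affine change of variables.'' This is false. The Hermite projection $\widetilde f_y\mapsto\widetilde f_y^{\le\ell}$ is taken in $L^2(\cN(0,\bI_d))$, so $\widetilde f_y^{>\ell}$ is orthogonal to degree-$\le\ell$ polynomials only under the \emph{standard} Gaussian. Under $\distCont_v$ the hidden-direction conditional is $B_y=\cN(\mu_y,\sigma^2)$ with $\mu_y=\rho y\ne 0$ and $\sigma^2=1-\rho^2\ne 1$, which is not standard; an affine reparametrization that turns $B_y$ into $\cN(0,1)$ simultaneously changes $\widetilde f_y^{>\ell}$ into something that is no longer the high-degree part of anything. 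Concretely, already for $\ell=0$ and $\widetilde f_y(x)=v^\top x$ (suitably truncated), $\widetilde f_y^{>0}$ has nonzero expectation $\mu_y$ under $P^{B_y}_v$.

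The paper does not attempt to make this term vanish. It simply carries the residual $\bigl|\E_{\distDiscT_v}[\widetilde f^{>\ell}]-\E_{\distCont_v}[\widetilde f^{>\ell}]\bigr|$ through the inequality untouched (the $\distDisc_v$ in the displayed statement appears to be a typo for $\distCont_v$; your attempt to rationalize the typo is what led you to the false vanishing claim). The residual is then disposed of in a separate step (\Cref{app:handling_widetilde_f}): one uses \Cref{fact:Hermite} to make $\|\widetilde f^{>\ell}\|_{L_2(P)}$ arbitrarily small by taking $\ell$ large, and then transfers this to $\distCont_v$ and to $\distDiscT_v$ (averaged over $v$) via finiteness of the respective $\chi^2$-divergences to $P$---the latter being the reason the truncation $A_y\to\widetilde A_y$ was introduced in the first place, since $\chi^2(\distDisc_v,P)=\infty$ but $\chi^2$ of the $v$-averaged $\distDiscT_v$ against $P$ is finite. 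So the residual is not zero; it is merely $o_\ell(1)$, and establishing that requires work beyond this proposition.
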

The full proof is deferred to \Cref{sec:proof-prop:decomposition-disc-vs-cont} but we provide a proof sketch below.
\begin{proofsketch}
Since $R_{\rho,E}$ has very light tails,
we can replace $f$ with $\widetilde{f}$ which leads to a difference of at most $\P(|y|\geq L) \lesssim e^{-\Omega(L^2)}$.

Next, we decompose $\widetilde{f}$ as $\widetilde{f}^{\leq \ell}$ and $\widetilde{f}^{> \ell}$, where the $\widetilde{f}^{>\ell}$ term appears as is in \Cref{eq:decomposition-disc-vs-cont-main} and can be ignored momentarily.
Then, using law of total expectation, we can write 
$\E_{(x,y) \sim \distCont_v }[\widetilde{f}^{\leq \ell}] = \E_y[\E_{x}[\widetilde{f}_y^{\leq \ell}(x)]]$. The result in \cite[Lemma 3.3]{DiaKRS23} implies that   $\E_{P^{\hidDisc_y}_v}[\widetilde{f}_y^{\leq \ell}(x)] = \sum_{k=0}^{\ell} \bA_{k,y}\left\langle  v^{\otimes k}, \bT_{k,y}\right\rangle$ for $\bA_{k,y}:=\E_{x\sim \hidDisc_y} [h_k(x)]$.
A similar argument holds for $\hidCont_y$.
Taking the difference and integrating over $y$, we obtain 
    $\E_{\distDisc_v}[\widetilde f^{\le\ell}]
    -\E_{\distCont_v}[\widetilde f^{\le\ell}]
    = E_{y}\bigl[\littlesum_{k=1}^{\ell}
      \bigl(\bA'_{k,y}-\bB_{k,y}\bigr)\,
      \langle v^{\otimes k},\bT_{k,y}\rangle
    \bigr]$.

\looseness=-1Since $\widetilde{f}$ is zero for $|y| \geq L$, $\bT_{k,y}$ is also zero for large $y$ and we can take the maximum only over $|y|\leq L$, yielding \Cref{eq:decomposition-disc-vs-cont-main} roughly. 
However, later on, we would still need to control $\E_{\distDisc}[\widetilde{f}^{> \ell}]$, which could potentially be large because of degeneracy and unboundedness of $\hidDisc_y$s.
Therefore, we replace $\hidDisc_y$s with $\distDiscT_y$s to make it bounded; using  concentration of $\hidDisc_y$s, this leads to an additional $e^{-\Omega(d^2)}$ term.
\end{proofsketch}

Thus, we crucially need to control $|\widetilde{\bA}_{k,y} - \bB_{k,y}|$ and obtain high-probability estimates (over randomness in $v$) on  $\E_{y \sim \R}\left[\left| \langle v^{\otimes k}, \bT_{k,y}\rangle\right|\right]$.
\subsubsection{Closeness of Hermite Coefficients}

We begin with the former goal of controlling $|\widetilde{\bA}_{k,y} - \bB_{k,y}|$.
\begin{restatable}[Closeness of Hermite Coefficients]{lemma}{LemCloseHermiteCoefficients}
\label{lem:Fixed-y-hermite-coefficients}
    For any $y\in \R$ and $k \in \N$, we have:
        \begin{itemize}
      \item (Tighter for small $k$)    \label{eq:hermite-coefficient-small-k}
      $|\widetilde{\bA}_{k,y} - \bB_{k,y}| \lesssim \max\left(1, |\mu_y|^k\right)  k^{O(k)}   \cdot \bigl(e^{- \Omega\left(\frac{\rho^2}{s^2}\right)} + e^{-\Omega(d)}\bigr)$.

    \item (Tighter for larger $k$) $|\widetilde{\bA}_{k,y} - \bB_{k,y}| \lesssim 
    e^{O(\mu_y^2)}$.
    \end{itemize}
    \end{restatable}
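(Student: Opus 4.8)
\textbf{Plan of proof for \Cref{lem:Fixed-y-hermite-coefficients}.}
The key quantities are $\widetilde{\bA}_{k,y}=\E_{x\sim \hidDiscT_y}[h_k(x)]$ and $\bB_{k,y}=\E_{x\sim B_y}[h_k(x)]$ (note: I read the $\widetilde f$ in the statement of $\bB_{k,y}$ as a typo for $h_k$, matching the usage in \Cref{prop:decomposition-disc-vs-cont-main-body}), where $\hidDiscT_y$ is $A_y=\NDG{\mu_y,\sigma,\theta_y,s'}$ truncated to $\{|z|\le d\}$ and $B_y=\cN(\mu_y,\sigma^2)$. The natural route is to pass through an \emph{untruncated} comparison first and then pay for the truncation. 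Concretely, I would write
\[
\widetilde{\bA}_{k,y}-\bB_{k,y}
=\underbrace{\bigl(\widetilde{\bA}_{k,y}-\bA_{k,y}\bigr)}_{\text{truncation error}}
+\underbrace{\bigl(\bA_{k,y}-\bB_{k,y}\bigr)}_{\text{discrete-vs-continuous}},
\]
where $\bA_{k,y}=\E_{x\sim A_y}[h_k(x)]$. For the second term, apply \Cref{fact:discrete-translation} to re-center: $A_y$ is the law of $\sigma Y+\mu_y$ with $Y\sim\NDG{0,1,\theta',s'}$, and $B_y$ is the law of $\sigma G+\mu_y$ with $G\sim\cN(0,1)$. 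Since $h_k(\sigma t+\mu_y)$ is a polynomial of degree $k$ in $t$ with coefficients controlled by $\max(1,|\mu_y|^k)\sigma^k k^{O(k)}\le \max(1,|\mu_y|^k)k^{O(k)}$ (using $\sigma<1$), and $\E_{G}[(\,\cdot\,)^2]$ of that polynomial is at most $\max(1,\mu_y^{2k})k^{O(k)}$, the first bullet of \Cref{fact:polynomial-discrete-vs-gaussian} gives $|\bA_{k,y}-\bB_{k,y}|\lesssim \max(1,|\mu_y|^k)k^{O(k)}e^{-\Omega(1/s'^2)}$, and $1/s'^2=\rho^2/s^2$ by \Cref{parameters}. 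This yields the $e^{-\Omega(\rho^2/s^2)}$ part of the first bullet.

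For the truncation term, I would bound $|\widetilde{\bA}_{k,y}-\bA_{k,y}|$ by the mass $A_y$ places outside $[-d,d]$ times a bound on $|h_k|$ on the relevant tail, renormalized. Writing $A_y=\NDG{\mu_y,\sigma,\theta_y,s'}$, the tail mass $\P_{A_y}(|x|>d)$ is at most $e^{-\Omega((d-|\mu_y|)^2)}\le e^{-\Omega(d)}$ when $|\mu_y|\le d/2$, using the discrete-Gaussian tail bound (which follows from \Cref{fact:polynomial-discrete-vs-gaussian}-type estimates or a direct comparison of $\NDG{}$ to $\cN(\mu_y,\sigma^2)$). On that tail, $|h_k(x)|\le (|x|+k)^k$, and $\E_{A_y}[h_k(x)^2\1_{|x|>d}]$ is again controlled by a Gaussian-type integral giving an extra $e^{-\Omega(d)}$ with polynomial-in-$d$, $k^{O(k)}$ overhead; a Cauchy--Schwarz split $|\widetilde{\bA}_{k,y}-\bA_{k,y}|\lesssim |\bA_{k,y}|\cdot\P_{A_y}(|x|>d)+\E_{A_y}[|h_k|\1_{|x|>d}]$ then absorbs everything into $\max(1,|\mu_y|^k)k^{O(k)}e^{-\Omega(d)}$. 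Combining with the previous paragraph proves the first bullet.

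For the second bullet (the crude bound for large $k$, needed because $k^{O(k)}$ eventually swamps the exponential gain), I would simply bound each of $|\widetilde{\bA}_{k,y}|$ and $|\bB_{k,y}|$ directly. For $\bB_{k,y}$: $\E_{G\sim\cN(0,1)}[h_k(\sigma G+\mu_y)]$ equals $\sigma^k$ times the value at $\mu_y/\sigma$ of a ``heat-flowed'' Hermite polynomial; more simply, the generating-function identity $\sum_k h_k(z)t^k/\sqrt{k!}=e^{zt-t^2/2}$ and a Gaussian moment computation give $|\bB_{k,y}|\le e^{O(\mu_y^2)}$ uniformly in $k$ (the precise route: $\E_G[h_k(\sigma G+\mu_y)]$ has magnitude at most $\sqrt{\E_G[h_k(\sigma G+\mu_y)^2]}$, and expanding $h_k$ in the $\cN(\mu_y,\sigma^2)$ basis shows $\E_G[h_k(\sigma G+\mu_y)^2]=\sum_{j\le k}(\mu_y^2/\sigma^2)^{?}\cdots\le e^{O(\mu_y^2)}$ after summing; alternatively integrate the generating function). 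For $\widetilde{\bA}_{k,y}$, since it is an average over a conditioned (hence probability) distribution supported in $[-d,d]$, I compare it to $\bA_{k,y}$ and to $\bB_{k,y}$ using the already-established closeness when that is smaller than $e^{O(\mu_y^2)}$, and otherwise use the analogous direct estimate for the discrete Gaussian via \Cref{fact:polynomial-discrete-vs-gaussian}'s monomial comparison to control $\E_{A_y}[h_k^2]$ by $\E_{\cN(\mu_y,\sigma^2)}[h_k^2]+(\text{tiny})\le e^{O(\mu_y^2)}$. \textbf{The main obstacle} I anticipate is exactly this last point: showing the \emph{discrete} Gaussian's Hermite moments $\E_{A_y}[h_k(x)^2]$ are bounded by $e^{O(\mu_y^2)}$ uniformly in $k$ and $s'$, since the moment-comparison facts available lose a $k!(O(s))^k$ factor that is not uniformly small; the fix is to use that $s'\le 0.001$ is an absolute constant, so $(O(s'))^k e^{-\Omega(1/s'^2)}$ is summable and subdominant to the $e^{O(\mu_y^2)}$ Gaussian term, but making the constants line up cleanly across all $k$ will require care.
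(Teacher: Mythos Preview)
Your treatment of the first bullet is essentially the paper's: decompose through the untruncated $\bA_{k,y}$, re-center via \Cref{fact:discrete-translation}, and invoke \Cref{fact:polynomial-discrete-vs-gaussian} on the polynomial $t\mapsto h_k(\sigma t+\mu_y)$ to get the $e^{-\Omega(\rho^2/s^2)}$ term, while the truncation contributes the $e^{-\Omega(d)}$ term. The paper packages the truncation step via the single inequality
\[
\bigl|\E_{\widetilde{A}_y}[g]-\E_{B_y}[g]\bigr|\;\lesssim\;\bigl|\E_{A_y}[g]-\E_{B_y}[g]\bigr|+\tau\,\bigl|\E_{B_y}[g]\bigr|+\sqrt{\tau\,\E_{A_y}[g^2]},
\]
with $\tau=\P_{A_y}(|x|>d)\lesssim e^{-\Omega(d)}$, but this is the same content as your triangle-inequality split.

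For the second bullet, there is a real gap, and it is exactly the obstacle you flag. Your proposed fix does not work: the moment-comparison facts you have available lose a factor of $k!\,2^{O(k)}$ (first bullet of \Cref{fact:polynomial-discrete-vs-gaussian}) or $k!\,(O(s'))^k$ (second bullet), and in neither case is this uniformly bounded in $k$ for fixed $s'$---the factorial eventually dominates any exponential gain from $e^{-\Omega(1/s'^2)}$. So you cannot get $\E_{A_y}[h_k(x)^2]\lesssim e^{O(\mu_y^2)}$ uniformly in $k$ by comparing to the continuous Gaussian moment-by-moment. Likewise, the generating-function route for $\bB_{k,y}$ you sketch does not obviously give a uniform-in-$k$ bound with only $e^{O(\mu_y^2)}$ on the right-hand side without additional work.

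The paper sidesteps all of this with one inequality you are missing: the pointwise bound $|h_k(x)|\le e^{x^2/4}$ valid for \emph{all} $k$ and all $x\in\R$ (due to Krasikov). With this in hand, $|\widetilde{\bA}_{k,y}|$ and $|\bB_{k,y}|$ are each at most $\E[e^{X^2/4}]$ under the respective distribution. Since both $\widetilde{A}_y$ and $B_y$ satisfy $\P(|X-\mu_y|>t)\lesssim e^{-t^2/2\sigma^2}$ with $\sigma^2<1$, one gets $\P(|X|>t)\lesssim e^{O(\mu_y^2)}e^{-0.4t^2}$ and hence $\E[e^{X^2/4}]\lesssim e^{O(\mu_y^2)}$ by direct integration. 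This is the key idea that makes the large-$k$ case trivial; no $k$-dependent comparison is needed at all.
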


 For low-degree Hermite polynomials, the closeness between $\bA_{k,y}$ and $\bB_{k,y}$ (and hence $\widetilde{\bA}_{k,y}$ and $\bB_{k,y}$) would be a consequence of \Cref{fact:polynomial-discrete-vs-gaussian}.
For large-degree Hermite polynomials, we use that Hermite polynomials are bounded by $e^{x^2/4}$ whereas the tails of both $\hidDiscT$ and $\hidCont$ decay slower than $e^{\mu_y^2} e^{-x^2/4}$---a consequence of subgaussianity around $\mu_y$.  The desired conclusion then follows from integrating these bounds. 
We now provide the formal details.

\begin{proof}

We first consider the case for large $k$.

\paragraph{Large $k$.}
For large $k$, we shall use the fact that $|h_k(x)| \leq \exp(x^2/4)$ for all $x \in \R$~\cite{Krasikov2004}.
\Cref{lemma:subgaussian-ness} implies that for both $\hidCont_y$ and $\widetilde{A}_y$,
\begin{align*}
\forall t: \qquad P(|x - \mu_y| \geq t)  \leq O(1)\exp(-x^2/2),
\end{align*}
where we use that $\sigma \leq 1$.
Therefore, under the both $X\sim \widetilde{A}_y$ and $X \sim \hidCont_y$, we have that 
\begin{align}
\label{eq:Concentration-Ay}
\P(|X| > t) \leq O(1) \exp(O(\mu_y^2)) \exp(-0.4t^2)\,.    
\end{align}
Indeed for $t \leq 10 \mu_y$, the upper bound is bigger than $1$ and hence holds; for $t \geq 10 \mu_y$,
$\P(|X|>t) \le \P(|X - \mu_y|\geq 0.9t)\lesssim \exp(-0.4t^2)$.

Therefore, we can upper bound $\E[|h_k(X)|]$ for both distributions as follows:
\begin{align*} 
\E[|h_k(X)|] &\leq \E[e^{X^2/4}] \leq 1 + \int_{1}^\infty \P(|X| > 2 \sqrt{\log_e u})du \lesssim 1+\int_{1}^\infty e^{-0.4 \cdot 4 \cdot \log_e u}\\
& \lesssim \exp(O(\mu_y^2))\left(1 +  \int_1^\infty u^{-1.6}du\right) \lesssim \exp(O(\mu_y^2)).
\end{align*} 

\paragraph{Smaller $k$.}
We first define $\bC_{k,y}:= \E_{x \sim A_y}[h_k(x)]$.
Since $\widetilde{A}_y$ is $A_y$ conditioned on $\cE:= \{z: |z| \leq d\}$ and satisfies $\P(\cE) \geq 1 - \tau$ for $\tau \lesssim e^{-\Omega(d)}$ (see \Cref{lemma:subgaussian-ness}),
we have that for any function $g$:
  \begin{align*}
  \left|\E_{\widetilde{A}_y}[g] - \E_{\hidCont_y}[g]\right| \lesssim 2\left|\E_{A_y}[g] - \E_{\hidCont_y}[g]\right| + \tau \E_{\hidCont_y}[g] + \sqrt{\tau\E_{A_y}[g^2] }  \,.
  \end{align*}
  The above inequality follows by noting that the left hand side above is exactly equal to $\frac{\E_{A_y}[g] - \E_{B_y}[g]}{1 - \tau} + \frac{\tau \E_{B_y}[g]}{1 - \tau} 
  + \frac{\E_{A_y}[g \I_\cE]}{1-\tau}$ and then applying Cauchy-Schwarz inequality.
      In our context, the above display equation yields:
  \begin{align}
    \label{eq:hermite-coeff-diff-prequel}
\left|\bB_{k,y} - \bA_{k,y}\right| \leq  2\left|         \bB_{k,y} - \bC_{k,y}\right| + \tau |\bB_{k,y}| + \sqrt{\tau} \sqrt{\E_{A_y}[h_k^2(x)]}.
  \end{align}
  We will now upper bound this difference.  We first claim that  for $\widetilde{\theta_y} = (\theta_y - \mu_y)/\sigma$ and $\widetilde s= s'
    /\sigma $, we have that
    \begin{align}
    \label{eq:hermite-coeff-diff}
         \bB_{k,y} - \bC_{k,y}  = \E_{x \sim \cN(0,1)}[h_k(\sigma  x + \mu_y)] - \E_{x' \sim \NDG{0,1,\widetilde\theta_y,\widetilde s}}[h_k( \sigma x' + \mu_y)]\,.
    \end{align}
To see this, recall that $\bB_{k,y} = \E_{x \sim \hidCont_y}[h_k(x)] = \E_{x \sim \cN(\mu_y, \sigma ^2)} h_k(x)$, which implies that it is equal to $\E_{x \sim \cN(0,1)} h_k(\sigma x + \mu_y)$. For $\bA_{k,y}$, the claim follows analogously from \Cref{fact:discrete-translation}.

\begin{lemma}
    \label{lem:uncentered-gaussian}
    Let $k \in \N$, $q\in \R$, $a \in \R$, $b \in \R$ and $s'' \ll 1$. Let $G \sim \cN(0,1)$ and $Y \sim \NDG{0,1,q,s'}$. 
\begin{itemize}
  \item $\Big|\E[h_k(b + a G)] - \E[h_k(b + a Y)]\Big| \leq \max(1, |b|^k) \max(1,|a|^k)k^{O(k)} e^{-\frac{1}{s''^2}}\,.$ 
  \item $|\E[|h_k(b + a G)|]|^2 \leq \E[|h_k(b + a G)|^2] \leq k^{O(k)}  \max(1, b^{2k})\max(1, a^{2k})$.
  \item $\E[|h_k(b + a Y)|^2] \leq k^{O(k)}  \max(1, b^{2k})\max(1, a^{2k})$.
\end{itemize}
\end{lemma}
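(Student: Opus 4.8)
The plan is to prove the three bounds in a convenient order: I would establish the two second-moment estimates (the last two bullets) first, essentially by brute force, and then deduce the first bullet by combining the second bullet with the polynomial-comparison inequality of \Cref{fact:polynomial-discrete-vs-gaussian}.

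First I would record an elementary pointwise bound on the normalized Hermite polynomials: from the explicit coefficients of $He_k$, the sum of the absolute values of its coefficients is $O(k!)$ and every monomial appearing has degree at most $k$, so $|h_k(x)| \lesssim \sqrt{k!}\,\max(1,|x|^k)$ for all $x \in \R$. Substituting $x = b + aG$ and $x = b + aY$ and expanding $\max(1,|b+aG|^k)^2 \le 2^{O(k)}(1 + |b|^{2k} + |a|^{2k}|G|^{2k})$ (and similarly for $Y$), the second and third bullets reduce to showing $\E_{G\sim\cN(0,1)}[G^{2k}] \le k^{O(k)}$, which is immediate since $\E[G^{2k}] = (2k-1)!!$, and $\E_{Y\sim\NDG{0,1,q,s'}}[Y^{2k}] \le k^{O(k)}$. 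For the latter I would invoke the monomial-comparison part of \Cref{fact:polynomial-discrete-vs-gaussian} for the unnormalized measure $\DG{0,1,q,s'}$, whose total mass is $1 \pm e^{-\Omega(1/s'^2)}$: this gives $\int y^{2k}\,d\DG{0,1,q,s'} = \E[G^{2k}] \pm (2k)!\,(O(s'))^{2k}e^{-\Omega(1/s'^2)}$, and dividing by the mass and using $(O(s'))^{2k}e^{-\Omega(1/s'^2)} \le 1$ (valid for $s'$ below a universal constant) leaves $\E_Y[Y^{2k}] \le k^{O(k)}$. Tracking the $\max(1,b^{2k})\max(1,a^{2k})$ factors through these manipulations then gives the second and third bullets (the inequality $\E[|h_k|]^2 \le \E[|h_k|^2]$ being just Cauchy--Schwarz).

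For the first bullet I would apply the polynomial-comparison inequality of \Cref{fact:polynomial-discrete-vs-gaussian} directly to the degree-$\le k$ polynomial $p(x) := h_k(b+ax)$: this bounds $|\E[h_k(b+aG)] - \E[h_k(b+aY)]|$ by $\sqrt{\E_{G}[p(G)^2]}\cdot k!\,2^{O(k)}\,e^{-\Omega(1/s'^2)}$, and $\sqrt{\E_G[p(G)^2]} = \sqrt{\E_G[h_k(b+aG)^2]}$ is controlled by the second bullet as $k^{O(k)}\max(1,|b|^k)\max(1,|a|^k)$. Finally I would absorb the constant in $e^{-\Omega(1/s'^2)}$ into the stated $e^{-1/s''^2}$; since the lemma is invoked with $s''$ a fixed constant multiple of the spacing, $e^{-c/s'^2} \le e^{-1/s''^2}$ holds and the proof is complete.

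The only point requiring a little care — the closest thing to an obstacle here — is uniformity in $k$ in the bound $\E_{Y}[Y^{2k}] \le k^{O(k)}$: a priori the error term $(2k)!\,(O(s'))^{2k}e^{-\Omega(1/s'^2)}$ coming from \Cref{fact:polynomial-discrete-vs-gaussian} might look large when $k$ is comparable to or exceeds $1/s'$, but $(O(s'))^{2k} \le 1$ for small $s'$ and $(2k)! \le k^{O(k)}$, so the whole error is absorbed into $k^{O(k)}$ regardless of how $k$ compares to $1/s'$. Everything else is routine bookkeeping of the $\max(1,|a|^k)$, $\max(1,|b|^k)$ and $k^{O(k)}$ factors.
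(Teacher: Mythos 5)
Your proof is correct, and it takes a genuinely different (and arguably cleaner) route from the paper for the second and third bullets. The paper establishes the second bullet by writing $h_k^2(b+aG)$ as a $\Theta(k^2)$-term double sum over pairs of monomials, bounding the maximum cross term via Cauchy--Schwarz, and then applying a binomial estimate to $\E[(b+aG)^{2\ell}]$; for the third bullet it applies the polynomial-comparison inequality of \Cref{fact:polynomial-discrete-vs-gaussian} to the degree-$2k$ polynomial $p_k^2$, which forces a fourth-moment calculation $\E[h_k(b+aG)^4]$, rerunning the same expansion with a four-fold H\"older inequality in place of Cauchy--Schwarz. You instead push the pointwise bound $|h_k(x)| \lesssim \sqrt{k!}\,\max(1,|x|^k)$ (which follows from $\sum_\ell |c_\ell| \le \sqrt{k!}\sum_\ell \tfrac{1}{\ell!} \le e\sqrt{k!}$) through both the Gaussian and the discrete-Gaussian expectations, reducing everything to the moment bounds $\E[G^{2k}] \le k^{O(k)}$ and $\E[Y^{2k}] \le k^{O(k)}$; the latter you obtain from the monomial-comparison bullet of \Cref{fact:polynomial-discrete-vs-gaussian}, correctly accounting for the unnormalized mass. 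This replaces the paper's two separate expansions and the fourth-moment computation with a single reusable pointwise estimate, and it avoids applying the polynomial-comparison inequality to a degree-$2k$ polynomial altogether. The proof of the first bullet is identical to the paper's: apply the polynomial-comparison inequality to $p(x) = h_k(b+ax)$ and feed in the second-moment bound. You also correctly flag and handle the spacing-notation mismatch between $s'$ and $s''$ in the statement, and the uniformity-in-$k$ concern you raise at the end is resolved exactly as you say.
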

Applying this result on \Cref{eq:hermite-coeff-diff} with $b = \mu_y$, $a = \sigma  \leq 1$ and $s'' = \widetilde{s} = s'/\sigma  =  s/{\rho \sigma }$ and plugging it in \Cref{eq:hermite-coeff-diff-prequel} in combination with $\tau \lesssim e^{-\Omega(d)}$, we get \Cref{eq:hermite-coefficient-small-k}.

\end{proof}

We now provide the proof of \Cref{lem:uncentered-gaussian}
\begin{proof}
Defining the polynomial $p_k(x) := h_k(b + a x)$, we can apply \Cref{fact:polynomial-discrete-vs-gaussian} to $p_k(\cdot)$ to conclude that the deviation in the first item is at most
\begin{align*}
\sqrt{\E_{G \sim \cN(0,1)}[h_k^2(b + aG)]} k! 2^{O(k)} \exp(- \Omega(1/s^2)).
\end{align*}
Hence, to establish both the first and the second items, it remains to show the upper bound 
$\sqrt{\E_{G \sim \cN(0,1)}[h_k^2(b + aG)]} \lesssim k^{O(k)} \max(1, |b|^k)\max(1, |a|^k)$.
To that effect, we use the explicit form of the Hermite polynomials:
$$h_k(x) := \sqrt{k!}\sum_{\ell=0}^{\lfloor k/2\rfloor]} \frac{(-1)^{\ell}}{\ell! (k-2 \ell)! } \frac{1}{2^\ell} x^{k -2 \ell}\,,$$
which gives the following expression:
\begin{align*}
\E[h_k^2(b + a G)] = k!\E\left[\sum_{\ell=0, \ell'=0}^{\lfloor k/2\rfloor]} \frac{(-1)^{\ell}}{\ell! (k-2 \ell)! } \frac{1}{2^\ell} (b+ aG)^{k -2 \ell} \frac{(-1)^{\ell'}}{\ell'! (k-2 \ell')! } \frac{1}{2^{\ell'}} (b+ aG)^{k -2 \ell'}
\right]\,.
\end{align*}
There are  $\Theta(k^2)$ terms in the expression above and 
by linearity of the expectation,
it suffices to control the maximum term above: 
\begin{align*}
\numberthis\label{eq:hermite-higher-power}
\E[h_k^2(b + a G)] &\leq k^2 k! \max_{\ell \leq k/2, \ell' \leq k/2}\E\left[(b+ aG)^{k -2 \ell}  (b+ aG)^{k -2 \ell'}\right] \\
&\leq k^2 k! \max_{\ell \leq k, \ell'\leq k} \sqrt{\E[(b + aG)^{2 \ell}]}\sqrt{\E[(b + aG)^{2 \ell'}]}\\
&\leq k^2 k! \max_{\ell \leq k} \E[(b + aG)^{2 \ell}] \\
&\leq k^2 k! \max_{\ell \leq k} \E[2^{2\ell}b^{2\ell} + 2^{2\ell} a^{2\ell} G^{2\ell}] \\
&\leq 2^{2k}k^2 k! \max_{\ell \leq k} \E[b^{2\ell} + a^{2\ell} G^{2\ell}] \\
&\leq 2^{2k}k^2 k! \max_{\ell \leq k} \E[b^{2\ell} + (O(\sqrt{k}))^k a^{2\ell}] \\
&\leq k^{O(k)}  \max(1, b^{2k})\max(1, a^{2k}),
\end{align*}
which proves the desired result.

We now focus on the third item. Here, we again apply \Cref{fact:polynomial-discrete-vs-gaussian} but this time to the polynomial $p_k^2$, which would then imply that
\begin{align*}
  \E[|h_k(b+aY)|^2] &\leq \E[|h_k(b+ a G)^2] + \sqrt{\E[|h_k(b+ a G)^4]} (2k)^{O(k)}\exp(-\Omega(1/s^2))\\
  &\leq k^{O(k)} \left(\max(1, b^{2k})\max(1, a^{2k}) + \sqrt{\E[|h_k(b+ a G)^4]}\right)  \,.
\end{align*}
To upper bound $\E[|h_k(b+ a G)^4]$, we can use a similar series of arguments as in \Cref{eq:hermite-higher-power}  to get the desired result, wherein we replace the use of Cauchy-Schwarz inequality with the inequality $\E[X_1X_2X_3X_4] \leq \prod_{i=1}^4 (\E[X_i^4])^{1/4}$.
\end{proof}

Thus, \Cref{lem:Fixed-y-hermite-coefficients} implies that (i) for small $k$, the difference is inverse super-polynomially small if $\rho^2 \gg s^2 \polylog(d) \asymp \alpha^2 \polylog(d)$ and (ii) it stays bounded by $O(1)e^{L^2}$ for $|y|\leq L$ for any $k$.

\subsubsection{Small Magnitudes of Random Projections} %

We now turn to computing high-probability estimates on $\E_{y \sim R}|\langle v^{\otimes k}, \bT_{k,y}\rangle|$. Here, we reparameterize the arguments in \cite{DiaKRS23} and obtain the following result, whose proof is deferred to \Cref{app:concentration-of-tensor-dot-unit-vectors}.

\begin{restatable}{proposition}{PropConcVTensor}
    \label{prop:concentration-of-tensor-dot-unit-vectors}
    
    Let $\{\bT_{k,y}\}_{k \in \N, y \in \R}$ be tensors, where for all  $k\in\N$ and $y \in \R$, $\bT_{k,y}$ is a degree-$k$ tensor with $\|\bT_{k,y}\|_2\leq 1$. Let $ t\in \N$ be arbitrary .
    Then for any
     $\delta \in (0,1)$, it holds with probability $1-\delta$ over a random  unit vector $v$ that
    \begin{align*}
        \E_y\left[\littlesum_{k=1}^t | \langle v^{\otimes k}, \bT_{k,y}\rangle|\right] \lesssim t\quad \text{and} \quad \littlesum_{k > t+1}^\infty \E_y\left[|\langle v^{\otimes k}, \bT_{k}\rangle|\right] \lesssim d^{O(1)} \Big(\frac{t \log{\frac{t}{\delta}}}{d}\Big)^{t/4}  +  d^{O(1)}\cdot\tfrac{1}{\delta}e^{- \frac{Cd}{\log\frac{d}{\delta}}}\,.
    \end{align*}

\end{restatable}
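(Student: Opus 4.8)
The plan is to prove the two estimates separately. The first is immediate: since $\|v^{\otimes k}\|_2=\|v\|_2^k=1$, Cauchy--Schwarz gives $|\langle v^{\otimes k},\bT_{k,y}\rangle|\le\|v^{\otimes k}\|_2\|\bT_{k,y}\|_2\le 1$ for all $k,y$, so $\sum_{k=1}^t|\langle v^{\otimes k},\bT_{k,y}\rangle|\le t$ pointwise and averaging over $y$ keeps this; in fact this bound holds for \emph{every} unit vector $v$, with no loss.

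For the tail I would first apply Cauchy--Schwarz to the average over $y$: $\E_y[|\langle v^{\otimes k},\bT_{k,y}\rangle|]\le\sqrt{\E_y[\langle v^{\otimes k},\bT_{k,y}\rangle^2]}=\sqrt{\langle v^{\otimes 2k},\bS_{2k}\rangle}$, where $\bS_{2k}:=\E_y[\bT_{k,y}\otimes\bT_{k,y}]$ is a degree-$2k$ tensor with $\|\bS_{2k}\|_2\le\E_y[\|\bT_{k,y}\|_2^2]\le 1$ and $\langle v^{\otimes 2k},\bS_{2k}\rangle=\E_y[\langle v^{\otimes k},\bT_{k,y}\rangle^2]\ge 0$. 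So it suffices to bound the nonnegative numbers $\langle v^{\otimes 2k},\bS_{2k}\rangle$ with high probability over $v$, uniformly in $k>t+1$, so that $\sum_{k>t+1}\sqrt{\langle v^{\otimes 2k},\bS_{2k}\rangle}$ is dominated by the claimed bound. The engine is that all moments of $\langle v^{\otimes 2k},\bS_{2k}\rangle$ over a uniform unit $v$ reduce to inner-product moments on the sphere: for every $q\in\N$,
\[
\E_v\big[\langle v^{\otimes 2k},\bS_{2k}\rangle^q\big]=\big\langle\E_v[v^{\otimes 2kq}],\,\bS_{2k}^{\otimes q}\big\rangle\le\big\|\E_v[v^{\otimes 2kq}]\big\|_2\,\|\bS_{2k}\|_2^q\le\sqrt{\E_{v,v'}[\langle v,v'\rangle^{2kq}]},
\]
using $\|\E_v[v^{\otimes 2m}]\|_2^2=\E_{v,v'}[\langle v,v'\rangle^{2m}]$ for independent uniform unit $v,v'$, together with the exact value $\E_{v,v'}[\langle v,v'\rangle^{2m}]=\prod_{j=0}^{m-1}\frac{2j+1}{d+2j}$ (since $\langle v,v'\rangle^2\sim\mathrm{Beta}(1/2,(d-1)/2)$). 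I would bound this product by $(2m/d)^m$ and by $2^{-m}$ when $m\le d/2$, and by $2^{-\Omega(d)}(d/m)^{\Omega(d)}$ when $m\ge d$.

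Given the moment bound, the rest is Markov plus a union bound over the countably many $k$, with the moment order tuned to $k$. Allocating failure probability $\delta_k\asymp\delta/k^2$ (so $\sum_k\delta_k\le\delta$) and choosing $q\asymp\log(k/\delta)/k$ in the range of moderate $k$ — where $kq\asymp\log(k/\delta)$ sits in the small regime of the Pochhammer bound — the $k$-factors cancel and give $\langle v^{\otimes 2k},\bS_{2k}\rangle\lesssim(C\log(k/\delta)/d)^{k/2}$, hence $\E_y[|\langle v^{\otimes k},\bT_{k,y}\rangle|]\lesssim(C\log(k/\delta)/d)^{k/4}$; the resulting series over $k>t+1$ is geometric-type, dominated by $k=t+2$, and sums to the first term $d^{O(1)}(t\log(t/\delta)/d)^{t/4}$. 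Once $k$ exceeds a threshold of order $d/\polylog(d/\delta)$ this per-$k$ bound is no longer useful, so for those $k$ I would argue crudely using the first moment only: $\E_v\big[\sum_{k\ \text{large}}\E_y[|\langle v^{\otimes k},\bT_{k,y}\rangle|]\big]\le\sum_{k\ \text{large}}\sqrt{\prod_{j=0}^{k-1}\frac{2j+1}{d+2j}}\le d^{O(1)}e^{-\Omega(d/\log(d/\delta))}$, and one Markov step with failure probability $\delta/2$ gives the second term $d^{O(1)}\tfrac1\delta e^{-\Omega(d/\log(d/\delta))}$. A union bound over the two ranges and all $k$ then finishes the proof; this is the reparameterization of the Hermite-analytic estimates of~\cite{DiaKRS23}.

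The step I expect to be the main obstacle is the moderate-$k$ analysis. The quantity $\langle v^{\otimes 2k},\bS_{2k}\rangle$ has a heavy upper tail in $v$ — its mean over $v$ can be exponentially larger than its typical value (already for the rank-one choice $\bS_{2k}=u^{\otimes 2k}$, where $\langle v,u\rangle^{2k}$ is of order $d^{-k}$ typically but has mean of order $(2k/(ed))^k$) — so a first-moment Markov bound would cost a lossy $1/\delta$ \emph{inside} the $(\cdot)^{t/4}$, whereas the statement only permits $\log(k/\delta)$ there. Passing to high moments $\E_v[(\cdot)^q]$, all still controlled by the same Pochhammer ratio $\prod_{j=0}^{kq-1}\frac{2j+1}{d+2j}$, and optimizing $q\asymp\log(k/\delta)/k$ is exactly what converts the $1/\delta$ into the much milder $\log(k/\delta)$; the remaining effort is making the Pochhammer estimates tight across the regimes $kq\le d$ and $kq\ge d$, and checking that the split between the moderate and large ranges of $k$ always lands the total within the two claimed terms regardless of how $t$ compares to $d/\polylog(d/\delta)$.
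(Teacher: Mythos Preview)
Your overall strategy is correct and mirrors the paper's proof, which also bounds moments of $W_k := \E_y[|\langle v^{\otimes k},\bT_{k,y}\rangle|]$ over random $v$ (quoting \cite{DiaKRS23}), then applies Markov with failure probability $\asymp\delta/k^2$ and a union bound, splitting into moderate and large $k$. Your detour through the averaged tensor $\bS_{2k}=\E_y[\bT_{k,y}\otimes\bT_{k,y}]$ is a minor repackaging that leads to the same moment estimates.

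There is, however, a real slip in the moderate-$k$ analysis. You write ``choosing $q\asymp\log(k/\delta)/k$ \ldots\ the $k$-factors cancel and give $\langle v^{\otimes 2k},\bS_{2k}\rangle\lesssim(C\log(k/\delta)/d)^{k/2}$''. This cancellation does not happen: your identity $\E_v[X^q]=\langle\E_v[v^{\otimes 2kq}],\bS_{2k}^{\otimes q}\rangle$ requires $q\in\N$, and for $k$ in the range $[\log(k/\delta),\,d/\log(k/\delta)]$ --- precisely the moderate range --- your choice $q=\log(k/\delta)/k$ is below $1$. Rounding up to $q=1$ kills the cancellation and, as you yourself flag, leaves a lossy $1/\delta_k$ outside the exponent. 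The correct choice is $q=\lceil\log(1/\delta_k)\rceil\in\N$: then $\delta_k^{-1/q}=O(1)$ while $\|X\|_{L_q}\le(Ckq/d)^{k/2}$, so $W_k\le\sqrt{X}\lesssim(Ck\log(k/\delta)/d)^{k/4}$ --- with the factor $k$ still inside. This is exactly the paper's bound (their quoted lemma gives $\|W_k\|_{L_p}\le(Cpk/d)^{k/4}$ with $p\asymp\log(k/\delta)$), and it is also what the proposition asserts, since the dominant term $(t\log(t/\delta)/d)^{t/4}$ \emph{does} carry the factor $t$. Summing the geometric-type series over $k>t$ then yields the first term.

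A second minor slip: your first-moment bound $\E_v[W_k]\le\sqrt{\prod_{j=0}^{k-1}\frac{2j+1}{d+2j}}$ is too tight by a fourth root; your own chain only gives $\E_v[W_k]\le\sqrt{\E_v[X_k]}\le\|\E_v[v^{\otimes 2k}]\|_2^{1/2}=\bigl(\prod\cdots\bigr)^{1/4}$. The weaker estimate still suffices for the large-$k$ tail, since $\prod_{j=0}^{k-1}\frac{2j+1}{d+2j}\le(d/(d+2k))^{\Omega(d)}$ decays fast enough in $k$ for the series to converge to $e^{-\Omega(d)}$, and one global Markov step at level $\delta$ then gives the second claimed term.
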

The result above is applicable to our setting because for each $y\in \R$: 
$\sum_{k=1}^\infty \|\bT_{k,y}\|_2^2 =  \|\widetilde{f}_y\|_{L_2(\cN(0,\bI_d)}^2 \leq 1$, where the equality uses the orthonormality of Hermite tensors under $\cN(0,\bI_d)$ and the inequality uses that $\widetilde{f}$ is bounded by $1$.

\subsubsection{Proof of \Cref{prop:distinguish-discrete-vs-cont}}

We are now ready to present the proof  of \Cref{prop:distinguish-discrete-vs-cont}.

\begin{proof}[Proof of \Cref{prop:distinguish-discrete-vs-cont}]
Combining \Cref{prop:decomposition-disc-vs-cont-main-body} with \Cref{lem:Fixed-y-hermite-coefficients} and  \Cref{prop:concentration-of-tensor-dot-unit-vectors} and the fact that $|\mu_y| \leq L$ for $L \geq 1$, we obtain that for any $t \in \N$ and $\ell \in \N$ with probability at least $1 - d^{-\log^2 d}$,
\begingroup
  \setlength{\jot}{3pt}
  \begin{align*}
    \bigl|\E_{\distCont_v}[f] - \E_{\distDisc_v}[f]\bigr|
    &\lesssim
      e^{-\Omega(L^2)} + e^{-\Omega(d)}
      + L^t t^{O(t)}\bigl(e^{-\Omega(\rho^2/s^2)}+e^{-\Omega(d)}\bigr)\,t
    \\[-0.2ex]
    &\quad
      + e^{cL^2}(dt)^{O(1)}\Bigl(\tfrac{t\log t\log^3 d}{d}\Bigr)^{t/4}
      + e^{cL^2}\,e^{-d/\polylog(d)}
      + \bigl|\E_{\distCont_v}[\widetilde f^{>\ell}]
      - \E_{\distDiscT_v}[\widetilde f^{>\ell}]\bigr|.
  \end{align*}
\endgroup
For $L = \log^5 d$, $t  = L^6$ and $\rho = s t^2$, the sum of all but the last term is at most $O(e^{-L^2}) \leq d^{-\log^2 (d/\alpha)}$.
For the last term, we show  in \Cref{app:handling_widetilde_f} that taking $\ell$ large enough suffices---this argument uses  \Cref{fact:Hermite} and the truncation of $\hidDisc_y$ as per \cite{DiaKRS23}.
\end{proof}

\subsection{Proof of \Cref{thm:sq-hardness-discrete-gaussian}}
\label{sub:proof-complete-main-testing}

We are now ready to state and prove the formal version of \Cref{thm:sq-hardness-discrete-gaussian}.

\begin{theorem}[SQ Hardness of \Cref{def:lin-regr-oblivious}]
\label{thm:sq-hardness-discrete-gaussian-formal}
Consider the testing problem in \Cref{def:lin-regr-oblivious} with $E = \NDG{0,\sigma,0,s}$ for $s \asymp \alpha$ and $\sigma = 1$. 
Furthermore, assume that
\begin{itemize}
    \item $\alpha \gg \frac{1}{d^{\polylog(d)}}$ (i.e., it is not too tiny)
    \item  $\rho^2 \asymp \alpha^2 \polylog(d/\alpha)$ and $\rho \leq \rho_0$ for a sufficiently small absolute constant $\rho_0$.
\end{itemize}
\noindent Then we have the following guarantees:
\begin{enumerate}
    \item $\P_{E}(z = 0) \geq \alpha$ (i.e., it is a valid instance).
    \item Any SQ algorithm that solves the testing problem with probability at least $2/3$ either uses $q \gtrsim q_0 := d^{\log^2(d/\alpha)}$ many queries or uses a single query which is as powerful as $\VSTAT(m)$ for $m \gtrsim \frac{\sqrt{d}}{\alpha^2 \polylog(d,1/\alpha)}$.
\end{enumerate}
 
 \end{theorem}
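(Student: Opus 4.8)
The plan is to prove the two items separately: the first is a one-line computation, while the second assembles the machinery of \Cref{sec:ngca-continuous-noise} and \Cref{sec:indist}. For Item~1, \Cref{fact:polynomial-discrete-vs-gaussian} (after rescaling, using $\sigma\le 1$) gives that the positive measure $\DG{0,\sigma,0,s}$ has total mass $1\pm e^{-\Omega(1/s^2)}$, and by definition it puts mass $s\,\phi_{0,\sigma}(0)=s/(\sigma\sqrt{2\pi})$ on the atom $0$; hence $\P_{E}(z=0)=\tfrac{s}{\sigma\sqrt{2\pi}}\,(1\pm e^{-\Omega(1/s^2)})\ge\alpha$ provided the hidden constant in $s\asymp\alpha$ is chosen at least $\sqrt{2\pi}$.

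For Item~2, I would first use \Cref{lem:A_y-for-lin-regression} to identify \Cref{def:lin-regr-oblivious} (with this $E$ and parameters as in \Cref{parameters}) with the conditional NGCA instance of \Cref{def:ngca-lin-regr-discrete}, whose null and alternates are $\distNull$ and $\{\distDisc_v\}_{v\in\cS^{d-1}}$. By \Cref{eq:sq-mid-step-intro} (whose proof in fact yields the same conclusion for any success probability exceeding $1/2$ by a constant, in particular $2/3$), it then suffices to produce $m\asymp\sqrt d/(\alpha^2\polylog(d,1/\alpha))$ and $q_0=d^{\Omega(\log^2(d/\alpha))}$ so that, for \emph{every} query $f:\R^d\times\R\to[0,1]$,
\[
\P_{v\sim\cS^{d-1}}(\cE_{f,v,m})\le 1/q_0 .
\]

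To establish this, fix $f$ and split $|\E_{\distDisc_v}[f]-\E_{\distNull}[f]|\le |\E_{\distDisc_v}[f]-\E_{\distCont_v}[f]|+|\E_{\distCont_v}[f]-\E_{\distNull}[f]|$. The second term is controlled by \Cref{eq:goal-continuous-Gaussian}, i.e.\ the combination of \Cref{lem:sq-hardness-continuous-noise} and \Cref{prop:SDA-upper-bound}: taking $m$ a small constant times $\sqrt d/(\rho^2\log^4 d)\asymp\sqrt d/(\alpha^2\polylog)$, this term lies below the $\VSTAT(m)$ threshold except on a $v$-event of probability $d^{-\omega(\log^2 d)}$. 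The first term is controlled by \Cref{prop:distinguish-discrete-vs-cont}, which applies since $\alpha\gg d^{-\polylog(d)}$ and, once the $\polylog$ in $\rho^2\asymp\alpha^2\polylog(d/\alpha)$ is taken large enough, $\rho^2\ge s^2\log^{C}(d/\alpha)$; it gives $|\E_{\distDisc_v}[f]-\E_{\distCont_v}[f]|\lesssim(\alpha/d)^{\log^2(d/\alpha)}$ outside a $v$-event whose probability can be driven below $1/(2q_0)$ (the proof of \Cref{prop:distinguish-discrete-vs-cont} pays only polylogarithmically in the exponent when the confidence parameter $\delta$ in \Cref{prop:concentration-of-tensor-dot-unit-vectors} is shrunk). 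Since the $\VSTAT(m)$ threshold is at least $1/m\gtrsim\rho^2/\sqrt d$, which under $\alpha\gg d^{-\polylog(d)}$ dominates the super-polynomially small quantity $(\alpha/d)^{\log^2(d/\alpha)}$, a union bound gives: except with probability $\le 1/q_0$ over $v$, the gap $|\E_{\distDisc_v}[f]-\E_{\distNull}[f]|$ is at most twice the $\VSTAT(m)$ threshold, hence below the $\VSTAT(m')$ threshold for a suitable $m'=\Theta(m)$. This yields $\P_v(\cE_{f,v,m'})\le 1/q_0$ for all $f$, and \Cref{eq:sq-mid-step-intro} then forces any $2/3$-successful SQ algorithm to use $\Omega(q_0)=d^{\Omega(\log^2(d/\alpha))}$ queries or a single query as strong as $\VSTAT(m'+1)$ with $m'+1\asymp\sqrt d/(\alpha^2\polylog(d,1/\alpha))$, which is the claim.

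Given the results of \Cref{sec:ngca-continuous-noise} and \Cref{sec:indist}, the main obstacle here is purely parameter bookkeeping: the $\polylog$ in $\rho^2\asymp\alpha^2\polylog(d/\alpha)$ and the internal parameters $L,t,\ell,\delta$ of \Cref{prop:distinguish-discrete-vs-cont} must be chosen simultaneously so that the hypothesis $\rho^2\ge s^2\log^C(d/\alpha)$ holds, the two failure events sum to at most $1/q_0$ with $q_0=d^{\Omega(\log^2(d/\alpha))}$, the discrete-versus-continuous error $(\alpha/d)^{\log^2(d/\alpha)}$ stays below the $\VSTAT(m)$ threshold, and the resulting $m$ is still $\widetilde\Omega(\sqrt d/\alpha^2)$. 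All the substantive analytic content --- the $\SDA$ lower bound for continuous Gaussian contamination, the Hermite-coefficient comparison of \Cref{lem:Fixed-y-hermite-coefficients}, and the tensor-projection concentration of \Cref{prop:concentration-of-tensor-dot-unit-vectors}, together with the truncation handling of $\widetilde f^{>\ell}$ --- already lives in the earlier results.
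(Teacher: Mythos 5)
Your proposal follows essentially the same route as the paper: identify the testing problem with the conditional NGCA instance of \Cref{def:ngca-lin-regr-discrete}, reduce via \Cref{eq:sq-mid-step-intro} to bounding $\P_v(\cE_{f,v,m})$ for every query $f$, and split $|\E_{\distDisc_v}[f]-\E_{P}[f]|$ by triangle inequality through the continuous surrogate $\distCont_v$, invoking \Cref{lem:sq-hardness-continuous-noise} (via \Cref{prop:SDA-upper-bound}) for one piece and \Cref{prop:distinguish-discrete-vs-cont} for the other. This is the paper's argument.

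One step in your write-up is more informal than it should be. You summarize the combination of the two bounds as ``the gap is at most twice the $\VSTAT(m)$ threshold, hence below the $\VSTAT(m')$ threshold for a suitable $m'=\Theta(m)$.'' This glosses over the fact that the $\VSTAT$ success threshold in \Cref{def:notion-of-success} depends on the expectations of $f$ under \emph{both} distributions being compared: the threshold you bounded via \Cref{lem:sq-hardness-continuous-noise} involves $\E_{\distCont_v}[f]$, while the threshold you need to violate for $\cE_{f,v,m}$ involves $\E_{\distDisc_v}[f]$. One has to verify that replacing one by the other only changes the threshold by a negligible amount; the paper does this explicitly in \Cref{claim:triangle-ineq} by noting that $|\E_{\distDisc_v}[f]-\E_{\distCont_v}[f]|\le\tau=O(1/m^2)$ implies $|\sqrt{b(1-b)}-\sqrt{b'(1-b')}|=O(\sqrt\tau)=O(1/m)$ for the two variance terms, and then carefully compares cases depending on whether the $\min$ in the threshold is attained by $1/m$ or by the Bernoulli-variance term. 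Your input (that the discrete-vs-continuous gap is super-polynomially small in $m$) is exactly the right hypothesis for this, so no new idea is needed, but the step is a genuine lemma rather than a one-liner. Aside from that, your treatment of Item~1 and the parameter bookkeeping (adjusting $\delta$ in \Cref{prop:concentration-of-tensor-dot-unit-vectors}, choosing the $\polylog$ in $\rho^2\asymp\alpha^2\polylog(d/\alpha)$ large enough, and noting that the proof of \Cref{eq:sq-mid-step-intro} extends to success probability $2/3$ with adjusted constants) matches the paper's intent.
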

\begin{proof}
Since $\sigma \geq 1/2$, we get that $\P(z = 0) \geq \alpha$ (recall that $\P_{Z \sim \NDG{0,\sigma,0,s}}(z = 0) = \Theta(s/\sigma)$), which satisfies the first claim of \Cref{thm:sq-hardness-discrete-gaussian-formal}.

To establish the second claim about the SQ complexity, using \Cref{prop:sq-lower-bound-intro},
it suffices to show that the probability of success of $f$ on distinguishing $\distDisc_v$ and $P$ with $m$ simulation complexity is at most $1/q_0$.
Recall that the success event $\cE_{f,v,m}$ is defined as the following event:
\begin{align}
\label{eq:recall-success}
|\E_{\distDisc_v}[f] - &\E_{P}[f] | \geq \max\Big(\frac{1}{m}, \min\Big(\sqrt{\frac{(\E_P[f])(1 - \E_P[f])}{m}}, \sqrt{\frac{(\E_{\distDisc_v}[f])(1 - \E_{\distDisc_v}[f])}{m}} \Big)\Big)\,,
\end{align}
and our goal is to show that for any fixed bounded query $f: \cZ \to [0,1]$, we have $\P_{v \sim \cS^{d-1}}\left[ \cE_{f,v,m}\right] \leq \frac{1}{q}$ for $q  \asymp d^{\log^2(d/\alpha)}$ and $m  \gtrsim m_0 := \frac{\sigma^2 \sqrt{d}}{\rho^2 \polylog(d/\alpha)}$.

We now define the following events:
\begin{itemize}
  \item  First, $\cE'_{f,v,m}$ is defined as: $\big|\E_{\distDisc_v}[f(x,y)] - \E_{\distCont_v}[f(x,y)] \big| 
\geq \frac{1}{4m^2}$.
\item Next, the event $\cE''_{f,v,m}$ is defined as: for a large constant $C$ (which can be deduced from the proof of \Cref{claim:triangle-ineq}), 
\begin{align*}
\big|\E_{\distCont_v}[f] - &\E_{P}[f] \big|  
&\geq \max\Big(\frac{1}{Cm}, \min\Big(\sqrt{\frac{(\E_P[f])(1 - \E_P[f])}{Cm}}, \sqrt{\frac{(\E_{\distCont_v}[f])(1 - \E_{\distCont_v}[f])}{Cm}} \Big)\Big)\,.
\end{align*}
\end{itemize}
Next, we show in  \Cref{claim:triangle-ineq} that $\cE_{f,v,m} \subset \cE_{f,v,m}' \cup \cE''_{f,v,m}$.
By the union bound and \Cref{claim:triangle-ineq}, it suffices to establish that the probabilities of these events individually is at most $\frac{1}{2q}$.
\begin{itemize}
  \item ($\cE'_{f,v,m}$)  \Cref{prop:distinguish-discrete-vs-cont} implies the desired bound for any $m \leq (d/\alpha)^{\log^2(d/\alpha)}$ and $q\leq d^{\log^2(d/\alpha)}$.

  \item ($\cE'_{f,v,m}$) This inequality was established in \Cref{lem:sq-hardness-continuous-noise} for any $m \lesssim m_0$ with $m_0 \asymp \frac{\sigma^2 \sqrt{d}}{\rho^2 \sqrt{\log(1/q)}}$.
Taking $q = d^{\log^2(d/\alpha)}$ and $\sigma= \Theta(1)$ leads to $m_0 \asymp \frac{\sqrt{d}}{\rho^2 \polylog(d/\alpha)}$.

\end{itemize}
This completes the proof of \Cref{thm:sq-hardness-discrete-gaussian-formal}.
\end{proof}

We now provide the statement and the proof of \Cref{claim:triangle-ineq}.
\begin{claim}
\label{claim:triangle-ineq}
We have that $\cE_{f,v,m} \subset \cE_{f,v,m}' \cup \cE''_{f,v,m}$.
\end{claim}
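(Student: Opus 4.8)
The plan is to establish the inclusion directly: we show that if $\cE_{f,v,m}$ holds and $\cE'_{f,v,m}$ fails, then $\cE''_{f,v,m}$ must hold. To streamline the bookkeeping, write $a := \E_{\distNull}[f]$, $b := \E_{\distDisc_v}[f]$, $c := \E_{\distCont_v}[f]$, and abbreviate the $\VSTAT(M)$ detection threshold at biases $p,q$ by $\tau(p,q;M) := \max\bigl\{\tfrac1M,\, \min\bigl\{\sqrt{p(1-p)/M},\, \sqrt{q(1-q)/M}\bigr\}\bigr\}$, so that (by \Cref{def:notion-of-success}) $\cE_{f,v,m}$ reads $|a-b|\ge\tau(a,b;m)$, the failure of $\cE'_{f,v,m}$ reads $|b-c|<\tfrac{1}{4m^2}$, and $\cE''_{f,v,m}$ reads $|a-c|\ge\tau(a,c;Cm)$. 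Note that $m\in\N$, so $m\ge 1$, and that $\tau(\cdot,\cdot;m)\ge 1/m$ always. The skeleton of the argument is the triangle inequality $|a-c|\ge|a-b|-|b-c| > \tau(a,b;m) - \tfrac{1}{4m^2} \ge \tfrac34\,\tau(a,b;m)$, where the final step uses $\tfrac{1}{4m^2}\le\tfrac{1}{4m}=\tfrac14\cdot\tfrac1m\le\tfrac14\tau(a,b;m)$.

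It then suffices to show $\tau(a,c;Cm)\le\tfrac34\,\tau(a,b;m)$ for a suitable absolute constant $C$; we expect $C=4$ to work, and this is the one place that needs a short computation. The $\tfrac{1}{Cm}$ term of $\tau(a,c;Cm)$ is at most $\tfrac{3}{4m}\le\tfrac34\tau(a,b;m)$ whenever $C\ge\tfrac43$. For the other term, I would use that $x\mapsto x(1-x)$ is $1$-Lipschitz on $[0,1]$ together with the failure of $\cE'_{f,v,m}$ to get $c(1-c)\le b(1-b)+\tfrac{1}{4m^2}$, hence $\sqrt{c(1-c)}\le\sqrt{b(1-b)}+\tfrac{1}{2m}$ by subadditivity of the square root, and then the elementary inequality $\min\bigl\{p,q+r\bigr\}\le\min\bigl\{p,q\bigr\}+r$ (for $r\ge0$) to get $\min\bigl\{\sqrt{a(1-a)},\sqrt{c(1-c)}\bigr\}\le\min\bigl\{\sqrt{a(1-a)},\sqrt{b(1-b)}\bigr\}+\tfrac{1}{2m}$. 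Dividing by $\sqrt{Cm}$ and using both $\tfrac{1}{\sqrt m}\min\bigl\{\sqrt{a(1-a)},\sqrt{b(1-b)}\bigr\}\le\tau(a,b;m)$ (immediate from the definition) and $\tfrac{1}{2m\sqrt{Cm}}\le\tfrac{1}{2\sqrt C}\cdot\tfrac1m\le\tfrac{1}{2\sqrt C}\tau(a,b;m)$ (since $m\ge1$) then gives $\tfrac{1}{\sqrt{Cm}}\min\bigl\{\sqrt{a(1-a)},\sqrt{c(1-c)}\bigr\}\le\tfrac{3}{2\sqrt C}\tau(a,b;m)$, which is $\le\tfrac34\tau(a,b;m)$ for $C\ge4$.

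Combining, with $C=4$ we obtain $|a-c|>\tfrac34\tau(a,b;m)\ge\tau(a,c;4m)$, i.e.\ $\cE''_{f,v,m}$ holds, which proves $\cE_{f,v,m}\subseteq\cE'_{f,v,m}\cup\cE''_{f,v,m}$. I do not foresee a real obstacle — the claim is an elementary, purely deterministic comparison of VSTAT thresholds — but the point to be careful about is that the inner $\min$ defining $\tau$ does not interact cleanly with the additive perturbation $\tfrac{1}{2m}$ passed from $b(1-b)$ to $c(1-c)$; this is precisely why one needs the slack factor $\tfrac34$ coming out of the triangle-inequality step, absorbed by taking $C$ a large enough constant.
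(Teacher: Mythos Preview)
Your proposal is correct and follows essentially the same approach as the paper: both prove the inclusion via the triangle inequality together with the $1$-Lipschitzness of $x\mapsto x(1-x)$ on $[0,1]$ and subadditivity of $\sqrt{\cdot}$ to transfer the $\VSTAT$ threshold from $(a,c)$ to $(a,b)$. The paper argues the contrapositive and is somewhat informal about constants, whereas your direct argument is cleaner and pins down $C=4$ explicitly; otherwise the two proofs are the same.
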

\begin{proof}[Proof of \Cref{claim:triangle-ineq}]
 Indeed, we have that
\begin{align*}
\big|\E_{Q_v}[f(x,y)] - &\E_{P}[f(x,y)] \big| \\
&\leq \big|\E_{Q_v}[f(x,y)] - \E_{Q'_v}[f(x,y)] \big| + \big|\E_{Q'_v}[f(x,y)] - \E_{P}[f(x,y)] \big| \\
&\leq \frac{1}{4m^2} + \max\left(\frac{1}{Cm}, \min\left(\sqrt{\frac{(\E_P[f])(1 - \E_P[f])}{Cm}}, \sqrt{\frac{(\E_{\distCont_v}[f])(1 - \E_{\distCont_v}[f])}{Cm}} \right)\right)\,.
\end{align*}
Observe that on $\cE_{f,v,m}'$, $|\E_{\distCont_v}[f] - \E_{\distCont_v}[f]| \leq \tau$ for $\tau = O(1/m^2)$.
Since the expectations are close, the standard deviations are also close:
$|\sqrt{(\E_{\distCont_v}[f])(1 - \E_{\distCont_v}[f])} - (\E_{Q_v}[f])(1 - \E_{Q_v}[f])| = O (\sqrt{\tau})$.
Therefore, the second term above is at most
\begin{align*}
\max\left(\frac{1}{Cm}, \min\left(\sqrt{\frac{(\E_P[f])(1 - \E_P[f])}{Cm}}, \sqrt{\frac{(\E_{Q_v}[f])(1 - \E_{Q_v}[f])}{Cm}}  \right)\right) + \sqrt{O(\tau)}\,.
\end{align*}
Since $\tau \gtrsim 1/m^2$,
the overall term is at most 
\begin{align*}
\frac{1}{Cm} + \max\left(\frac{1}{Cm}, \min\left(\sqrt{\frac{(\E_P[f])(1 - \E_P[f])}{Cm}}, \sqrt{\frac{(\E_{Q_v}[f])(1 - \E_{Q_v}[f])}{Cm}}  \right)\right)\,.
\end{align*}
We now claim that this is less than the threshold for $\cE_{f,v,m}$ in \Cref{eq:recall-success}.
Towards that goal, define $a = \sqrt{\E_P[f]\cdot \E_P[1 - f]}$ and $b$ for the corresponding term with $Q_v$.
Consider the case when $\min(a, b)/\sqrt{Cm} \leq 1/{Cm}$.
Then the left hand side above is $\frac{2}{Cm}$, which is less than the quantity in $\cE_{f,v,m}$, which is at least $1/m$.
Suppose now that $\min(a,b) \geq 1/(Cm)$.
Then the term above is at most $\frac{1}{Cm} + \frac{\min(a,b)}{\sqrt{Cm}} \leq 2\frac{\min(a,b)}{Cm}$,
which is less than the quantity in $\cE_{f,v,m}$, which is at least $\frac{\min(a,b)}{m}$.

Thus, we have shown that $\cE_{f,v,m} \subset \cE'_{f,v,m} \cup \cE''_{f,v,m}$.

\end{proof}

\section{Conclusions and Open Problems} \label{sec:conc}
{In this work, we studied the fundamental 
problem of noiseless linear regression under Gaussian marginals 
with additive oblivious contamination. 
Our main result is an information-computation tradeoff
for SQ algorithms, suggesting that efficient learners 
require sample complexity at least quadratic in $1/\alpha$, 
where $\alpha$ is the fraction of inliers, while 
linear dependence in  $1/\alpha$ 
information-theoretically suffices. 
An immediate open problem concerns the dependence on $d$ 
in the lower bound. Specifically, it is a plausible conjecture that there exists a lower bound of $\Omega(d/\alpha^2)$ on 
the computational sample complexity of the problem (thus, 
exactly matching the sample complexity of known algorithms). 
We note that such a lower bound would require 
a new hardness construction, as our hard testing instance is 
efficiently solvable with $O(d^{1/2}/\alpha^2)$ samples. }

\printbibliography

\appendix
\newpage
\appendix

\begin{center}
    \LARGE \textbf{Appendix}
\end{center}

The Appendix is organized as follows:
\Cref{app:prelims} contains additional preliminaries and background on SQ algorithms.
\Cref{app:proof-deferred} contains proofs deferred from \Cref{sec:generalized_SQ_lower_bound}.
\Cref{app:estimation_is_harder_than_testing} gives a computationally-efficient reduction from testing to estimation.
\Cref{app:inefficient-sq} gives an inefficient SQ algorithm that uses $\VSTAT{}$ oracle with simulation complexity linear in $\tfrac{1}{\alpha}$, whereas \Cref{app:efficient-sq} gives an efficient SQ algorithm that uses a $\VSTAT{}$ oracle with simulation complexity quadratic in $\tfrac{1}{\alpha}$.

\section{Proofs Deferred from Preliminaries}
\label{app:prelims}

\subsection{Useful Technical Facts}

We say a random variable $X$ or a distribution $P$ is $\sigma$-subgaussian if $\P(|X|\geq t) \leq 2\exp(-c t^2/\sigma^2)$ for all $t > 0$; here $c$ is an absolute constant.

\FactSubgaussian*
\begin{proof}
We use expansion of $e^{x}$ and the fact that $\E[|X|^p] \leq (C\sigma \sqrt{p})^p$ for a $\sigma$-subgaussian distribution~\cite[Proposition 2.5.2]{Vershynin18} to get
\begin{align*}
    \E[e^{\frac{aX^2}{\sigma^2}} - 1] =\E\left[ \sum_{i=1}^\infty a^i \frac{X^{2i}}{\sigma^{2i}i!}\right] \leq \sum_{i=1}^\infty a^i \frac{ (c\sigma \sqrt{2i})^{2i}}{\sigma^{2i}i!} \leq \sum_{i=1}^\infty \frac{ ( \sqrt{e a}c \sqrt{2i})^{2i}}{i^i} \leq \sum_{i=1}^\infty (\sqrt{ea}c \sqrt{2})^{2i},  
\end{align*}
which is of order $O(a)$ for $a \leq a_0 = (0.5)/(ec^2 2)$  because it then converges as a geometric sequence.
\end{proof}
For completeness, we provide the proof of \Cref{fact:Hermite}.
\FactHermite*
\begin{proof}
The first statement is a simple consequence of the fact that Hermite polynomials are a complete orthonormal system of $L^2(\R^d, \cN(0,\bI_d))$.

For the second statement, we shall use dominated convergence theorem. 
Define the residue $f_y^{>\ell}(x):= f(x,y)-f_y(x)^{>\ell}$ and $J_{\ell}(y):= \|f_y^{>\ell}\|_{L_2(\cN(0,\bI_d))}^2$. Observe that $\E_{y \sim \R}[J_\ell(y)] = \|f^{>\ell}\|_{L_2(\cN(0,\bI_d) \times R)}^2$.
The first statement implies that for each $y\in\R$, $J_\ell (\cdot) \to 0$ as $\ell \to \infty$.
Furthermore, $J_{\ell}$ is uniformly bounded by $4$ as follows:
\begin{align*}
J_{\ell}(y) = \|f_y(x) - f_y^{\ell}(x)\|_{L_2(\cN(0,\bI_d))}^2 \leq 2 \|f_y\|^2_{L^2(\cN(0,\bI_d))} + 2\|f_y^{\leq \ell}\|^2_{L^2(\cN(0,\bI_d))} \leq 4,
\end{align*}
where we use Parseval's identity to say $\|f_y^{\leq \ell}\|^2_{L^2(\cN(0,\bI_d))} \leq \|f_y\|^2_{L^2(\cN(0,\bI_d))}$ and that $|f_y| \leq $1.
Since $J_\ell \to 0$ pointwise as $\ell \to \infty$ and $0 \leq J_\ell \leq 4$ uniformly, by the dominated convergence theorem, $\E_{y}[J_{\ell}(y)] \to 0$ as $\ell \to \infty$. 

\end{proof}

\subsection{Statistical Query Algorithms}
\label{app:statistical_query_algorithms}

Recall the notion of success from \Cref{def:notion-of-success}.
The notion of success is intimately tied to the $\SDA$ as shown by the following result:
\begin{proposition}[SQ lower bounds using SDA; \Cref{prop:SDA-upper-bound}]
\label{prop:sq-lower-bound-success}
For  any query $f: \cZ \to [0,1]$,
the following holds: $\P_{v \sim \cS^{d-1}}\left( \cE_{f,v.m}  \right) \leq \frac{1}{\SDA(7m)}$.
\end{proposition}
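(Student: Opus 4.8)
The plan is to argue by contradiction using the definition of $\SDA$, following the classical route from $\SDA$ to SQ lower bounds of \cite{FelGRVX17,BreBHLS21}, adapted to the $\VSTAT$ threshold of \Cref{def:notion-of-success}. Set $q := \SDA(7m)$; one may assume $q \ge 2$, since otherwise $\P_v(\cE_{f,v,m})\le 1 = 1/q$ trivially. Suppose toward a contradiction that $\P_{v}(\cE_{f,v,m}) > 1/q$, and let $S := \{v \in \cS^{d-1}: \cE_{f,v,m}\}$, so $\P_{v,v'}(v\in S,\, v'\in S) = \P(S)^2 > 1/q^2$. Write $a := \E_{\nullSQ}[f]$, $\bar f := f - a$, and $h_v := \mathrm{d}\altSQ_v/\mathrm{d}\nullSQ - 1$ (we work in the regime where $\chi^2(\altSQ_v,\nullSQ)<\infty$, so $h_v \in L_2(\nullSQ)$; this is automatic in the intended application of this proposition). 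A one-line computation records the three identities I will use: $\Phi_v := \E_{\altSQ_v}[f] - \E_{\nullSQ}[f] = \langle \bar f, h_v\rangle_{\nullSQ}$; $\chi_{\nullSQ}(\altSQ_v,\altSQ_{v'}) = \langle h_v, h_{v'}\rangle_{\nullSQ}$; and $\|\bar f\|_{L_2(\nullSQ)}^2 = \Var_{\nullSQ}(f) \le a(1-a)$ (using $f^2 \le f$ pointwise).

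First I would show that membership in $S$ forces a quantitatively large ``signal'': for every $v\in S$,
\[
\Phi_v^2 \;>\; \frac{a(1-a)}{7m} \;\ge\; \frac{\|\bar f\|_{L_2(\nullSQ)}^2}{7m}.
\]
This is the one step where the nested $\max$/$\min$ in the $\VSTAT$ threshold must be unpacked. Writing $\delta := |\Phi_v|$ and $b := \E_{\altSQ_v}[f]$ (so $|b-a| = \delta$), the event $\cE_{f,v,m}$ is equivalent to $\delta \ge 1/m$ together with $\delta \ge \min\{\sqrt{a(1-a)/m}, \sqrt{b(1-b)/m}\}$. If $\delta \ge \sqrt{a(1-a)/m}$ the bound is immediate. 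Otherwise $\delta^2 \ge b(1-b)/m$, and combining the elementary identity $|a(1-a)-b(1-b)| = |a-b|\cdot|1-a-b| \le \delta$ with $\delta \ge 1/m$ gives $a(1-a)/(7m) \le b(1-b)/(7m) + \delta/(7m) \le \tfrac17\delta^2 + \tfrac17\delta^2 < \delta^2$. (The slack here is exactly what turns the $\VSTAT$ parameter into $7m$ in the statement.)

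Next I would run a sign-corrected averaging argument. Assuming $\|\bar f\|_{L_2(\nullSQ)}>0$ (if it is zero then $\Phi_v = 0$ for a.e.\ $v$, so $\P(S)=0$ and we are done), set $g := \bar f/\|\bar f\|_{L_2(\nullSQ)}$, $\sigma_v := \sgn(\Phi_v)$, and $\bar h := \E_v[\sigma_v h_v \mid v\in S]$. Bilinearity of $\langle\cdot,\cdot\rangle_{\nullSQ}$ yields $\|\bar h\|_{L_2(\nullSQ)}^2 = \E_{v,v'}[\sigma_v\sigma_{v'}\langle h_v,h_{v'}\rangle_{\nullSQ}\mid S\times S] \le \E_{v,v'}[|\chi_{\nullSQ}(\altSQ_v,\altSQ_{v'})|\mid S\times S]$, while Cauchy--Schwarz against the unit vector $g$ gives $\|\bar h\|_{L_2(\nullSQ)}^2 \ge \langle\bar h,g\rangle_{\nullSQ}^2 = (\E_v[\sigma_v\Phi_v\mid v\in S])^2/\|\bar f\|_{L_2(\nullSQ)}^2 = (\E_v[|\Phi_v|\mid v\in S])^2/\|\bar f\|_{L_2(\nullSQ)}^2$. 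By the first step every $v \in S$ satisfies $|\Phi_v| > \|\bar f\|_{L_2(\nullSQ)}/\sqrt{7m}$, so the conditional average is strictly larger, whence $\E_{v,v'}[|\chi_{\nullSQ}(\altSQ_v,\altSQ_{v'})|\mid S\times S] > 1/(7m)$. Since $S\times S$ has probability $> 1/q^2$, this contradicts the defining property of $\SDA(7m)=q$ (\Cref{def:SDA}), namely $\E_{v,v'}[|\chi_{\nullSQ}(\altSQ_v,\altSQ_{v'})|\mid\cE]\le 1/(7m)$ for every event $\cE$ with $\P(\cE)\ge 1/q^2$. Therefore $\P_v(\cE_{f,v,m}) \le 1/q = 1/\SDA(7m)$.

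The conceptual content is light. I expect the only genuine care to lie in (i) the threshold case analysis of the first step --- tracking the $\max\{1/m,\cdot\}$ term and which of $a$, $b$ drives the bound, which is what fixes the constant to $7$ rather than something smaller --- and (ii) making sure the sign correction $\sigma_v$ is introduced \emph{before} taking absolute values, so that the averaging step matches the $|\chi_{\nullSQ}|$ that appears inside the $\SDA$ definition.
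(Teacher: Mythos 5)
Your proof is correct and follows the same route as the paper's: reduce the $\VSTAT$ success event $\cE_{f,v,m}$ to a quantitative lower bound on $|\Phi_v|^2$ in terms of $a(1-a)/m$, then run a sign-corrected averaging with Cauchy--Schwarz against the success set $S$ to lower-bound $\E_{v,v'}[|\chi_\nullSQ(\altSQ_v,\altSQ_{v'})|\mid S\times S]$ and contradict the defining property of $\SDA(7m)$. The only (cosmetic) difference is in the threshold step: the paper first applies the normalization $\E_\nullSQ[f]\le 1/2$ (replacing $f$ by $1-f$) and then invokes Lemma~3.5 of \cite{FelGRVX17}, whereas you unpack the nested $\max$/$\min$ directly via the identity $|a(1-a)-b(1-b)|=|a-b|\,|1-a-b|\le\delta$, avoiding both the WLOG and the external citation and arriving at the constant $7$ in a self-contained way.
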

We use the arguments implicit in \cite{FelGRVX17,BreBHLS21,DiaKRS23}.
\begin{proof}
Here, we assume that $\altSQ_v$ has a valid density with respect to $\nullSQ$.  
For a $v \in \cS^{d-1}$ and $z \in \cZ$, we use $q_v(z)$ to denote the Radon–Nikodym derivative of $\altSQ_v$ with respect to $\nullSQ$.
Observe that
\[
  \E_{\altSQ_v}[f] - \E_{\nullSQ}[f]
  = \bigl\langle q_v - 1,\;f\bigr\rangle_{\nullSQ}.
\]

Fix a query $f$ and assume that $a_1 := \E_{\nullSQ}[f] \le \tfrac12$,
otherwise apply the following arguments to $1-f$.
We shall show that $\P(\cE_{f,v,m}) \le 1/\SDA(7m)$ by contradiction.
Suppose $\P(\cE_{f,v,m}) > 1/\SDA(7m)$.

Lemma 3.5 in \cite{FelGRVX17} implies that for any $m\ge1$, $0\le a_1,a_2\le1$,
\[
  \text{if}\quad
  |a_1 - a_2| \;\ge\; 
    \max\!\Bigl(\tfrac1m,\;\min\!\bigl(\sqrt{\tfrac{a_1(1-a_1)}m},\,\sqrt{\tfrac{a_2(1-a_2)}m}\bigr)\Bigr),
  \quad\text{then}\quad
  |a_1 - a_2| \;\ge\; \sqrt{\tfrac{a_1(1-a_1)}{3m}}.
\]
Applying this with $a_1 = \E_{\nullSQ}[f]$ and $a_2 = \E_{\altSQ v}[f]$,
then 
if $f$ succeeds on $v$, then
\[
  \1_{\cE_{f,v,m}}\bigl|\langle q_v - 1,f\rangle_{\nullSQ}\bigr|
  \;\ge\;
  \sqrt{\tfrac{a_1}{6m}}\;\1_{\cE_{f,v,m}}.
\]
Taking expectation over $v$ and squaring gives
\begin{align*}
\P(\cE_{f,v,m})^2  \cdot \frac{a_1}{6 m} &\leq \left(\E_v\left[\1_{\cE_{f,v,m}} \left|\langle q_v - 1, f\rangle_\nullSQ\right| \right]\right)^2 \\
&= \left(\E_v\left[\1_{\cE_{f,v,m}} \langle q_v - 1, f\rangle_\nullSQ \cdot \sign_{\langle q_v - 1, f\rangle_\nullSQ} \right]\right)^2 \\
&= \left(\E_v\left[\E_{\nullSQ}\left[\1_{\cE_{f,v,m}} \left( q_v(Z) - 1\right) \left(f(Z)\right) \cdot \sign_{\langle q_v - 1, f\rangle_\nullSQ} \right]\right]\right)^2 \\
&= \left(\E_\nullSQ\left[\E_{v}\left[\1_{\cE_{f,v,m}} \left( q_v(Z) - 1\right) \left(f(Z)\right) \cdot \sign_{\langle q_v - 1, f\rangle_\nullSQ} \right]\right]\right)^2 \\
&=  \left\langle \E_{v}\left[\1_{\cE_{f,v,m}} \left( q_v - 1\right) \sign_{\langle q_v - 1, f\rangle_\nullSQ} \right] ,  f \right\rangle_\nullSQ ^2 \\
&\leq \left\|  f \right\|_{L_2(\nullSQ)} ^2 \cdot   \left\| \E_{v}\left[\1_{\cE_{f,v,m}} \left( q_v - 1\right) \sign_{\langle q_v - 1, f\rangle_\nullSQ}\right] \right\|_{L_2(\nullSQ)}^2 \\
&= a_1 \cdot  \left( \E_{v,v',\nullSQ} \left[ \1_{\cE_{f,v,m}}\1_{\cE_{f,v',m}} \left( q_v(Z) - 1\right)\left( q_{v'}(Z) - 1\right) \sign_{\langle q_v - 1, f\rangle_\nullSQ} \sign_{\langle q_{v'} - 1, f\rangle_\nullSQ} \right]  \right) \\
&= a_1 \cdot  \left( \E_{v,v'} \left[ \1_{\cE_{f,v,m}}\1_{\cE_{f,v',m}} \left\langle q_v - 1,  q_{v'} - 1\right\rangle_\nullSQ \sign_{\langle q_v - 1, f\rangle_\nullSQ} \sign_{\langle q_{v'} - 1, f\rangle_\nullSQ} \right]  \right) \\
&\leq a_1 \cdot  \left( \E_{v,v'} \left[ \1_{\cE_{f,v,m}}\1_{\cE_{f,v',m}} \left|\left\langle q_v - 1,  q_{v'} - 1\right\rangle_\nullSQ\right| \right]  \right)\,.
\end{align*}
Dividing both sides by $a_1\,\P(\cE_{f,v,m})^2$ and noting independence gives, for 
\(\cE=\cE_{f,v,m}\cap\cE_{f,v',m}\),
\[
  \frac1{6m}
  \;\le\;
  \E_{v,v'}\Bigl[\bigl|\langle q_v - 1,\;q_{v'} - 1\rangle_{\nullSQ}\bigr|\Bigm|\cE\Bigr].
\]
Since $\P(\cE)=\P(\cE_{f,v,m})^2\ge\tfrac{1}{\SDA(7m)^2}$, the definition of $\SDA$ implies the RHS is $<\tfrac{1}{7m}$, a contradiction.
Hence, $\P(\cE_{f,v,m})\leq\tfrac{1}{\SDA(7m)}$.
\end{proof}

\begin{proposition}[Query Complexity Lower Bound; \Cref{eq:sq-mid-step-intro}]
\label{prop:sq-lower-bound}
Fix a $m \in \N$ and $q\in\N$.
Suppose that for all bounded queries $f:\cZ \to [0,1]$, the probability of success is small as follows:
\begin{align}
\label{eq:sq-mid-step}
\P_{v \sim \cS^{d-1}}\left( \cE_{f,v,m}  \right) \leq \frac{1}{q}\,.
\end{align}

Then any (potentially randomized and adaptive to the responses of the previous queries) SQ algorithm $\cA$ for solving \Cref{def:generic-sq} (with failure probability less than $0.25$)  must
use either $\Omega(q) = \Omega(\SDA(7m))$ queries or at least one query as powerful
as $\VSTAT(m + 1)$.
\end{proposition}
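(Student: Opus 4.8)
The plan is the standard adversarial-oracle argument. Suppose, for contradiction, that $\cA$ solves \Cref{def:generic-sq} with failure probability $<1/4$ while making at most $q/2$ queries and never issuing a query more powerful than $\VSTAT(m)$. I will exhibit a valid $\VSTAT(m)$ oracle against which $\cA$ fails with probability at least $1/4$; this contradiction forces $\cA$ to make $>q/2=\Omega(q)$ queries (or to use a query as powerful as $\VSTAT(m+1)$). The identity $\Omega(q)=\Omega(\SDA(7m))$ then follows because \Cref{prop:SDA-upper-bound} guarantees that the hypothesis $\P_{v\sim\cS^{d-1}}(\cE_{f,v,m})\le 1/q$ holds for every $f$ with $q=\SDA(7m)$.

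In the null world I let the oracle $\cO$ answer every query $f$ with the exact value $\E_{\nullSQ}[f]$, which is trivially a valid $\VSTAT(m)$ response for $\nullSQ$. In the alternate world with direction $v$ I let the oracle $\cO_v$ answer $f$ with $\E_{\nullSQ}[f]$ whenever that number lies within the $\VSTAT(m)$ tolerance for $\altSQ_v$ at $f$, and with $\E_{\altSQ_v}[f]$ otherwise; by construction $\cO_v$ is a valid $\VSTAT(m)$ oracle for $\altSQ_v$. Now I freeze the internal randomness $r$ of $\cA$: because $\cO$ is deterministic, the adaptively generated queries $f_1^{(r)},f_2^{(r)},\dots$ that $\cA$ issues against $\cO$ and its final output $\widehat\Gamma^{(r)}\in\{0,1\}$ are determined by $r$. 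For a unit vector $v$, define the bad event $B_v^{(r)}:=\bigcup_{i}\cE_{f_i^{(r)},v,m}$; a union bound over the at most $q/2$ null-run queries together with the hypothesis gives $\P_{v\sim\cS^{d-1}}(B_v^{(r)})\le 1/2$ for every $r$.

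The core step is the claim that if $v\notin B_v^{(r)}$ then running $\cA$ with randomness $r$ against $\cO_v$ reproduces exactly the transcript it produces against $\cO$, and hence the same output $\widehat\Gamma^{(r)}$. I would prove this by induction on the query index: if the first $i-1$ answers of $\cO_v$ agreed with those of $\cO$ (all of the form $\E_{\nullSQ}[\cdot]$), then the $i$-th query equals $f_i^{(r)}$, and since $\cE_{f_i^{(r)},v,m}$ does not occur, $|\E_{\altSQ_v}[f_i^{(r)}]-\E_{\nullSQ}[f_i^{(r)}]|$ is below the threshold of \Cref{def:notion-of-success}, which is in turn at most the $\VSTAT(m)$ tolerance for $\altSQ_v$ at $f_i^{(r)}$; therefore $\cO_v$ also returns $\E_{\nullSQ}[f_i^{(r)}]$. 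This inductive bookkeeping---guaranteeing that the two adaptive runs cannot diverge as long as none of the null-run queries ``succeeds'' for $v$---is the one delicate point of the argument, though it is routine once organized correctly.

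Finally I would account over the two hypotheses against the composite oracle that uses $\cO$ in the null world and $\cO_v$ in the alternate world for direction $v$. Under the null, $\cA$ fails precisely when $\widehat\Gamma^{(r)}=1$, so the null failure probability equals $p_0:=\P_r(\widehat\Gamma^{(r)}=1)$ and $\P_r(\widehat\Gamma^{(r)}=0)=1-p_0$. Under the alternate, whenever $\widehat\Gamma^{(r)}=0$ and $v\notin B_v^{(r)}$ the algorithm outputs $0\neq 1=\Gamma$, which is a failure; hence the alternate failure probability satisfies $p_1\ge \E_r[\1(\widehat\Gamma^{(r)}=0)\,\P_v(v\notin B_v^{(r)})]\ge \tfrac12(1-p_0)$. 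Consequently the overall failure probability is at least $\tfrac12 p_0+\tfrac14(1-p_0)=\tfrac14+\tfrac14 p_0\ge\tfrac14$, contradicting the hypothesis that it is $<1/4$. This completes the proof; the variant that bounds the number of queries only in expectation follows by first truncating $\cA$ after $q/3$ queries (declaring ``null'' if it exceeds that budget), which by Markov's inequality perturbs the failure probability by $o(1)$, and then repeating the computation with $1/3$ in place of $1/4$.
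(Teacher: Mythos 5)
Your proposal is correct and follows essentially the same adversarial-oracle construction as the paper: fix a deterministic oracle that always returns $\E_{\nullSQ}[f]$ when that value is a valid $\VSTAT(m)$ response (which it always is under the null, and is for all "good" queries under $\altSQ_v$), then union-bound over the $O(q)$ null-run queries so that with probability $\ge 1/2$ over $v$ the two transcripts coincide, forcing the same (wrong) answer in the alternate world. The paper's write-up fixes the algorithm's randomness and argues it must deterministically output ``null'' under $\nullSQ$; your version carries $p_0 = \P_r(\widehat\Gamma^{(r)}=1)$ through the accounting, which is a slightly cleaner bookkeeping of the same calculation, and you also make explicit the inductive argument that transcripts cannot diverge and the Markov truncation needed for the in-expectation variant, both of which the paper leaves implicit.
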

Again, we use ideas implicit in  \cite{FelGRVX17,BreBHLS21,DiaKRS23}.
\begin{proof}
We will fix the $\VSTAT$ oracle to be a deterministic oracle $\bV^*$ defined below (independent of the algorithm $\cA$).
Since the oracle is fixed, it suffices to show lower bounds against deterministic algorithms. 

Consider the following oracle $\bV^*$:
\begin{itemize}[leftmargin=2em]
  \item If $\Theta = \nullSQ$, then for any query $f$, it returns  $\E_\nullSQ[f]$.
  \item If $\Theta = \altSQ_v$, then it answers differently based on the ``niceness'' of $f$:
  \begin{itemize}[leftmargin=2em]
    \item (``a good query for $\bV^*$ on $v$'') If $\E_\nullSQ[f]$ is a valid $\VSTAT(m)$ response, then answer $\E_\nullSQ[f]$.
    \item(otherwise)  Return $\E_{\altSQ_v}[f]$.
  \end{itemize}
\end{itemize}
Observe that the oracle $\bV$ above is a valid $\VSTAT(m)$ oracle for both null and alternate.

Now, let $\cA'$ be any deterministic SQ algorithm (deterministic as a function of the answers of the oracle) that solves the testing problem with queries $f_1,\dots,f_{q'}$ for $q' = a q$ for some $a < 1$ to be decided soon.

Consider the case when $\Theta = P$. Recall that the adversary returns $\E_\nullSQ[f_i]$ for $i \in [q']$, which is a valid response.
Then, the accuracy guarantee of $\cA$ implies that $\cA$ must output ``null'' on these instances (because it is deterministic); otherwise the failure probability $\P(\widehat{\Gamma} \neq \Gamma)\geq 0.5$.

Now, consider the alternate case where $\Theta = \altSQ_v$ for $v \sim \cS^{d-1}$.
Observe that if $\cE_{f,v,m}^\complement$ holds for a  query $f$, then it is a  ``a good query for $\bV^*$ on $v$''.
By assumption, the probability (over $v$) that any fixed query $f$ is not good is at most $\P_{\Theta}(\cE_{f,v,m}) \leq 1/q$.
Thus, by a union bound,  the probability (over $v$) that all the queries $\{f_i\}_{i=1}^{q'}$ are good for $\bV^*$ is at least  $1 - a$.
When all the queries are good, the algorithm's input is the same as in the null case, and hence the algorithm must answer ``null''.
Therefore, the overall failure probability is at least $0.5(1-a) \geq 0.25$, which is a contradiction.
\end{proof}

\subsection{Pairwise Correlation}
We will use the following closed-form expression for the pairwise correlations between Gaussians: 
\begin{lemma}
\label{lem:chi-gaussian}
    Let unit vectors $u, v \in \R^d$ and scalars $a\in \R$ and $\gamma \in (0,1)$.
    Let $\cos \theta = u^\top v$. Then
    \begin{align*}
       1+  \chi_{\cN(0,\bI_d)} \left(\cN(av, I - \gamma vv^\top), \cN(au, I - \gamma uu^\top)\right) &=  \frac{\exp\left(\frac{\alpha^2 \cos\theta}{1 + \gamma\cos\theta }\right)}{\sqrt{1 - \gamma^2 \cos^2\theta}}\,.
    \end{align*}
\end{lemma}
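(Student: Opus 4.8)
Write $q_1,q_2,p$ for the densities of $\cN(av,\bI_d-\gamma vv^\top)$, $\cN(au,\bI_d-\gamma uu^\top)$, and $\cN(0,\bI_d)$, respectively. By the definition of pairwise correlation, $1+\chi_{\cN(0,\bI_d)}(\cN(av,\bI_d-\gamma vv^\top),\cN(au,\bI_d-\gamma uu^\top)) = \E_{x\sim\cN(0,\bI_d)}\big[(q_1(x)/p(x))\,(q_2(x)/p(x))\big]$. The first step is to reduce this $d$-dimensional integral to a two-dimensional one: since $\cN(av,\bI_d-\gamma vv^\top)$ agrees with $\cN(0,\bI_d)$ on the hyperplane orthogonal to $v$ and equals $\cN(a,1-\gamma)$ along $v$, the likelihood ratio $q_1(x)/p(x)$ depends on $x$ only through $r:=v^\top x$, and equals the explicit function $g(r):=(1-\gamma)^{-1/2}\exp\!\big(-\tfrac{\gamma r^2}{2(1-\gamma)}+\tfrac{ar}{1-\gamma}-\tfrac{a^2}{2(1-\gamma)}\big)$; likewise $q_2(x)/p(x)=g(u^\top x)$. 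Hence the quantity to evaluate is $\E_{(s,t)}[g(s)g(t)]$, where $(s,t):=(v^\top x,u^\top x)$ is a centered Gaussian with covariance $\Sigma=\big(\begin{smallmatrix}1&\cos\theta\\ \cos\theta&1\end{smallmatrix}\big)$.

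The second step is to compute this Gaussian integral explicitly. Collecting the exponents, $g(s)g(t)=(1-\gamma)^{-1}\exp\!\big(-\tfrac12 z^\top A z + b^\top z - \tfrac{a^2}{1-\gamma}\big)$ with $z=(s,t)^\top$, $A=\tfrac{\gamma}{1-\gamma}\bI_2$, and $b=\tfrac{a}{1-\gamma}(1,1)^\top$. Applying the standard identity $\E_{z\sim\cN(0,\Sigma)}\big[\exp(-\tfrac12 z^\top A z+b^\top z)\big]=|\bI_2+\Sigma A|^{-1/2}\exp\!\big(\tfrac12 b^\top(\Sigma^{-1}+A)^{-1}b\big)$ (valid since $\gamma\in(0,1)$ makes $A\succeq 0$, so $\Sigma^{-1}+A\succ 0$), I would compute $|\bI_2+\Sigma A|=(1-\gamma^2\cos^2\theta)/(1-\gamma)^2$, whose square root cancels against the $(1-\gamma)^{-1}$ prefactor to leave exactly $(1-\gamma^2\cos^2\theta)^{-1/2}$. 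For the quadratic form, using $(\Sigma^{-1}+A)^{-1}=\Sigma(\bI_2+A\Sigma)^{-1}$ (legitimate because $A$ is scalar) and the factorization $1-\gamma^2\cos^2\theta=(1-\gamma\cos\theta)(1+\gamma\cos\theta)$, one gets $\tfrac12 b^\top(\Sigma^{-1}+A)^{-1}b=\tfrac{a^2(1+\cos\theta)}{(1-\gamma)(1+\gamma\cos\theta)}$. Combining with the additive $-a^2/(1-\gamma)$ term, the total exponent simplifies to $\tfrac{a^2}{1-\gamma}\cdot\tfrac{(1+\cos\theta)-(1+\gamma\cos\theta)}{1+\gamma\cos\theta}=\tfrac{a^2\cos\theta}{1+\gamma\cos\theta}$, yielding the claimed formula (here $a$ is the scalar from the hypothesis, written $\alpha$ in the statement).

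\textbf{Main obstacle.} There is no conceptual difficulty: the only substantive move is recognizing that the two likelihood ratios each depend on a single linear coordinate, after which everything is a Gaussian computation in two dimensions. The actual work is therefore the $2\times2$ linear algebra of the second step — tracking $|\bI_2+\Sigma A|$, inverting $\Sigma^{-1}+A$, and checking that the exponent collapses exactly to $a^2\cos\theta/(1+\gamma\cos\theta)$ rather than something merely similar. The only hypotheses that need to be used are $\gamma\in(0,1)$ (for integrability and positive-definiteness of the perturbed covariances) and the automatic bound $|\gamma\cos\theta|<1$ (so that $1-\gamma^2\cos^2\theta>0$ and the square root is real).
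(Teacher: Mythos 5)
Your proof is correct, and it takes a genuinely different route from the paper's. The paper invokes (and then instantiates) a general closed-form formula for $\chi_{\cN(0,\bI_d)}(\cN(\mu_1,\Sigma_1),\cN(\mu_2,\Sigma_2))$ in terms of $\Sigma_1^{-1}$, $\Sigma_2^{-1}$, determinants, and quadratic forms, which requires tracking several auxiliary quantities ($h$, $h'$, $y$, $s_1$, $s_2$, $s_{1,2}$) through the specific rank-one covariance structure. You instead observe that each likelihood ratio $q_i(x)/p(x)$ depends on $x$ only through one linear coordinate (because $\cN(av,\bI_d-\gamma vv^\top)$ coincides with $\cN(0,\bI_d)$ orthogonally to $v$), which collapses the $d$-dimensional integral to $\E_{(s,t)\sim\cN(0,\Sigma)}[g(s)g(t)]$ with $\Sigma$ the $2\times 2$ correlation matrix, and then apply the standard Gaussian moment-generating identity. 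This is more self-contained (no external formula needed) and reduces all linear algebra to $2\times 2$, though it exploits the rank-one covariance perturbation in a way the paper's generic formula does not. I checked your explicit expression for $g(r)$, the determinant $|\bI_2+\Sigma A|=(1-\gamma^2\cos^2\theta)/(1-\gamma)^2$, the quadratic form $\tfrac12 b^\top(\Sigma^{-1}+A)^{-1}b=\tfrac{a^2(1+\cos\theta)}{(1-\gamma)(1+\gamma\cos\theta)}$, and the final cancellation yielding $\tfrac{a^2\cos\theta}{1+\gamma\cos\theta}$; all are correct. One small note: the identity $(\Sigma^{-1}+A)^{-1}=\Sigma(\bI+A\Sigma)^{-1}$ holds for any $A$ making $\Sigma^{-1}+A$ invertible (write $\Sigma^{-1}+A=(\bI+A\Sigma)\Sigma^{-1}$), so the parenthetical ``because $A$ is scalar'' is not really needed there; the scalarity of $A$ is only used to identify $\bI+A\Sigma$ with the $\bI+\Sigma A$ you already computed.
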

\begin{proof}
  For any two Gaussians $A= \cN(\mu_1, \Sigma_1)$ and $B = \cN(\mu_2, \Sigma_2)$, the average correlation with respect to the standard Gaussian is given by the following expression using standard calculations:
    \begin{align}
        \chi_{\cN(0,\bI_d)} (A, B) &= \frac{\exp((h'-h)/2)}{\sqrt{s_1s_2} \sqrt{s_{1,2}}}\,, \qquad \text{where}
    \end{align}
\begin{initemize}
  \item $A= (\Sigma_1^{-1} + \Sigma_2^{-1} - I)^{-1}$
\item $s_1 = \det(\Sigma_1)\qquad\qquad$
\item $s_2 = \det(\Sigma_1)\qquad$
\item $s_{1,2} = \det(A^{-1})\qquad$

  \item $h = \mu_1^\top \Sigma_1^{-1} \mu_1 + \mu_2^\top \Sigma_2^{-1} \mu_2$
  \item $y = \Sigma_1^{-1} \mu_1 + \Sigma_2^{-1} \mu_2 $
  \item $h' = y^\top A y$.
\end{initemize}

We will now instantiate this formula in our context. Towards that goal, we calculate the required expressions below:
\begin{itemize}
    \item (Calculating $s_1$ and $s_2$)  $s_1 = s_2 = 1-\gamma$. Also define $b:= 1 -\gamma$.
\item (Calculating $h$) Moreover, $\Sigma_1^{-1} = (I - vv^\top) + b^{-1}vv^\top$ and $\Sigma_2^{-1} = (I - uu^\top) + b^{-1}uu^\top$.
Therefore, $h = \mu_1^\top \Sigma_1^{-1}\mu_1 + \mu_1^\top \Sigma_1^{-1}\mu_1= 2a^2b^{-1}$.
\item (Calculating $y$) The same calculations as above give $y =\Sigma_1^{-1}\mu_1 + \Sigma_1^{-1}\mu_1 = ab^{-1}(v+u)$.
\item (Calculating $s_{1,2}$)
We begin by calculating $A^{-1}$:
\begin{align*}
    A^{-1} &= \Sigma_1^{-1} + \Sigma_2^{-1} - I = (I - vv^\top) + b^{-1}vv^\top + (I - uu^\top) + b^{-1}uu^\top - I \\
    &= I + \frac{\gamma}{1-\gamma} \cdot ( vv^\top +  uu^\top)\,.
\end{align*}
Therefore, the determinant of $A^{-1}$ is $\frac{1- \gamma^2 \cos^2\theta}{(1-\gamma)^2}$
for $\alpha = \tfrac{\gamma}{1-\gamma}$.
This can be seen as follows by considering $2\times 2$ matrices:
\begin{align*}
\det \left(I + \begin{bmatrix}
        \alpha & \alpha u^\top v\\
        \alpha u^\top v & \alpha
    \end{bmatrix} \right) = \det \left( \begin{bmatrix}
        1 + \alpha & \alpha u^\top v\\
        \alpha u^\top v & 1 + \alpha
    \end{bmatrix} \right) 
    = (1+\alpha)^2 - \alpha^2 (u^\top v)^2, 
\end{align*}
which equals the expression above.
\item (Calculating $A$ and $h'$)
Letting $U = [u ; v] \in \R^{d \times 2}$ and $C = \alpha \bI_2$, 
then
\begin{align*}
   A = (I +  UC U^\top)^{-1}&= I - U(C^{-1} + U^\top U )U^\top \\
    &= I - U\left(\begin{bmatrix}
        \alpha^{-1} & 0 \\0 & \alpha^{-1}
    \end{bmatrix}
    + \begin{bmatrix}
        1 & \cos \theta \\ \cos \theta & 1
    \end{bmatrix}\right)^{-1}U^\top\\
    &= I - U\left(\begin{bmatrix}
        1 + \alpha^{-1} & \cos \theta \\\cos \theta & 1 +\alpha^{-1}
    \end{bmatrix}\right)^{-1} U^\top\\
    &= I - U\left(\begin{bmatrix}
        \gamma^{-1} & \cos \theta \\ \cos \theta & \gamma^{-1}
    \end{bmatrix}\right)^{-1} U^\top\\
    &= I - \frac{1}{1 -\gamma^2 \cos^2\theta} U\left(\begin{bmatrix}
        \gamma & -\gamma^2 \cos \theta \\ \gamma^2\cos\theta & \gamma
    \end{bmatrix}\right)^{-1} U^\top\\
    &= I - \left( \frac{\gamma}{1 - \gamma^2\cos^2\theta} \cdot (uu^\top + vv^\top) + \frac{-\gamma^2\cos\theta}{1 - \gamma^2\cos^2\theta} (vu^\top + uv^\top) \right)\\
    &= I - \frac{\gamma}{1 - \gamma^2\cos^2\theta} \cdot (uu^\top + vv^\top) + \frac{\gamma^2\cos\theta}{1 - \gamma^2\cos^2\theta} (vu^\top + uv^\top)\,.
\end{align*}
First, observe that  $(u+v)^\top \bJ uu^\top (u+v) = (1+ \cos\theta)^2$ for $\bJ\in\{uu^\top,vv^\top, vu^\top,uv^\top\}$.
Therefore, $(u+v)^\top M (u+v)$ equals for $M:= I-A$
\begin{align*}
   2 (1+\cos\theta)^2\cdot \frac{1}{1-\gamma^2\cos^2\theta} \cdot (\gamma - \gamma^2\cos \theta) = \frac{\gamma(1+\cos\theta)^2}{1 + \gamma \cos\theta}\,.
\end{align*}

Let $M$ be $I-A$. Then 
\begin{align*}
    h &= a^2b^{-2}y^\top A y = a^2b^{-2}\left(u+v\right)^\top (I - M) \left(u+v\right)  \\
    &=a^2b^{-2}\left(2 + 2\cos \theta  - \frac{2\gamma(1+\cos\theta)^2}{1 + \gamma \cos\theta}\right)\\
    &=2a^2b^{-2}(1+\cos\theta)\left(1  - \frac{(1+\cos\theta)\gamma}{1+\gamma\cos\theta}\right)\\
    &=2a^2b^{-2}(1+\cos\theta)\left(\frac{(1 - \gamma)}{1+\gamma\cos\theta}\right) = 2a^2b^{-1}\frac{(1+\cos\theta)}{(1+\gamma \cos\theta)}\,.
\end{align*}
\end{itemize}
Overall, we get that $s_1s_2s_{1,2} = 1 - \gamma^2\cos^2\theta$
and $h'-h = 2a^2b^{-1}\frac{(1-\gamma) \cos\theta}{1+\gamma \cos\theta} = \frac{2a^2\cos\theta}{1+\gamma \cos \theta} $, which completes the proof.
\end{proof}

\subsection{Basic Fact on Discrete Gaussians}

\FactDiscreteGaussian*
\begin{proof}
First the support of both $X$ and $X'$ are equal to $\theta + s \Z$ (indeed the support of $Y$ is $\theta' + s' \Z$, which when multiplied by $\sigma$, yields $(\theta - \mu) + s\Z $, and further shifting by $\mu$ yields $\theta + s \Z$.

Starting with $X$, for any $i \in \Z$, $\P(X = \theta + s i) \propto s \phi_{\mu, \sigma}(\theta + si) \propto \frac{s}{\sigma}
 \exp(- 0.5( \theta + si  -\mu)^2/\sigma^2) \propto  \exp(- 0.5( \theta + si  -\mu)^2/\sigma^2)$, where we use that $s$ and $\sigma$ can be absorbed into the normalizing constant.

Turning to the random variable $X'$, 
\begin{align*}
\P(X' = \theta + si) &= \P(\sigma Y + \mu = \theta + si) = \P(Y = (\theta- \mu)/\sigma + si/\sigma ) = \P(Y = \theta' + s'i) \\
&\propto \exp(-0.5(\theta' + s'i)^2) \propto \exp\left(-0.5 \left( \frac{\theta + si - \mu}{\sigma}  \right)^2\right),
\end{align*}
where we used the definitions of $\theta'$ and $s'$.
Since the support is equal and the two distributions are equal up to a normalization constant, they must be equal.

\end{proof}

\section{Proofs Deferred from \Cref{sec:generalized_SQ_lower_bound}}
\label{app:proof-deferred}

\subsection{Concentration Properties of Distributions}
\label{app:concentration_properties_of_distributions}
In this section, we state the concentration properties of various distributions that appear in our analysis.
\begin{restatable}{lemma}{LemConcProp}
\label{lemma:subgaussian-ness}
Let the parameters be as in \Cref{parameters}.
Then we have the following:
\begin{enumerate}
\item The distributions $A_y$, $\hidDiscT_y$, and $B_y$ are $O(|y|+\sigma)$-subgaussian and if $X$ follows either one of these distributions, then $\P(|X-\mu_y|>t) \lesssim e^{-t^2/2\sigma^2}$.
  \item The distribution $\NDG{0, \sigma, 0, s}$ is an $O(\sigma)$-subgaussian distribution, and  $R_{\rho,E}$ is an $O(\sigma + \rho)$-subgaussian distribution.
  \item The distributions $P^{A_y}_v$, $P^{\hidDiscT_y}_v$,  and $P^{\hidCont_y}_v$ are  a $O(|y| + \sigma+1)$-subgaussian  distributions.\footnote{A multivariate random vector $X$ is termed $\sigma$-subgaussian if, for all unit vectors $v$, the real-valued random variable $v^\top X$ is $\sigma$-subgaussian.}
\end{enumerate}
\end{restatable}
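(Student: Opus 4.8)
The plan is to reduce every distribution in the statement to the standard one-dimensional discrete Gaussian $\NDG{0,1,\theta,s''}$ via \Cref{fact:discrete-translation}, and to control that object through a direct upper bound on its moment generating function (MGF); the MGF bound then yields both the subgaussian constant and the sharp Gaussian-type tail by a Chernoff argument.

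For $Y\sim\NDG{0,1,\theta,s''}$ I would compute the MGF exactly using the Gaussian shift identity $\phi(x)e^{\lambda x}=e^{\lambda^2/2}\phi(x-\lambda)$, where $\phi$ is the standard normal density. Writing $Z_c:=\sum_{i\in\Z}s''\,\phi(c+s''i)$ for the total mass of the positive measure $\DG{0,1,c,s''}$, one gets $\E[e^{\lambda Y}]=e^{\lambda^2/2}\,Z_{\theta-\lambda}/Z_{\theta}$. Since $Z_c$ depends only on $c\bmod s''$ and, by \Cref{fact:polynomial-discrete-vs-gaussian} (or Poisson summation, or an elementary Riemann-sum comparison), $Z_c\in[c_0,C_0]$ for absolute constants $0<c_0\le C_0<\infty$ whenever $s''$ is bounded (we will use $s''=s'/\sigma\le 0.002$ for $A_y$, and $s''=s/\sigma\le 2$ for $\NDG{0,\sigma,0,s}$), we get $\E[e^{\lambda Y}]\le (C_0/c_0)\,e^{\lambda^2/2}$ for every $\lambda\in\R$. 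A Chernoff bound then gives $\P(|Y|>r)\lesssim e^{-r^2/2}$, so $Y$ is $O(1)$-subgaussian.

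I would then transfer these bounds. By \Cref{fact:discrete-translation}, $A_y$ has the law of $\sigma Y+\mu_y$ with $Y\sim\NDG{0,1,(\theta_y-\mu_y)/\sigma,\,s'/\sigma}$; hence for $X\sim A_y$ we get $\P(|X-\mu_y|>t)=\P(|Y|>t/\sigma)\lesssim e^{-t^2/(2\sigma^2)}$, and since $|\mu_y|=\rho|y|\le|y|$, $A_y$ is $O(|\mu_y|+\sigma)=O(|y|+\sigma)$-subgaussian. For $B_y=\cN(\mu_y,\sigma^2)$ both claims are the textbook Gaussian tail bound. For $\hidDiscT_y$, which is $A_y$ conditioned on $\cE:=\{|z|\le d\}$: in the regime where this lemma is invoked one has $|\mu_y|\le d/2$, so the tail bound for $A_y$ gives $\P_{A_y}(\cE)\ge 1-e^{-\Omega(d^2)}\ge\tfrac12$, and conditioning on an event of probability $\ge\tfrac12$ at most doubles the MGF and the tail, preserving the $O(|y|+\sigma)$-subgaussian constant and the bound $\P(|X-\mu_y|>t)\lesssim e^{-t^2/(2\sigma^2)}$ (in the complementary case $|\mu_y|>d/2$, never needed in the paper, $\hidDiscT_y$ is supported on $[-d,d]$, hence $O(d)\subseteq O(|y|)$-subgaussian by Hoeffding). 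Part 2 is the $y=0$ instance of the above for $\NDG{0,\sigma,0,s}$ (giving $O(\sigma)$-subgaussian), and $R_{\rho,E}$ is the sum of the independent $\rho$-subgaussian variable $\cN(0,\rho^2)$ and the $O(\sigma)$-subgaussian variable $\NDG{0,\sigma,0,s}$, so multiplying MGFs shows it is $O(\sqrt{\rho^2+\sigma^2})=O(\rho+\sigma)$-subgaussian.

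For Part 3, recall that $P^{\hiddenSQ}_v$ is the product of $\hiddenSQ$ along $v$ and $\cN(0,\bI_{d-1})$ on $v^{\perp}$. Writing $X=(v^\top X)v+X_{\perp}$ with the two parts independent, for any unit vector $w$ we have $w^\top X=(w^\top v)(v^\top X)+w^\top X_{\perp}$, where $|w^\top v|\le 1$, $v^\top X\sim\hiddenSQ\in\{A_y,\hidDiscT_y,B_y\}$ is $O(|y|+\sigma)$-subgaussian by Part 1, and $w^\top X_{\perp}$ is a Gaussian of variance $\le 1$, hence $O(1)$-subgaussian; combining the two independent pieces, $w^\top X$ is $O\bigl(\sqrt{(|y|+\sigma)^2+1}\bigr)=O(|y|+\sigma+1)$-subgaussian, uniformly in $w$, which is the claim. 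The only genuinely technical point is the uniform-in-$\lambda$ bound on the partition-function ratio $Z_{\theta-\lambda}/Z_\theta$ for the discrete Gaussian; everything else is routine bookkeeping of subgaussian constants under scaling, translation, conditioning on a constant-probability event, and summing independent variables.
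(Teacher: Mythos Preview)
Your proposal is correct. Its overall architecture matches the paper's proof: reduce $A_y$ to a standardized discrete Gaussian via \Cref{fact:discrete-translation}, handle $B_y$ directly, obtain $\hidDiscT_y$ by conditioning on a probability-$\ge\tfrac12$ event, get Part~2 as the $y=0$ case plus closure of subgaussianity under independent sums, and get Part~3 from the product structure of $P^{\hiddenSQ}_v$.

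The one genuine difference is how you control the tail of the standardized discrete Gaussian $\NDG{0,1,\theta,s''}$. The paper bounds the tail probability directly by summing the point masses past $t/\sigma$ and dominating $\sum_i s''\phi(t/\sigma+s''i)\lesssim e^{-t^2/2\sigma^2}\sum_i s''e^{-s''^2i^2/2}$, using that the normalizing constant is $\Theta(1)$. You instead compute the MGF exactly as $e^{\lambda^2/2}Z_{\theta-\lambda}/Z_\theta$ and bound the partition-function ratio uniformly, then apply Chernoff. Both routes hinge on the same fact (that $Z_c$ is bounded above and below by absolute constants), so the difference is largely cosmetic. Your MGF route has the minor advantage that it packages the result in the form most convenient for the independent-sum step in Part~2 and for the product-of-MGFs argument in Part~3; it also makes the needed uniformity in the spacing parameter (you use $s''=s'/\sigma$ for $A_y$ but $s/\sigma$ for Part~2) explicit, whereas the paper's tail-sum derivation literally invokes $s''\ll 1$ and then reuses it for Part~2 without re-checking the spacing.
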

\begin{proof}
We do it case-by-case.
\begin{enumerate}[wide]
\item For $s'':= s'/\sigma$  and that $t \geq 1$:
\begin{align*}
\P_{X \sim A_y}&(|x - \mu_y|> t)\\
&= \P_{X \sim \NDG{0,1, (\theta_y-\mu_y)/\sigma , s''}}(|W|>  t/\sigma) && (\text{\Cref{fact:discrete-translation}})
\\
&\lesssim  \sum_{i\in \N} s'' \phi_{0,1}(\tfrac{t}{\sigma}+ s''i) + \sum_{i\in \N} s \phi_{0,1}(- \tfrac{t}{\sigma}-s''i)  && (\text{\Cref{fact:polynomial-discrete-vs-gaussian} as $s''\ll 1$})\\ 
&\lesssim \sum_{i \in \N}s'' e^{-0.5\sigma^{-2} t^2-0.5 s''^2 i^2}&&\text{($\sigma \leq 1$)}\\
&\lesssim e^{-t^2/2\sigma^2}\sum_{i \in \N}s'' e^{-0.5i^2 s''^2 }\\
&\lesssim e^{-t^2/2\sigma^2}\,.
\numberthis\label{eq:tailbound-AY}
\end{align*}
    This tail also implies $O(|y| + \sigma)$-subgaussianity as follows: we claim that $P(|X|>t)\lesssim e^{-\frac{c t^2}{\max(|\mu_y|,\sigma)^2}}$. Observe that it suffices to consider $t\gtrsim \max(|\mu_y|,\sigma)$; otherwise, the bound is trivially true.
    For $t\gg |\mu_y|$, $\P(|X|>t)\leq \P(|X-\mu_y|\geq t/2)$ and we can then use \Cref{eq:tailbound-AY}.
    The same arguments hold for $B_y$.
    The claim for the tails of $|X|$ under $\hidDiscT_y$ follows from that of $A_y$ because $\hidDiscT_y$ is obtained from conditioning on an event of probability at least $0.5$.

    \item The claim for $\NDG{0, \sigma, 0, s}$ follows from \Cref{eq:tailbound-AY}. For $R_{\rho,E}$, we use the fact that if $x_1$ and $x_2$ are two independent $\sigma_1$ and $\sigma_2$-subgaussian random variables, then their sum is $O(\sqrt{\sigma_1 + \sigma_2})$-subgaussian~\cite[Proposition 2.6.1]{Vershynin18}.
    \item After rotating appropriately, $P_v^{A_y}$ and $P^{\hidCont}_y$ are vectors of independent coordinates and thus follow a multivariate subgaussian distribution with variance proxy bounded by the subgaussian parameter of any individual coordinate \cite[Lemma 3.4.2]{Vershynin18}. The subgaussian proxy for the $v$ direction is established in the first item, while for the other coordinates it is $O(1)$.
\end{enumerate}
\end{proof}

\subsection{Proof of \Cref{prop:decomposition-disc-vs-cont-main-body}}
\label{sec:proof-prop:decomposition-disc-vs-cont}
Observe that  \Cref{prop:decomposition-disc-vs-cont-main-body} follows from the result below because of 
\Cref{lemma:subgaussian-ness}.
Indeed, \Cref{lemma:subgaussian-ness} implies that (i)  $\P_{y}(|y|\geq L) \lesssim e^{-\Omega(L^2)}$ as $\sigma \lesssim 1$ and $\rho \lesssim 1$
and (ii) for any $y$ with $|y| \leq d/2$, $\P_{A_y}(|z|>d) \leq \P_{A_y}(|z - \mu_y|>d/2)\lesssim e^{-\Omega(d^2)}$.

\begin{proposition}
\label{prop:decomposition-disc-vs-cont}
Let $f:\cU \to [0,1]$. For $L \geq 1$,  define the set $\cC: \{y: |y| \leq L \widetilde\sigma \}$ and the function $\widetilde{f} := f \1_{y \in \cC}$.
Then for any $\ell \in \N$, we have that for $y \sim R_{\rho,E}$:
\begin{align}
\label{eq:decomposition-disc-vs-cont}
\nonumber
\left|\E_{\distCont_v}[f] - \E_{\distDisc_v}[f]\right| &\leq {4\P(y \not \in \cC )} + \max_{y:|y|\leq L} \P_{A_y}(|z|>d)  + \Big|
\E_{y}\Big[
\littlesum_{k=1}^{\ell}   \left( \widetilde{\bA}_{k,y} - \bB_{k,y} \right)\langle v^{\otimes k}, \bT_{k,y}\rangle
\Big]\Big| \\
  & { + \left|\E_{\distDiscT_v}[\widetilde{f}^{> \ell}] - \E_{\distDisc_v}[\widetilde{f}^{> \ell}]\right|}
\,,
\end{align}
where $\bT_{k,y} := \E_{x \sim \cN(0,\bI_d)}[\widetilde{f}_y(x)H_k(x)]$,  $\bA_{k,y}:=\E_{x\sim \widetilde{A}_y} [h_k(x)]$ and $\bB_{k,y}:=\E_{x\sim \hidCont_y} [\widetilde{f}(x) ]$ for $A_y, \hidCont_y$ defined in \Cref{def:ngca-lin-regr-discrete,def:ngca-lin-regr-discrete-continuous-gaussian}.

\end{proposition}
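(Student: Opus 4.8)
The plan is to pass, through three controlled approximations, from the unwieldy quantity $|\E_{\distCont_v}[f]-\E_{\distDisc_v}[f]|$ to a finite Hermite computation, adapting the Gaussian Fourier-analytic scheme of \cite{DiaKRS23} to the conditional NGCA setting of \Cref{def:ngca-lin-regr}; the structural facts we rely on are that $\distNull,\distCont_v,\distDisc_v,\distDiscT_v$ all share the $y$-marginal $R=R^*_{\rho,E}$, and that each alternate is, conditionally on $y$, a hidden-direction distribution $P^{\hiddenSQ_y}_v$ in the sense of \Cref{def:high-dim-distribution}. \textbf{Step 1: truncation in $y$.} Since $f\in[0,1]$ and the $y$-marginal is $R$ under every distribution in play, replacing $f$ by $\widetilde f=f\cdot\1_{y\in\cC}$ changes any of the relevant expectations by at most $\P_{y\sim R}(y\notin\cC)$, so a triangle inequality yields the $\P(y\notin\cC)$ contribution (with slack left in the stated constant). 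We henceforth work with $\widetilde f$ and record that $\widetilde f_y\equiv 0$ for $|y|>L\widetilde\sigma$, hence $\bT_{k,y}=\E_{x\sim\cN(0,\bI_d)}[\widetilde f_y(x)\bH_k(x)]=0$ for such $y$; this is what lets us later restrict all maxima over $y$ to $|y|\le L$.

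\textbf{Step 2: hidden-coordinate truncation and the conditional Hermite identity.} Split $\widetilde f=\widetilde f^{\le\ell}+\widetilde f^{>\ell}$ (Hermite truncation in $x$ for each fixed $y$). For the degree-$\le\ell$ part we use the conditional version of \cite[Lemma 3.3]{DiaKRS23}: for any univariate $\hiddenSQ_y$,
\[
  \E_{x\sim P^{\hiddenSQ_y}_v}\!\bigl[\widetilde f_y^{\le\ell}(x)\bigr]
  =\sum_{k=0}^{\ell}\Bigl(\E_{t\sim\hiddenSQ_y}[h_k(t)]\Bigr)\,\langle v^{\otimes k},\bT_{k,y}\rangle,
\]
which follows from $\E_{P^{\hiddenSQ_y}_v}[\bH_k(x)]=(\E_{t\sim\hiddenSQ_y}[h_k(t)])\,v^{\otimes k}$. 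We instantiate this with $\hiddenSQ_y=B_y$ (an honest Gaussian) and with $\hiddenSQ_y=\widetilde A_y$ (supported on $\{|z|\le d\}$): in both cases all moments are finite, which is exactly why we replace the bare $A_y$ — whose high-degree moments are the source of the degeneracy — by the truncated $\widetilde A_y$. Subtracting, integrating over $y\sim R$, cancelling the $k=0$ terms, and using $\bT_{k,y}=0$ for $|y|>L\widetilde\sigma$ produces the term $\bigl|\E_y[\sum_{k=1}^\ell(\widetilde{\bA}_{k,y}-\bB_{k,y})\langle v^{\otimes k},\bT_{k,y}\rangle]\bigr|$. Finally, passing from $\distDisc_v$ to $\distDiscT_v$ changes only the conditional law of $v^\top x$ given $y$, from $A_y$ to $A_y|_{\{|z|\le d\}}$, whose total variation distance equals $\P_{A_y}(|z|>d)$; since $\widetilde f\in[0,1]$ is supported on $|y|\le L\widetilde\sigma\le L$ this costs at most $\max_{|y|\le L}\P_{A_y}(|z|>d)$.

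\textbf{Step 3: the high-degree residual.} What remains is the $\widetilde f^{>\ell}$ part. On the continuous side, $\E_{\distCont_v}[\widetilde f^{>\ell}]$ is controlled by Cauchy--Schwarz via the finite divergence $\chi^2(\distCont_v,\distNull)\lesssim\rho^2$ from \eqref{eq:chi-square-gaussian} together with $\|\widetilde f^{>\ell}\|_{L_2(\distNull)}\to 0$ (\Cref{fact:Hermite}); on the discrete side we simply carry $\bigl|\E_{\distDiscT_v}[\widetilde f^{>\ell}]-\E_{\distDisc_v}[\widetilde f^{>\ell}]\bigr|$ as the fourth and final term of the stated bound, deferring to \Cref{app:handling_widetilde_f} the proof that for $\ell$ large this difference (and the $\distDiscT_v$ leftover) are negligible, which rests again on \Cref{fact:Hermite} and the \cite{DiaKRS23} truncation device. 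Assembling the four pieces gives the inequality.

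\textbf{Where the difficulty is.} The main obstacle is Step 3, and more precisely the reason Step 2 must pass through $\widetilde A_y$: because $\distDisc_v$ and $\distDiscT_v$ have a discrete component they are not absolutely continuous with respect to $\distNull$, so $\chi^2(\distDisc_v,\distNull)=\infty$ and one cannot bound $\E_{\distDisc_v}[\widetilde f^{>\ell}]$ — or even guarantee it stays bounded as $\ell\to\infty$ — by $\|\widetilde f^{>\ell}\|_{L_2}$. Isolating the discrete high-degree residual as its own term, and routing the analysis through the continuous companion problem (\Cref{def:ngca-lin-regr-discrete-continuous-gaussian}), whose $\chi^2$ is finite, is precisely how this is sidestepped. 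A minor additional point is to verify that the conditional Hermite identity and the finiteness of $\widetilde{\bA}_{k,y}$ and $\bB_{k,y}$ survive the integration over $y$ — immediate here since $B_y$ is Gaussian and $\widetilde A_y$ is compactly supported.
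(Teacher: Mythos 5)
Your proof follows essentially the same route as the paper's, with one inconsequential difference in ordering: the paper first replaces $\distDisc_v$ by $\distDiscT_v$ globally (paying the TV cost) and then truncates in $y$, whereas you truncate $f$ to $\widetilde f$ first and only then pass to $\distDiscT_v$. Both orderings are valid. Your accounting in fact gives $2\,\P(y\notin\cC)$ rather than the paper's $3\,\P(y\notin\cC)$, and the key structural observations — that the TV bound must be applied to the bounded $\widetilde f$ (not to the unbounded Hermite truncation $\widetilde f^{\le\ell}$), that $\bT_{k,y}=0$ outside $\cC$ justifies restricting the $\max$ to $|y|\le L$, that the $k=0$ terms cancel, and that one routes through $\widetilde A_y$ and $B_y$ precisely because $A_y$ makes the $\chi^2$ infinite — all match the paper.

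One wobble worth flagging: in Step~3 you claim the derivation leaves behind the term $\bigl|\E_{\distDiscT_v}[\widetilde f^{>\ell}]-\E_{\distDisc_v}[\widetilde f^{>\ell}]\bigr|$, matching the proposition's statement verbatim. But your own decomposition (and the paper's) actually produces the residual $\bigl|\E_{\distCont_v}[\widetilde f^{>\ell}]-\E_{\distDiscT_v}[\widetilde f^{>\ell}]\bigr|$: once you have split $\widetilde f=\widetilde f^{\le\ell}+\widetilde f^{>\ell}$ inside the comparison $|\E_{\distCont_v}[\widetilde f]-\E_{\distDiscT_v}[\widetilde f]|$, the leftover is a discrepancy between the \emph{continuous} and the \emph{truncated-discrete} expectations, not between two discrete ones. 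The $\distDisc_v$ in the proposition's last term is almost surely a typo for $\distCont_v$ (this is also the quantity treated in \Cref{app:handling_widetilde_f}, which bounds $\bigl|\E_{\distDiscT_v}[\widetilde f^{>\ell}]\bigr|$ and $\bigl|\E_{\distCont_v}[\widetilde f^{>\ell}]\bigr|$ separately). You hedged your Step~3 prose around this without resolving it; cleanest would be to note the correction and record the residual as $\bigl|\E_{\distCont_v}[\widetilde f^{>\ell}]-\E_{\distDiscT_v}[\widetilde f^{>\ell}]\bigr|$.
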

\label{app:proof-decomposition-disc-vs-cont}
\begin{proof}
We start by replacing $\E_{\distDisc_v}[f]$ with $\E_{\distDiscT_v}[f]$ at the cost of additive $\dtv(\distDisc_v,\distDiscT_v)$.
This total variation distance is $O(\P_{\distDisc_v}(|v^\top x|> d))$, which can be upper bounded by $\P(|y|\geq L) + \max_{y:|y|\leq L} \P(|v^\top x| \geq d)$.
Hence, in the rest of this proof, we shall use $\distDiscT$ everywhere.

Next we decompose $f = \widetilde{f} + f'$ for $f' := f \1_{y \not \in \cC}$.

Then we further decompose $\widetilde{f}$ as $\widetilde{f}^{\leq \ell} + \widetilde{f}_y^{> \ell}$.
By triangle inequality, it suffices to show that the expectations of $\widetilde{f}^{\leq \ell}, \widetilde{f}^{> \ell)}$, and  $f'$ are close.
Observe that the term for $\widetilde{f}^{>\ell}$ is already present in the final conclusion. 
Next, for $f'$, the boundedness of $f$ and the same marginals of $Q_v$ and $\distCont_v$ imply that 
\begin{align*}
    \left|\E_{\distDiscT_v}[f'] - \E_{\distCont_v}[f']\right| \leq  2 \P(y \not\in\cC).
\end{align*}

In the remainder, we focus on the terms corresponding to $\widetilde{f}^{\leq \ell}$. 
By the law of total expectation (whose validity for $\widetilde{f}^{\leq \ell}$ is justified below), we have that
\begin{align}
\label{eq:conditional-expectation}
\E_{\distDiscT_v}[\widetilde{f}_y^{\leq \ell}(x,y)] = \E_{y}\left[\E_{x \sim P^{A_y}_v} \left[\widetilde{f}_y^{\leq \ell}\right]\right]\,.
\end{align}

To compute the inner expectation, which is an instance of the unsupervised NGCA, 
we will use \cite[Lemma 3.3]{DiaKRS23}:
\begin{lemma}[Fourier Decomposition Lemma of \cite{DiaKRS23}] \label{lem:hermite-decomposition}
Let $A'$ be any distribution supported on $\R$ and $v$ a unit vector.
Then for any $\ell\in \N$ and $g:\R^d \to [0,1]$, 
\begin{align*}
\E_{x\sim P^{A'}_{v}} [g^{\leq \ell}(x)]&=
\sum_{k=0}^\ell \bA_k \left\langle v^{\otimes k}, \bT_k\right\rangle\,,
\end{align*}
where $\bA_k=\E_{x\sim A'} [h_k(x)]$ and $\bT_k=\E_{x\sim \cN(0,\bI_d)} [g(x) \bH_k(x)]$. 
\end{lemma}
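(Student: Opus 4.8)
The plan is to reduce the lemma to a single structural identity about Hermite tensors. Recall from the Fourier-analysis preliminaries that, for $g:\R^d\to[0,1]$ with reference measure $\cN(0,\bI_d)$, we have $g^{\leq\ell}(x)=\sum_{k=0}^{\ell}\langle \bT_k,\bH_k(x)\rangle$, since by definition the degree-$\ell$ Hermite coefficients of $g$ are exactly the $\bT_k=\E_{x\sim\cN(0,\bI_d)}[g(x)\bH_k(x)]$. Thus $g^{\leq\ell}$ is a fixed polynomial of degree at most $\ell$, a finite linear combination of the entries of $\bH_0(x),\dots,\bH_\ell(x)$. In every invocation of the lemma the distribution $A'$ is a (possibly conditioned) discrete Gaussian, hence has finite moments of every order, so each entry of $\bH_k(x)$ is integrable under $P^{A'}_{v}$ and we may pull the (finite) expectation inside:
\[
\E_{x\sim P^{A'}_{v}}\bigl[g^{\leq\ell}(x)\bigr]=\sum_{k=0}^{\ell}\Bigl\langle \bT_k,\;\E_{x\sim P^{A'}_{v}}\bigl[\bH_k(x)\bigr]\Bigr\rangle .
\]
It therefore suffices to prove the tensor identity $\E_{x\sim P^{A'}_{v}}[\bH_k(x)]=\bA_k\,v^{\otimes k}$ for each $k$, where $\bA_k=\E_{z\sim A'}[h_k(z)]$ is the scalar in the statement; plugging this back in gives $\sum_{k=0}^{\ell}\langle \bT_k,\bA_k v^{\otimes k}\rangle=\sum_{k=0}^{\ell}\bA_k\langle v^{\otimes k},\bT_k\rangle$, as claimed.

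To prove the tensor identity I would use that a symmetric $k$-tensor $\bS$ is determined by the homogeneous polynomial $w\mapsto\langle \bS,w^{\otimes k}\rangle$; since $\bH_k(x)$ is symmetric, both $\E_{x\sim P^{A'}_{v}}[\bH_k(x)]$ and $\bA_k v^{\otimes k}$ are symmetric, so it is enough to check they agree after contraction with an arbitrary $w^{\otimes k}$. By the standard contraction rule for Hermite tensors, $\langle \bH_k(x),w^{\otimes k}\rangle=\|w\|^{k}\,h_k\!\bigl(w^{\top}x/\|w\|\bigr)$, so the left-hand side equals $\|w\|^{k}\,\E_{x\sim P^{A'}_{v}}[h_k(w^{\top}x/\|w\|)]$. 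Now decompose $w=(v^{\top}w)v+w_{\perp}$ and $x=zv+x_{\perp}$; under $P^{A'}_{v}$ we have $z\sim A'$ and $x_{\perp}\sim\cN(0,\bI)$ on $v^{\perp}$, independently, so $w^{\top}x/\|w\|=az+bN$ with $a=(v^{\top}w)/\|w\|$, $b=\|w_{\perp}\|/\|w\|$, $a^{2}+b^{2}=1$, and $N\sim\cN(0,1)$ independent of $z$. The Ornstein--Uhlenbeck eigenfunction identity for Hermite polynomials, $\E_{N\sim\cN(0,1)}[h_k(az+bN)]=a^{k}h_k(z)$ whenever $a^{2}+b^{2}=1$, yields $\E[h_k(w^{\top}x/\|w\|)]=a^{k}\E_{z\sim A'}[h_k(z)]=a^{k}\bA_k$. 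Hence $\langle \E_{x\sim P^{A'}_{v}}[\bH_k(x)],w^{\otimes k}\rangle=\|w\|^{k}a^{k}\bA_k=(v^{\top}w)^{k}\bA_k=\bA_k\langle v^{\otimes k},w^{\otimes k}\rangle$, which is exactly the contraction of $\bA_k v^{\otimes k}$ with $w^{\otimes k}$, completing the argument.

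The only points requiring care are the two Hermite facts used as black boxes: the contraction rule $\langle \bH_k(x),w^{\otimes k}\rangle=\|w\|^{k}h_k(w^{\top}x/\|w\|)$ and the noise-operator eigenvalue identity. Both follow by matching coefficients in the generating-function identity $\sum_{k\geq 0}\langle \bH_k(x),w^{\otimes k}\rangle\,t^{k}/\sqrt{k!}=\exp\bigl(t\,w^{\top}x-\tfrac{t^{2}}{2}\|w\|^{2}\bigr)$ defining the Hermite tensor (respectively, its univariate specialization combined with the Gaussian moment-generating function), so I do not expect any genuine obstacle beyond keeping the tensor/scalar bookkeeping straight --- in particular remembering that $\bA_k$ here denotes the scalar $\E_{z\sim A'}[h_k(z)]$, not a tensor --- and noting, as above, that the expectation interchange is legitimate because $A'$ has finite moments of all orders in every application of the lemma.
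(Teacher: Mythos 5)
Your proof is correct. Note that the paper does not actually prove this lemma; it is imported verbatim as \cite[Lemma 3.3]{DiaKRS23}, so there is no in-paper argument to compare against---what you have produced is a self-contained reconstruction of the cited result. The structure you use is the natural one: expand $g^{\leq\ell}(x)=\sum_{k=0}^{\ell}\langle \bT_k,\bH_k(x)\rangle$, interchange expectation with the finite sum, and reduce the whole lemma to the single tensor identity $\E_{x\sim P^{A'}_{v}}[\bH_k(x)]=\bA_k\,v^{\otimes k}$. Your verification of that identity by testing against $w^{\otimes k}$ is legitimate because $\bH_k(x)$ is a symmetric tensor (so it is determined by its values on the diagonal $w^{\otimes k}$), and both ingredients you invoke---the contraction rule $\langle\bH_k(x),w^{\otimes k}\rangle=\|w\|^{k}h_k(w^{\top}x/\|w\|)$ and the Ornstein--Uhlenbeck eigenrelation $\E_{N}[h_k(az+bN)]=a^{k}h_k(z)$ for $a^2+b^2=1$---follow directly from the generating function $\sum_k\langle\bH_k(x),w^{\otimes k}\rangle t^k/\sqrt{k!}=\exp(tw^{\top}x-\tfrac{t^2}{2}\|w\|^2)$, which is the convention in \cite{DiaKRS23}. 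The edge cases ($w\parallel v$, where $a=\pm1$ and $h_k(\pm z)=(\pm1)^k h_k(z)$; $w\perp v$, where $a=0$ and $\E[h_k(N)]=\mathbbm{1}[k=0]$) all fall out consistently. Your caveat about integrability is the right one to flag: as stated the lemma assumes only that $A'$ is supported on $\R$, but $\bA_k$ and the expectation interchange both require $A'$ to have finite moments up to order $\ell$; this is harmless in the paper since $A'$ is always a (possibly truncated) discrete Gaussian.
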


Consider a fixed $y_0 \in \R$ and apply the above result to $A' := \hidDiscT_{y_0}$ and $g(x) := \widetilde{f}_y(x) := f(x, y_0) \1_{y_0 \in \cC}$.
Define $\widetilde{\bA}_{k,y}:= \E_{x\sim \hidDiscT_y} [h_k(x)]$ and $\bB_{k,y}:= \E_{x\sim \hidCont_y} [h_k(x)]$
and $\bT_{k,y}:=\E_{x\sim \cN(0,\bI_d)} [\widetilde{f}(x,y) \bH_k(x)]$. 
We obtain that 
\begin{align*}
\E_{\distDiscT_v}[\widetilde{f}^{\leq \cL}] &= \E_{y}\left[\E_{x \sim P^{\widetilde{A}_y}_v} \left[\widetilde{f}_y^{\leq \ell}(x)\right]\right]
= \E_{y }\left[
\sum_{k=0}^{\ell} \widetilde{\bA}_{k,y}\left\langle  v^{\otimes k}, \bT_{k,y}\right\rangle
\right]\,.
\end{align*}
Observe that the term $k=0$ corresponds to $\E_{\cN(0,\bI_d)}[f(x,y_0)]$ for each $y_0$, implying that the expectation of the $k=0$ term (over $y$) is exactly $\E_{P}[\widetilde{f}(x,y)]$.
Thus,  we get the following decomposition:
\begin{align}
\label{eq:decomposition-discrete}
\E_{\distDiscT_v}[\widetilde{f}^{\leq \cL}] - \E_{P}[\widetilde{f}] &= \E_{y \sim R'}\left[
\sum_{k=1}^\ell \widetilde\bA_{k,y}\left\langle v^{\otimes k}, \bT_{k,y}\right\rangle
\right]\,.
\end{align}
Similarly, the decomposition for the continuous Gaussian noise is as follows:
\begin{align}
\label{eq:decomposition-continuous}
\E_{\distCont_v}[\widetilde{f}_y^{\leq \ell}(x)] - \E_{P}[\widetilde f] &= \E_{y \sim R'}\left[
\sum_{k=1}^\ell  \bB_{k,y}\left\langle v^{\otimes k}, \bT_{k,y}\right\rangle
\right]\,.
\end{align}
The claim follows by combining \Cref{eq:decomposition-discrete,eq:decomposition-continuous}. 

\paragraph{Justifying \Cref{eq:conditional-expectation}.}
It suffices to show that $\E_{\distDiscT_v}[|\widetilde{f}^{\leq \ell}|]< \infty$, which we will establish below.
By Fubini's theorem, we have that 
\begin{align*}
\E_{\distDiscT_v}[|\widetilde{f}^{\leq \ell}|]&= \E_{y}\left[\E_{x \sim P^{\hidDiscT_y}_v} \left[\left|\widetilde{f}_y^{\leq \ell}\right|\right]\right] \\
&\leq \E_{y}\left[\E_{x \sim P^{\hidDiscT_y}_v} \left[\left| \sum_{k=0}^\ell\langle [\E_{x' \sim P^{\hidDiscT_y}_v} f(x')\bH_{k}(x')] , \bH_{k}(x)\rangle\right|\right]\right] \\
&\leq \sum_{k=0}^\ell \E_{y}\left[\E_{x \sim P^{\hidDiscT_y}_v} \left[\left|\langle [\E_{x' \sim P^{\hidDiscT_y}_v} f(x')\bH_{k}(x')] , \bH_{k}(x)\rangle\right|\right]\right].
\end{align*}
This can be further upper bounded by finite sum of the terms (at most $d^{O(\ell)}$) involving 
$$\E_y\E_{x \sim P^{\hidDiscT_y}_v}[| \E_{x'} f(x')p(x')]\cdot | p(x)|]$$
for some polynomials $p(\cdot)$.
Since $|f|$ is upper bounded by $1$, the term above is further upper bounded by 
 $\E_y\E_{x \sim P^{\hidDiscT_y}_v}[|p(x)|^2]$ using Jensen's inequality. 
 Using \Cref{lemma:subgaussian-ness}, $\E_{x \sim P^{\hidDiscT_y}_v}[|p(x)|^2]$ is upper bounded by $\poly(|\mu_y|, d, \|p\|_{\ell_2})$ and since $\mu_y$ is linear in $y$, $\E[\poly(\mu_y)]$ is also finite because $R$ is $O(1)$-subgaussian.

A similar reasoning justifies \Cref{eq:conditional-expectation} for $\distCont_v$.
\end{proof}

\subsection{Proof of \Cref{prop:concentration-of-tensor-dot-unit-vectors}}
\label{app:concentration-of-tensor-dot-unit-vectors}

\PropConcVTensor*
  
The result for $k \leq t$ follows by the claim that $\bT_{k,y}$ has norm at most $1$ almost surely. 
Hence, we will focus on the second claim, for which we shall crucially use the concentration results from \cite{DiaKRS23}, which we state in a different formulation below.

\begin{lemma}[Lemma 3.7 and Corollary 3.9 in \cite{DiaKRS23}]
        \label{lem:conc-tensor-random-dir}
        Let $\bT_{k,y}$ be a random $k$-tensor supported with randomness $y$.
        For a random unit vector $v$ independent of $y$, let $W_k$ denote the random variable $\E_y\left|\langle v^{\otimes k}, \bT_{k,y} \rangle\right|$.
        Let $W'$ be the random variable $v^\top w$ for a unit vector $w \in \cS^{d-1}$.
        Then for any even $p \in \N$,
        \begin{align}
        \label{eq:DKRS23}
            \|W_k\|_{L_p} \leq (\E_y\|\bT_{k,y}\|_2^p)^{1/p} \|W'\|_{L_{pk/2}}^{k/2}.
        \end{align}
        In particular, if $\langle \bT_{k,y},\bT_{k,y}\rangle \leq 1$ almost surely, then there exists a constant $C>0$ such that the following conclusion holds for any even $p \in \N$ and $k \in \N$:
        \begin{enumerate}
            \item $\|W_k\|_{L_p} \leq \left(\frac{Cpk}{d}\right)^{k/4} $. {\hfill \grey (useful for moderate $k$: $k = o(d)$)}
            \item  $\|W_k\|_{L_p} \lesssim \min(d,k)^{1/p} \exp\left(-C  \frac{kd}{\max(d,pk)}\right).$ {\hfill\grey (useful for large $k$: $k \asymp d$)}
            \item $\|W_k\|_{L_p} \leq C \left(\frac{d}{pk}\right)^{d/p}$. {\hfill \grey (useful for extremely large $k$: $k = \omega(d)$)}
        \end{enumerate}
\end{lemma}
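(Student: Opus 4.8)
The plan is to prove \Cref{lem:conc-tensor-random-dir} in two pieces: the abstract inequality \Cref{eq:DKRS23}, and then the three explicit estimates (1)–(3) by substituting the moments of a single coordinate of a uniform unit vector. \emph{Reduction to a fixed tensor.} Replacing each $\bT_{k,y}$ by its symmetrization changes neither $\langle v^{\otimes k},\bT_{k,y}\rangle$ nor increases $\|\cdot\|_2$ (the symmetrization is an average over coordinate-permuted copies, each an isometry, so apply the triangle inequality), so we may assume the tensors are symmetric. Minkowski's integral inequality in $y$ then gives $\|W_k\|_{L_p(v)}=\bigl\|\E_y|\langle v^{\otimes k},\bT_{k,y}\rangle|\bigr\|_{L_p(v)}\le \E_y\bigl(\E_v|\langle v^{\otimes k},\bT_{k,y}\rangle|^p\bigr)^{1/p}$, so it suffices to show, for a fixed symmetric $k$-tensor $T$, that $\bigl(\E_{v\sim\cS^{d-1}}|\langle v^{\otimes k},T\rangle|^p\bigr)^{1/p}\le\|T\|_2\,\|W'\|_{L_{pk/2}}^{k/2}$; combining with $\E_y\|\bT_{k,y}\|_2\le(\E_y\|\bT_{k,y}\|_2^p)^{1/p}$ (Jensen, $p\ge1$) yields \Cref{eq:DKRS23}.

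\emph{The fixed-tensor moment bound.} Write $v=g/\|g\|$ with $g\sim\cN(0,\bI_d)$; by rotational invariance $\langle v^{\otimes k},T\rangle$ is independent of $\|g\|$, so for even $p$,
\[
\E_v|\langle v^{\otimes k},T\rangle|^p=\frac{\E_g\,\langle g^{\otimes kp},\,T^{\otimes p}\rangle}{\E[\|g\|^{kp}]},\qquad \E[\|g\|^{kp}]=d(d+2)\cdots(d+kp-2).
\]
Expanding $\E_g[g^{\otimes kp}]$ by Wick's formula as a sum over the perfect matchings of $[kp]$ and contracting each against $T^{\otimes p}$, I would bound every resulting term using $\|T\|_2\le1$ and control how many matchings can produce a large ``self-contraction'' inside a single copy of $T$; this gives $\E_g\langle g^{\otimes kp},T^{\otimes p}\rangle\le\|T\|_2^p\,(pk/2-1)!!\,(d+pk/2)(d+pk/2+2)\cdots(d+kp-2)$, which after dividing by $\E[\|g\|^{kp}]$ is exactly $\|T\|_2^p\,\E_v[(v^\top w)^{pk/2}]=\|T\|_2^p\,\|W'\|_{L_{pk/2}}^{pk/2}$ for a fixed unit $w$. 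Taking $p$-th roots closes the reduction.

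\emph{The three explicit bounds.} Using $\|\bT_{k,y}\|_2\le1$ almost surely, \Cref{eq:DKRS23} becomes $\|W_k\|_{L_p}\le\|W'\|_{L_{pk/2}}^{k/2}$, so it remains to estimate $\|W'\|_{L_q}$ for $W'=v^\top w$ at $q=pk/2$. Since $(v^\top w)^2\sim\mathrm{Beta}(\tfrac12,\tfrac{d-1}2)$, one has $\E[|W'|^q]=\tfrac{\Gamma((q+1)/2)\Gamma(d/2)}{\sqrt\pi\,\Gamma((d+q)/2)}$, equal to $\tfrac{(q-1)!!}{d(d+2)\cdots(d+q-2)}$ for even $q$. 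From this I would extract the crude bound $\|W'\|_{L_q}\le\sqrt{q/d}$ valid for all $q$, a sharper bound in the regime $q\asymp d$, and a tail bound for $q\gg d$; substituting $q=pk/2$ and raising to the power $k/2$ then produces, respectively, bound (1) (since $(pk/(2d))^{k/4}\le(Cpk/d)^{k/4}$, useful when $pk\ll d$), bound (2) (useful when $k\asymp d$), and bound (3) (useful when $k\gg d$).

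\emph{Main obstacle.} The only delicate step is the Wick-formula estimate in the second paragraph: a matching that pairs two indices inside one copy of $T$ can a priori contribute a factor as large as a power of $d$, so one must show these ``self-pairings'' are rare enough that the full sum is still dominated by $\|T\|_2^p$ times the all-cross-pairings count $(pk/2-1)!!\,(d+pk/2)\cdots(d+kp-2)$. This combinatorial bound is the technical heart and coincides with the computation underlying DKRS23's Lemma 3.7 and Corollary 3.9, which \Cref{lem:conc-tensor-random-dir} restates; everything else — symmetrization, Minkowski's and Jensen's inequalities, the sphere-to-Gaussian reduction, and the Beta-distribution moment formula — is routine bookkeeping, modulo using the Gamma-function form of the moments when $pk/2$ is not an even integer.
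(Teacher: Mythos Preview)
Your reduction to a fixed tensor is exactly what the paper does: the paper's entire proof of this lemma is to invoke Jensen's inequality $\E_v(\E_y g)^p\le \E_y\E_v g^p$ and then cite \cite[Lemma~3.7, Corollary~3.9]{DiaKRS23} for the fixed-tensor bound and the three explicit consequences. Your Minkowski-then-Jensen route lands on the same inequality, so the core step matches.

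Where you go further is in sketching the DKRS23 argument itself via the Gaussian representation $v=g/\|g\|$ and Wick's formula. That is indeed the mechanism behind their Lemma~3.7, and your identification of the self-pairing combinatorics as the delicate point is accurate. One small slip: you write that the ratio ``is exactly $\|T\|_2^p\,\E_v[(v^\top w)^{pk/2}]$'', but when $pk/2$ is odd (e.g.\ $p=2$, $k$ odd) that raw moment vanishes by symmetry while your ratio is positive; the correct object is $\E_v[|v^\top w|^{pk/2}]=\|W'\|_{L_{pk/2}}^{pk/2}$, which your Beta formula already handles and which you flag in your final caveat. With that fix your sketch is sound, and in any case the paper does not ask you to reprove DKRS23---it simply cites it.
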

While \Cref{eq:DKRS23} is established in \cite[Lemma 3.7]{DiaKRS23} for a fixed tensor $\bT$, the desired follows by  Jensen's inequality: for any even $p$ and error bound $g(v,y)$, we have that $\E_v(\E_y[g(v,y)])]^p \leq \E_v \E_y [g(v,y)^p] = \E_y(\E_v [g(v,y)^p])$, where one can now use \cite[Lemma 3.7]{DiaKRS23}.

We now couple \Cref{lem:conc-tensor-random-dir} with the simple fact that for any random variable $X$,
with probability $1-\delta$,
    $|X| \leq (1/\delta)^{1/p}\|X\|_{L_{p}}$ for any $p \geq 1$.
Therefore, for any $k \geq t$, with probability $1 - \delta/k^2 $,
$|W_k|$ is less than $ c \min(\|W_k\|_{L_{\log(k/\delta)}},  \sqrt{k/\delta} \|W_k\|_{L_2} )$.
We now calculate these bounds separately for different $k$.

\begin{itemize}
  \item (Small $k$ and large $p$) Define $p_k \asymp \log (k/\delta)$.  For any $k$ such that $k \geq t$ and $k p_k \leq C'd p_k$, we have
\begin{align*}
\|W_k\|_{L_{p_k}} \leq     \left(\frac{Ct \log{(t/\delta)}}{d}\right)^{t/4}  +  \exp\left(-C \frac{d}{\log^2(d/\delta)}   \right) \,.
\end{align*} 
This follows by considering the following two regimes separately:
\begin{itemize}
    \item ($c'p_kk \leq d$ for a tiny enough constant $c'$) In this regime, the bound $\left(\frac{Ckp_k}{d}\right)^{k/4}$ is decreasing in $k$ and thus the maximum is achieved at $k=t$.
    \item ($ C'd p_k \geq c'p_kk \geq  d$ for a large constant $c'$) In this regime, the second bound gives the desired result by noting $\max(d,p_kk) \leq C' d_pk$  and $k \geq d/p_k \geq d/p_d$.
\end{itemize}

\item (Large $k$ and $p=2$) Moreover, if $k \geq C'd$, then $\|W_k\|_2 \leq (\frac{d}{2k})^{d/2}$.
Therefore, with probability $1 - \delta/k^{2}$, $|W_k| \leq \sqrt{(k^{2}/\delta)} (d/2k)^{d/2} \leq \sqrt{d/\delta} (d/2k)^{d/4}$.

\end{itemize}

Taking a union bound, we obtain the following bound that holds with probability at least $1-\delta$,
\begin{align*}
    \sum_{k> t} |W_k| &= \sum_{k \in [t, C'd]} k^a |W_k| + \sum_{k \in [t, C'p_k]} k^a |W_k|\\
    &\lesssim (C'd)^{1} \cdot \left( \left(\frac{t \log{(t/\delta)}}{d}\right)^{t/4}  +  d\exp\left(-C \frac{d}{\log(d/\delta)} \right)\right) + \sum_{k \geq C'd} \sqrt{d/\delta} \left(\frac{d}{2k}\right)^{d/2}\,.
    \end{align*}
The summation $\sum_{k \geq C'd}\left(\frac{d}{2k}\right)^{d/2}$ can be upper bounded by a constant factor multiple of the first expression in the sum (this can be seen by integrating $\int_{x \geq x_0} x^{-a} dx$ for $a>2$), and the first expression is at most $e^{-\Omega(d)}$ because $C'$ is large enough.

\subsection{Handling $\widetilde{f}^{> \ell}$}
\label{app:handling_widetilde_f}

We now show that for any $f:\R^d \times \R \to [0,1]$ and any $\delta \in (0,1)$, there exists $\ell \in \N$, depending only on $(f, d, \delta, L,\sigma, \rho, s, \alpha)$
 such that with $1 -\delta$, $|\E_{\altSQ_v}[\widetilde{f}^{>\ell}] - \E_{\distCont_v}[\widetilde{f}^{>\ell}]|$ is smaller than $\gamma$ for a $\gamma$ appropriately small.

First by \Cref{fact:Hermite}, we know that there exists an $\ell(\gamma')$
so that $\|\widetilde{f}^{>\ell}\|_{L_2(P)} \leq \gamma'$.
Since $\chi^2(P,\distCont_v) $ is finite (as established in \Cref{eq:chi-square-gaussian}), 
this implies that $\|\widetilde{f}^{>\ell}\|_{L_2(\distCont_v)}$ is also sufficiently small.
By Cauchy-Schwarz, we get that for every $\gamma$ and $v \in \cS^{d-1}$, there exists an $\ell'(\delta,d,\gamma)$ so that 
$|\E_{\distCont_v}[\widetilde{f}^{>\ell}]| \leq \gamma$. 

Thus, it remains to argue about $\E_{\altSQ_v}[\widetilde{f}^{>\ell}]$.
By a Markov inequality, it suffices to show that $\E_v[|\E_{\altSQ_v}[\widetilde{f}^{>\ell}]|] \leq \E_v\E_{Q_v}[|\widetilde{f}^{>\ell}|] \leq \gamma/\delta$.
Let $D$ be the distribution of $(x,y)$ obtained over $(x,y)$ as follows:
first $y \sim R$ and then $v \sim \cS^{d-1}$ and $x \sim P^{\widetilde{A}_y}_v$.
Let $D_y$ be the conditional distribution of $x$ given $y$ under $D$.
Thus $\E_v\E_{Q_v}[|\widetilde{f}^{>\ell}|] \leq \gamma/\delta= \E_D[|\widetilde{f}^{>\ell}|]$.
We will now show that $\chi^2(D,P) < \infty$, which would suffice for our result. 
Observe that 
\begin{align*}
\chi^2(D,P) :&= \int_{y\sim R} \int_{x} \frac{R^2(y)D_y^2(x)}{R(y) G(x)} dx dy \\
&= \int_{y \sim R} R(y)\int_{x} \frac{D_y^2(x)}{G(x) dx} = \int_{y \sim R} R(y) \chi^2(D_y, \cN(0,\bI_d))\,. 
\end{align*}
Observe that $D_y$ is obtained from $P^{\widetilde{A}_y}_v$ where $\widetilde{A}$ is supported only on $\{x: |x| \leq d\}$.
\cite[Lemma 3.1]{DiaKRS23} implies that $\chi^2(D_v, \cN(0,\bI_d))$ is uniformly upper bounded by $O_d(1)$.
Integrating this uniform upper bound by $O_d(1)$, we get the desired conclusion of $\chi^2(D, P)< \infty$.

\section{Computationally-Efficient Reduction from Testing to Estimation} %
\label{app:estimation_is_harder_than_testing}

Suppose there is an algorithm $\cA$ with the following guarantees:
given $n$ i.i.d.\ samples $(x_1,y_1),\dots, (x_n,y_n)$ in $\R^d \times \R$ from \Cref{def:estimation-problem} with inlier probability $\alpha$ and regressor $\beta \in \R^d$,
computes an estimate $\widehat{\beta}$ such that $\|\widehat{\beta} - \beta\|_2 \lesssim \tau$.

Consider the following  (randomized) algorithm $\cA'$ that takes $2n$ samples $S= \{(x_1,y_1),\dots, (x_n,y_n)\}$ and 
$S'=\{(x_1',y_1'),\dots, (x_n',y_n')\}$
and perform the following operation:
    \begin{itemize}
  \item Sample a random rotation matrix $\bU \in \R^{d \times d}$.
  \item Let $\widehat{\beta}_1$ be the output of $\cA$ on $S$.
  \item Define $S'' := \{(\bU x_1',y_1'),\dots, ( \bU x_n',y_n')\}$
  \item  Let $ w $ be the output  of $\cA$ on $S''$.
  \item Let $\widehat{\beta}_2 = \bU^\top w$.
  \item Let $W = \left \langle \frac{\widehat{\beta}_1}{\|\widehat{\beta}_1\|}, \frac{\widehat{\beta}_2}{\|\widehat{\beta}_2\|_2} \right \rangle $. If $ |W|  > 1/9$, output ``alternate'', otherwise output ``null''.
\end{itemize}

\begin{theorem}
If $\tau \leq  \rho/4$ and $d \gtrsim \log(1/\delta)$, then $\cA'$ solves the testing problem in \Cref{def:lin-regr-oblivious} with probability at least $1 - 2\delta$.
\end{theorem}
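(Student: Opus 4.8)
The plan is to analyze the two branches of Testing Problem~\ref{def:lin-regr-oblivious} separately and show that the statistic $W$ cleanly separates them. Under the ``null'', the responses $y_i$ are independent of the covariates $x_i$, so the samples $S$ and $S''$ are completely uninformative about any direction. Intuitively, $\widehat\beta_1$ and $\widehat\beta_2 = \bU^\top w$ should then behave like (essentially arbitrary but) \emph{fixed} vectors that are independent of the random rotation $\bU$ in the relevant sense: $\widehat\beta_1$ depends only on $S$, and $w$ depends only on $S''=\{(\bU x_i',y_i')\}$, whose law is \emph{rotation-invariant} because $x_i'\sim\cN(0,\bI_d)$ and $y_i'$ is independent of $x_i'$. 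Consequently $w$ has a rotationally symmetric distribution, so $\widehat\beta_2 = \bU^\top w$ is (conditionally on $S$ and on $\|w\|$) a uniformly random vector on the sphere of that radius, independent of $\widehat\beta_1$. A standard concentration bound for the inner product of a fixed unit vector with a uniform unit vector in $\R^d$ (e.g.\ the $\Theta(1/\sqrt d)$-subgaussianity cited as \cite[Theorem 3.4.6]{Vershynin18}) then gives $\P(|W| > 1/9) \le \exp(-\Omega(d))$, which is at most $\delta$ once $d \gtrsim \log(1/\delta)$. (One should handle the measure-zero nuisance of $\widehat\beta_1$ or $\widehat\beta_2$ being zero; if $\cA$ ever outputs $0$ we may declare failure, absorbed into $\delta$.)

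Under the ``alternate'', both $S$ and $S''$ are genuine samples from the estimation model of Definition~\ref{def:estimation-problem}: $S$ has regressor $\beta^*= \rho v$, and $S''$, obtained by rotating the covariates by $\bU$, has regressor $\bU\beta^* = \rho\,\bU v$ (since $y_i' = (x_i')^\top\beta^* + z_i = (\bU x_i')^\top (\bU\beta^*) + z_i$, and $\bU x_i'\sim\cN(0,\bI_d)$ with $z_i$ still independent). So the accuracy guarantee of $\cA$ gives, with probability $\ge 1-2\delta'$ over the two sample sets (union bound over the two runs), $\|\widehat\beta_1 - \rho v\|_2 \le \tau$ and $\|w - \rho \bU v\|_2 \le \tau$, hence $\|\widehat\beta_2 - \rho v\|_2 = \|\bU^\top w - \rho v\|_2 = \|w - \rho\bU v\|_2 \le \tau$. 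With $\tau \le \rho/4$ this forces $\|\widehat\beta_i\|_2 \in [\rho - \tau, \rho+\tau] \subseteq [\tfrac34\rho, \tfrac54\rho]$ and $\langle \widehat\beta_1,\widehat\beta_2\rangle \ge \rho^2 - 2\rho\tau - \tau^2 \ge \rho^2(1 - \tfrac12 - \tfrac1{16}) > 0$, so
\[
W \;=\; \frac{\langle \widehat\beta_1,\widehat\beta_2\rangle}{\|\widehat\beta_1\|_2\,\|\widehat\beta_2\|_2} \;\ge\; \frac{\rho^2(1-\tfrac12-\tfrac1{16})}{(\tfrac54\rho)^2} \;=\; \frac{7/16}{25/16} \;=\; \frac{7}{25} \;>\; \frac19,
\]
so $\cA'$ correctly outputs ``alternate''. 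Re-parametrizing $\delta' = \delta$ in the two-run union bound gives overall failure probability $\le 2\delta$ on this branch as well, matching the null bound, and the theorem follows.

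The only genuinely delicate step is the null-case argument: one must be careful that the random rotation $\bU$ is drawn independently of everything, and that the joint law of $(S, S'')$ under $\bU$ is exactly $S$ together with a rotation-invariant $S''$, so that $w=\cA(S'')$ and $\bU$ are ``jointly exchangeable'' in the precise sense needed to conclude $\bU^\top w \mid (S, \|w\|, \text{sign pattern of }\cA)$ is uniform on a sphere. The clean way to see this: condition on the \emph{unlabeled} data $\{(x_i',y_i')\}$; then $\{(\bU x_i', y_i')\}$ has the same distribution as $\{(\bU' x_i', y_i')\}$ for any fixed rotation $\bU'$, i.e.\ $S''$ is exchangeable under left-multiplication of the $x$-part by rotations; feeding this through the deterministic (or internally-randomized, independent) map $\cA$ shows $\bU^\top w$ is rotationally invariant in distribution given $S$, hence its normalization is Haar-uniform on $\cS^{d-1}$, independent of $\widehat\beta_1/\|\widehat\beta_1\|_2$. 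After that the inner-product concentration is routine. I would also remark that the ``$d\gtrsim\log(1/\delta)$'' hypothesis is used only here, to drive $\exp(-\Omega(d))$ below $\delta$.
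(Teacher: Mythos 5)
Your proposal follows the paper's own two-branch argument essentially step by step: on the alternate branch, both samples are genuine instances of Definition~\ref{def:estimation-problem} with regressors $\rho v$ and $\rho\,\bU v$, and a triangle-inequality computation (yours gives $W\ge 7/25$, the paper's slightly cruder bound gives $W\ge 1/9$) shows the statistic is large; on the null branch, $\widehat\beta_2/\|\widehat\beta_2\|_2$ is Haar-uniform and independent of $\widehat\beta_1$, so the $\Theta(1/\sqrt d)$-subgaussianity of the inner product of independent unit vectors gives a failure probability $e^{-\Omega(d)}\le\delta$. That is exactly the paper's proof.

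One point of precision you should tighten in the null case. The property you actually need is that $w=\cA(S'')$ is \emph{independent of $\bU$}, so that, conditionally on $w$ (and on $S$), $\bU^\top w$ is uniform on the sphere of radius $\|w\|_2$. This holds because the conditional law of $S''$ given $\bU$ is the same for every $\bU$ (namely i.i.d.\ $\cN(0,\bI_d)\times R$, since $\bU x_i'\sim\cN(0,\bI_d)$ and $y_i'\perp x_i'$), hence $S''\perp\bU$ and therefore $w\perp\bU$, which is exactly the paper's reasoning. Your intermediate assertion that ``$w$ has a rotationally symmetric distribution'' is neither needed nor true in general (take $\cA$ to always output $e_1$: then $w\equiv e_1$ is certainly not rotationally symmetric, yet $\bU^\top e_1$ is still uniform because $\bU\perp w$). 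Likewise, the ``clean way'' detour that conditions on $S'=\{(x_i',y_i')\}$ actually undercuts the argument: once you condition on $S'$, the set $S''$ becomes a deterministic function of $\bU$, so $S''\not\perp\bU$, and the rotation-equivariance identity $\bU^\top\cA(\{(\bU x_i',y_i')\})$ does not transform nicely under $\bU\mapsto\bR\bU$ for a fixed $\bR$ unless $\cA$ itself is rotation-equivariant, which is not assumed. So drop the conditioning on $S'$ and the rotational-symmetry claim, and instead use the (unconditional) independence $w\perp\bU$ directly; the rest of your proof then matches the paper.
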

\begin{proof}
We will argue the success probabilities separately.

\paragraph{Alternate Distribution}
Consider the case when the underlying distribution is alternate and let the latent hidden direction be $v$.
Conditioned on $v$ and $\bU$,
the samples $S$ and $S''$ satisfy the conditions of \Cref{def:estimation-problem} with the underlying regressor $\beta$ and $\bU \beta$, where $\beta:= \rho v$; here we use that Gaussian distribution is rotationally invariant and $\bU x$ is again distributed as isotropic Gaussian.
Thus, the guarantees of $\cA$ imply that with probability $1 - 2 \delta$,
we have that $\|\widehat{\beta}_1 - \beta\|_2 \leq \tau$
and $\|\widehat{\beta}_2 - \beta\|_2 = \|w - \bU \beta\|_2 \leq \tau$.
Since $\tau \leq \rho/4$, we have that $\widehat{\beta}_1\|_2  \leq 1.5 \rho $ and the same for $\|\widehat{\beta}_2\|$.
Since
\begin{align*}
\left \langle \widehat{\beta}_1, \widehat{\beta}_2  \right\rangle - \langle \beta, \beta\rangle
&= - \left \langle \widehat{\beta}_1 - \beta, \widehat{\beta}_2 - \beta  \right\rangle - \langle \beta, \widehat{\beta}_2 - \beta\rangle - \langle \beta, \widehat{\beta}_1 - \beta\rangle,   
\end{align*}
the closeness guarantee implies that
\begin{align*}
\left|\left \langle \widehat{\beta}_1, \widehat{\beta}_2  \right\rangle - \langle \beta, \beta\rangle
\right| \leq \tau^2 + 2\rho \tau  \leq 3 \rho^2/4\,.
\end{align*}
Hence, with probability $1 -2 \delta$, we have that $|W| \geq (\rho^2/4)/(3 \rho/2)^2 \geq 1/9$, and hence the algorithm would correctly output ``alternate''.

\paragraph{Null Distribution}
We will argue that $w$ is independent of $\bU$.
Indeed, for any $\bU$, the distribution of the samples in $S''$ is i.i.d.\ from $\cN(0,\bI_d) \times R$, where $R$ is the marginal distribution of $y$ (recall that $y$ is independent of $X$ under the null).
Hence, $S''$ and $w$ are independent of $\bU$.
Therefore, $ \frac{\widehat{\beta}_2}{\|\widehat{\beta}_2\|_2}$ is distributed uniformly over the unit sphere (independent of $\beta_1$). 
Hence, $W$ is distributed as the product of two unit vectors, implying that with probability $1- \delta$, $|W| \lesssim \sqrt{\frac{\log(1/\delta)}{d}}$, and hence the algorithm correctly outputs ``null'' for $d$ large enough.
\end{proof}

\section{Inefficient SQ Algorithm with Correct Sample Complexity}
\label{app:inefficient-sq}
In this section, we mention an SQ algorithm that uses $q = \exp(\widetilde{O}(d/\tau\alpha))$ queries from $\VSTAT(m)$ with $m = \Theta(1/\alpha)$ and outputs an estimate $\widetilde{\beta}$ such that $\|\widehat{\beta}-\beta\|_2 \lesssim \tau$. Furthermore, 
this SQ algorithm can be simulated from  ${O}\left(\frac{d\log(1/\alpha)}{\alpha}\right)$ i.i.d.\ samples from distribution $P_{\beta^*,E}$.

\begin{theorem}
Let $\|\beta^*\|_2 \leq 1$ and $\alpha\in(0,1)$ and let the underlying distribution be $P_{\beta^*, E}$ for an unknown $E$ and known $\alpha$. 
There exists an SQ algorithm that uses $q \leq \exp(O(d\log(1/\tau\alpha))$  many queries to $\VSTAT(m)$ for $m \lesssim 1/\alpha$
and outputs an estimate $\widetilde{\beta}$ such that $\|\widehat{\beta}-\beta^*\|\lesssim \tau$.

Furthermore, with high probability, the $\VSTAT(m)$ oracle for this SQ algorithm can be simulated using $m' = \widetilde{O}\left(\frac{d}{\alpha}\right)$ many i.i.d.\ samples from $P_{\beta^*,E}$.
\end{theorem}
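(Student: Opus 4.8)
The plan is to run a brute-force search over an exponentially large net of candidate regressors, spending one Statistical Query per candidate to test closeness to $\beta^*$. Concretely, I would fix a small window $\delta := \tau\alpha/C$ for a sufficiently large absolute constant $C$, take $\Lambda$ to be a $\delta$-net of the unit ball $\{\beta\in\R^d:\|\beta\|_2\le 1\}$ (so $|\Lambda|\le(3/\delta)^d = \exp(O(d\log(1/(\tau\alpha))))$), and for each $\beta\in\Lambda$ issue the bounded query $f_\beta(x,y):=\1\{|y-x^\top\beta|\le\delta\}$ to a $\VSTAT(m)$ oracle with $m:=\Theta(1/\alpha)$. The algorithm then outputs any $\widehat\beta\in\Lambda$ whose returned value exceeds $\alpha/4$ (and an arbitrary point if none does). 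This uses $q=|\Lambda|=\exp(O(d\log(1/(\tau\alpha))))$ queries, each to $\VSTAT(m)$ with $m=\Theta(1/\alpha)$, matching the claimed budgets.

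For correctness I would argue as follows. Conditionally on $z$, the residual $y-x^\top\beta$ is distributed as $\cN(z,\eta^2)$ with $\eta:=\|\beta-\beta^*\|_2$, since $x^\top(\beta^*-\beta)\sim\cN(0,\eta^2)$ is independent of $z$. Bounding the Gaussian density by $(\eta\sqrt{2\pi})^{-1}$ uniformly in $z$ gives, for \emph{every} noise distribution $E$, that $\mu_\beta:=\E_{P_{\beta^*,E}}[f_\beta]\le 2\delta/(\eta\sqrt{2\pi})$, so $\mu_\beta\le\alpha/100$ as soon as $\eta\gtrsim\delta/\alpha$. On the other hand, for the net point $\beta_0$ closest to $\beta^*$ we have $\eta_0\le\delta$, and conditioning on the event $\{z=0\}$ — which has probability at least $\alpha$ by \Cref{def:estimation-problem} — together with the fact that $\P_{G\sim\cN(0,t^2)}[|G|\le\delta]$ is nonincreasing in $t$ yields $\mu_{\beta_0}\ge\alpha\,(2\Phi(1)-1)\ge 0.68\,\alpha$. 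A short computation then confirms that the $\VSTAT(\Theta(1/\alpha))$ oracle separates the two regimes: its response exceeds $\alpha/4$ whenever $\mu_\beta\ge 0.68\alpha$ and falls below $\alpha/4$ whenever $\mu_\beta\le\alpha/100$. Hence $\beta_0$ is always accepted, so some $\widehat\beta$ is returned; and any accepted candidate must have $\mu_\beta\gtrsim\alpha$, which by the density bound forces $\|\widehat\beta-\beta^*\|_2\lesssim\delta/\alpha=\tau/C\lesssim\tau$, the desired estimation guarantee.

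For the sample-complexity claim I would invoke the standard simulation of non-adaptive SQ algorithms by empirical estimates: since the net $\Lambda$, and hence the family $\{f_\beta\}_{\beta\in\Lambda}$, is fixed before seeing any data, taking $N$ i.i.d.\ samples and using the empirical means $\widehat\mu_\beta$ gives, via Bernstein's inequality and a union bound over $\Lambda$, that $|\widehat\mu_\beta-\mu_\beta|\lesssim\sqrt{\mu_\beta\log(q)/N}+\log(q)/N$ for all $\beta\in\Lambda$ simultaneously with high probability; each such estimate is a valid $\VSTAT(\Omega(N/\log q))$ response. Choosing $N=\Theta(m\log q)=\Theta\!\big(\tfrac{d}{\alpha}\log(1/(\tau\alpha))\big)=\widetilde O(d/\alpha)$ therefore implements a $\VSTAT(\Theta(1/\alpha))$ oracle for all the queries, which is what the second part of the statement asks for.

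The calculations above are all elementary; the one genuinely delicate choice — and the place where a naive attempt breaks — is the calibration $\delta\asymp\tau\alpha$. If one instead took $\delta\asymp\tau$ (the window that most naturally matches the target accuracy), the far-candidate bound $\mu_\beta\le 2\delta/(\eta\sqrt{2\pi})$ would be vacuous for \emph{all} candidates inside the unit ball whenever $\tau/\alpha\gtrsim 1$, so the ``$\approx\alpha$ versus $\le\alpha/100$'' gap that the $\VSTAT$ oracle must detect would collapse, and no oracle with $m=o(1/\tau)$ could see the signal. Taking $\delta\asymp\tau\alpha$ simultaneously (i) makes that gap detectable already at $m=\Theta(1/\alpha)$ and (ii) keeps every accepted candidate within $O(\tau)$ of $\beta^*$, which is exactly the balance the theorem requires.
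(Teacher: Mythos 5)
Your SQ algorithm and its correctness analysis are essentially the paper's: both build a $\Theta(\tau\alpha)$-scale net over the unit ball, issue the indicator query $\1\{|y - x^\top\beta|\le\delta\}$ for each net point, and separate near from far candidates by (i) the Gaussian-density upper bound $\E[f_\beta]\le 2\delta/(\eta\sqrt{2\pi})$ for far $\beta$ and (ii) conditioning on $z=0$ for the near net point. The paper selects $\argmax_\beta v_\beta$ while you threshold-accept at $\alpha/4$; given the $\Omega(\alpha)$ separation these are interchangeable, and your remark about why $\delta\asymp\tau\alpha$ (not $\delta\asymp\tau$) is exactly the right calibration.

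The one place where you genuinely diverge is the sample-complexity simulation, and it leaves a small gap. You union-bound over the net $\Lambda$, yielding $N=\Theta(m\log q)=\Theta\!\bigl(d\log(1/(\tau\alpha))/\alpha\bigr)$. This carries a $\log(1/\tau)$ factor, so it only matches the claimed $m'=\widetilde O(d/\alpha)$ when $\tau$ is at most polynomially small in $d$ and $1/\alpha$; for $\tau$ exponentially small the bound blows up. The paper avoids this by noting that the family $\{f_\beta\}_{\beta\in\R^d}$ consists of (intersections of two) halfspaces in $\R^{d+1}$, hence has VC dimension $O(d)$, and then runs a localized VC-type uniform convergence argument (partitioning by the magnitude of $\E[f_\beta]$) to get a bound that is genuinely $\widetilde O(d/\alpha)$ with no dependence on $\tau$ at all. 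If you replace the union bound over the finite net with a VC/Rademacher-style uniform convergence over the halfspace class, your argument matches the paper's exactly.
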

\begin{proof}
    Let $\cC$ be a $\tau'$-cover of $\{x: \|x\|_2 \leq 1\}$ with respect to the Euclidean norm for $\tau' = 0.01\tau \alpha$. 
    We know such a cover exists with $\log |\cC| \lesssim d \log(1/\tau')$. 
    Furthermore, let $\beta' \in \cC$ be  $\tau'$-close to $\beta^*$.
    For each $\beta \in \cC$, define the query $f_\beta(x,y) = \1_{|x^\top \beta - y| \leq \tau'}$.

    The SQ algorithm is as follows:
    \begin{mdframed}
    \begin{itemize}
        \item For each $\beta \in \cC$, let $v_\beta \gets \VSTAT(f_\beta,m)$.
        \item Output $\widehat{\beta} = \argmax_{\beta \in \cC} v_\beta$.
    \end{itemize}        
    \end{mdframed}

    \paragraph{Correctness.}
    To show correctness, we shall show that for $\beta$ that is $\tau$-far from $\beta^*$, it must be the case
     that $v_{\beta} < v_{\beta'}$, which would imply that any such $\beta$ can not be the output.
     
    Let us start by analyzing $\E[f_{\beta}]$.
    Let the distribution $E$ be $\alpha \delta_0 + (1-\alpha) E'$ for an arbitrary distribution $E'$, where $\delta_0$ is the point mass at origin.
    Then observe that for $G \sim \cN(0,1)$:
    \begin{align*}
        \E[f_\beta] &= \alpha \E_{x}[\1_{|x^\top (\beta - \beta^*)| \leq \tau'}] + (1-\alpha)  \E_{x,z\sim E'}[\1_{|x^\top (\beta - \beta^*)+ z| \leq \tau'}]\\
        &= \alpha \P(|G| \leq \tau'/\|\beta-\beta^*\|) + (1-\alpha) \P_{G,z\sim E'}\left(|G\cdot \|\beta-\beta^*\|_2 + z|\leq \tau'\right).
    \end{align*}
In particular, for $\beta'$, $\E[f_{\beta'}] \geq 0.5\alpha$ because $\|\beta'-\beta^*\|\leq \tau'$ and $\P(|G|\leq 1) \geq 0.5$.
It can then be checked $\max_{\beta \in \cC} \geq v_{\beta'} \geq \E[f_{\beta'}] - \frac{1}{m} - \sqrt{ \frac{\E[f_{\beta'}]}{m} } $, which is bigger than $\E[f_{\beta'}]/2$ if $m \gtrsim \frac{1}{\E[f_{\beta'}]}$, which is satisfied since $m \geq \frac{1}{\alpha}$ and $\E[q_{\beta}]\geq 0.5 \alpha$.

Now consider any $\beta$ such that $\|\beta - \beta^*\| = r \geq \tau =  100\tau'/\alpha$.
   Then
   \begin{align*}
       \E[f_\beta] &=  \alpha \P(|G| \leq \tau'/r) + (1-\alpha) \P_{G,z\sim E'}\left(|G r + z|\leq \tau'\right)\\
       &\leq \alpha \P(|G| \leq \tau'/r) + (1-\alpha) \max_{z' \in \R}\P_{G}\left(|G r + z'|\leq \tau'\right)\\
       &\leq \frac{\alpha \tau'}{r} + (1-\alpha) \frac{\tau'}{r} \\\
       &\leq \frac{\tau'}{r} \leq 0.01 \alpha\,.
   \end{align*}
Therefore, for any such $\beta$, $v_{\beta} \leq \E[f_{\beta'}] + \frac{1}{m} + \sqrt{ \frac{\E[f_{\beta'}]}{m} } \leq 0.02 \alpha $ if $m \gtrsim 1 /\alpha$.
Therefore, any such $\beta$ can not be $\widehat{\beta}$ and hence $\|\widehat{\beta}-\beta^*\|_2 \leq \tau$.

\paragraph{Simulation with samples}
We implement the $\VSTAT(m)$ oracle by taking a set $S$ of i.i.d.\ samples and returning the empirical mean of $q_{\beta}$ over $S$.
Observe that all of the queries $f_\beta$ are halfspaces and hence have VC Dimension $O(d)$.
For $i \in \{1,\dots, \log(1/\alpha_0)\}$,
let $\cA_i = \{\beta: \E[q_{\beta}] \in [2^i\alpha, 2^{i+1}\alpha ] \cup [1 - 2^{i+1}\alpha, 1 - 2^{i}\alpha]\}$. Let $\cA_0 = \{\beta: \E[q_{\beta}] \in [0, \alpha] \cup [1 - \alpha, 1]\}$.

By uniform concentration \cite[Theorem 13.7]{BouLM13} and \cite[Theorem 12.5]{BouLM13}, if $n \geq \frac{d \log(1/2^{i+1}\alpha)}{2^{i+1}\alpha}$, then with probability $1 - \delta/J$ for $J= \log(1/\alpha)$, 
for all $\beta \in \cA_i$ for $i \in \N \cup \{0\}$, we have
\begin{align*}
    &\left|\E_{S}[f_\beta] - \E_{P_{\beta^*,E}}[q_{\beta}]\right| \\
    &\lesssim \sqrt{2^i\alpha }\sqrt{\frac{d\log(1/2^i\alpha)}{n}} + \sqrt{2^{i}\alpha}\cdot \sqrt{\frac{\log(J/\delta)}{n}} +  \frac{\log(J/\delta)}{n} \\
    &\lesssim \sqrt{\E_{P_{\beta^*,E}}[q_{\beta}] \cdot (1 - \E_{P_{\beta^*,E}}[q_{\beta}])} \cdot \sqrt{\frac{dJ + \log(J/\delta)}{n}} + \sqrt{\alpha} \cdot \sqrt{\frac{dJ + \log(J/\delta)}{n}} +  \frac{\log(J/\delta)}{n}\,, 
\end{align*}
where we use that $\alpha + \E[f_\beta]\cdot (1 - \E[f_\beta]) \gtrsim 2^{i} \alpha$ for all $i\in \N \cup \{0\}$.

By a union bound over $\cA_i$'s, this uniform concentration holds for all $\beta \in \R^d$.
That is, if $n \geq \frac{dJ + \log(J/\delta)}{\alpha}$, then with probability $1- \delta$, for all $\beta\in \R^d $,
we have 
\begin{align*}
    &\left|\E_{S}[f_\beta] - \E_{P_{\beta^*,E}}[q_{\beta}]\right| \\
    &\lesssim \sqrt{\E_{P_{\beta^*,E}}[q_{\beta}] \cdot (1 - \E_{P_{\beta^*,E}}[q_{\beta}])} \cdot \sqrt{\frac{dJ + \log(J/\delta)}{n}} + \sqrt{\alpha} \cdot \sqrt{\frac{dJ + \log(J/\delta)}{n}} +  \frac{\log(J/\delta)}{n}\\
    &\leq \sqrt{\E_{P_{\beta^*,E}}[q_{\beta}] \cdot (1 - \E_{P_{\beta^*,E}}[q_{\beta}])} \cdot \sqrt{\frac{1}{m}} + \frac{1}{m}  \,,     
\end{align*}
if $n \gtrsim  (dJ + \log(J/\delta))\cdot \left(m + \alpha m^2  \right) + m\log(J/\delta)$. 
On this event, we get that the empirical approximation is a $\VSTAT(m/4)$ oracle. 
Since we need $m = \Theta(1/\alpha)$, the required sample complexity for failure probability $\delta$ is at most $\frac{d\log(1/\alpha) + \log(\log(1/\alpha)/\delta}{\alpha}$.

\end{proof}
\section{Efficient SQ Algorithm with Matching Accuracy}
\label{app:efficient-sq}
We now show that there exists an efficient SQ algorithm that solves \Cref{def:estimation-problem} and the hard instance in \Cref{thm:sq-hardness-discrete-gaussian} with polynomially number of $\VSTAT(d/\alpha^2)$ queries.
Let $\beta^*$ be the unknown regressor with $\|\beta^*\|_2 \leq 1$.
In this section, we use $u$ as a shorthand for $(x,y)$.

\begin{theorem}
Let  $\alpha \in (0,1)$ and $\beta^* \in \cB$, where $\cB:= \{\beta: \|\beta\|_2\leq 1\}$.
For any $\eps \in (0,1)$, there is an SQ  algorithm that takes these $\alpha, \eps$ as input, makes $\poly(d)$ number of queries to $\VSTAT\left(\frac{d}{\eps \alpha^2}\right)$ on $P_{\beta^*,E}$, and (iii) computes an estimate $\widehat{\beta} \in \R^d$ such that $\|\widehat{\beta}-\beta^*\|_2 \lesssim \eps \alpha$.       
\end{theorem}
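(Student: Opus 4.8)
The plan is to convert a classical robust‑regression $M$‑estimator into a statistical‑query procedure. Write $\delta=\beta^*-\beta$ and work with the population ``least‑absolute‑deviations subgradient''
\[
  g(\beta)\ :=\ \E_{(x,y)\sim P_{\beta^*,E}}\bigl[\sgn(y-x^\top\beta)\,x\bigr],
\]
which is coordinatewise bounded by $\E|x_j|=O(1)$ and hence well defined no matter how heavy‑tailed $E$ is (so, unlike ordinary least squares, it survives an $E$ with no mean). The key population identity is the following: since $x\sim\cN(0,\bI_d)$ is rotationally invariant and $z\sim E$ is independent of $x$ with $\P_E[z=0]\ge\alpha$, writing $y-x^\top\beta=z+x^\top\delta$ and splitting $x$ into its component along $\delta$ and the orthogonal part yields
\[
  g(\beta)\ =\ c(\delta)\,\frac{\beta^*-\beta}{\|\beta^*-\beta\|_2},
  \qquad
  c(\delta)\ =\ 2\,\E_{z\sim E}\!\bigl[\varphi(z/\|\delta\|_2)\bigr]\ \in\ \bigl[\sqrt{2/\pi}\,\alpha,\ \sqrt{2/\pi}\,\bigr],
\]
where $\varphi$ is the standard Gaussian density. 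Thus $g(\beta)$ points \emph{exactly} from $\beta$ towards $\beta^*$, with norm that is $\Omega(\alpha)$ from below and $O(1)$ from above for every $\beta\ne\beta^*$; the lower bound uses only the atom of $E$ at $0$. The same computation restricted to a ray $\{\beta+\lambda\hat u:\lambda\ge0\}$ shows that $\lambda\mapsto\langle g(\beta+\lambda\hat u),\hat u\rangle$ is positive before the point of the ray nearest $\beta^*$ and negative after it, with magnitude $\gtrsim\alpha\,\bigl|\cos\angle(\beta^*-\beta-\lambda\hat u,\,\hat u)\bigr|$ — a clean one‑dimensional signal to binary‑search on.

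Next I would implement both quantities with $\VSTAT$ queries. For a coordinate $j$ and threshold $T=\Theta\!\bigl(\sqrt{\log(d/(\epsilon\alpha))}\bigr)$, the map $(x,y)\mapsto\tfrac12\bigl(1+\sgn(y-x^\top\beta)\,\mathrm{clip}_T(x_j)/T\bigr)$ is $[0,1]$‑valued, so a $\VSTAT(m)$ answer estimates $\tfrac12+\tfrac1{2T}\E[\sgn(y-x^\top\beta)\,\mathrm{clip}_T(x_j)]$ to additive error $\lesssim 1/\sqrt m$; the clipping bias is at most $\E\bigl[|x_j|\,\1\{|x_j|>T\}\bigr]=e^{-\Omega(T^2)}$. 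Hence $d$ such queries produce $\widehat g(\beta)$ with $\|\widehat g(\beta)-g(\beta)\|_2\le\gamma:=\widetilde O(\sqrt{d/m})$, and one analogous directional query estimates $\langle g(\beta+\lambda\hat u),\hat u\rangle$ to error $\widetilde O(1/\sqrt m)$.

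The algorithm is then iterative, starting from $\beta_0=0$ (recall $\|\beta^*\|_2\le1$). In round $t$, with $\rho_t:=\|\beta_t-\beta^*\|_2$, I would: (i) query $\widehat g(\beta_t)$ and set $\widehat u_t=\widehat g(\beta_t)/\|\widehat g(\beta_t)\|_2$ — since $\|g(\beta_t)\|_2\gtrsim\alpha$, the angle between $\widehat u_t$ and $(\beta^*-\beta_t)/\rho_t$ is $O(\gamma/\alpha)$; (ii) binary‑search $\lambda\in[0,2]$ for the sign change of $\lambda\mapsto\langle g(\beta_t+\lambda\widehat u_t),\widehat u_t\rangle$ using the directional queries, which correctly resolves the sign as long as the queried $\lambda$ is at distance $\gtrsim\rho_t\cdot O(\gamma/\alpha)$ from the true crossing (once closer, the directional subgradient drops below the $\widetilde O(1/\sqrt m)$ noise floor), producing $\widehat\lambda_t$ within $O(\rho_t\gamma/\alpha)$ of the optimal step length; (iii) set $\beta_{t+1}=\beta_t+\widehat\lambda_t\widehat u_t$. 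Since the point of the ray nearest $\beta^*$ already lies within $\rho_t\sin\!\bigl(O(\gamma/\alpha)\bigr)$ of $\beta^*$, this gives $\rho_{t+1}\le O(\gamma/\alpha)\,\rho_t$. Choosing $m=\widetilde\Theta(d/\alpha^2)$ makes $\gamma/\alpha$ a small absolute constant, so $\rho_t$ contracts geometrically; after $K=O(\log(1/(\epsilon\alpha)))$ rounds $\rho_K\le\epsilon\alpha$. The total number of queries is $K\cdot\bigl(d+O(\log(1/(\epsilon\alpha)))\bigr)=\poly\bigl(d,\log(1/(\epsilon\alpha))\bigr)$, all to $\VSTAT(m)$ with $m=\widetilde O(d/\alpha^2)\le d/(\epsilon\alpha^2)$, each query and step computable in $\poly(d)$ time; a uniform‑convergence argument over these halfspace‑type queries, as in \Cref{app:inefficient-sq}, moreover shows the oracle is realizable from $\widetilde O(d/\alpha^2)$ i.i.d.\ samples.

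The step I expect to be the main obstacle is the quantitative ``bootstrapping'' in (ii)–(iii): one must show that a direction known only up to a \emph{constant} angular error, combined with a line search that inevitably loses resolution as soon as the directional subgradient falls below the $\widetilde O(1/\sqrt m)$ estimation error, still contracts $\rho_t$ by a constant factor per round. This is exactly what lets the simulation complexity scale as $d/\alpha^2$ rather than the $d/(\epsilon^2\alpha^4)$ that a one‑shot gradient step — or plain (sub)gradient descent on a Huber loss, whose gradient near $\beta^*$ has norm only $\Theta(\alpha\rho_t)$ — would require; getting the error‑propagation constants to cooperate across rounds is where the care is needed.
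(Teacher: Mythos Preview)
Your approach is correct and takes a genuinely different route from the paper's. The paper sets up a truncated Huber loss $\ell(\beta,(x,y))=\mathbf{1}\{\|x\|_2\le L\sqrt d\}\cdot\ell_{\mathrm{Huber}}(y-x^\top\beta)$, checks that the population risk is $\Theta(\alpha)$-strongly convex, $O(1)$-smooth, and $O(L\sqrt d)$-Lipschitz on $\cB$ with unique minimizer $\beta^*$, and then invokes the black-box SQ stochastic convex optimization result of Feldman--Grigorescu--Vempala (their Corollary~4.12) to get $\poly(d)$ queries to $\VSTAT\bigl(d\,\polylog/(\epsilon\alpha^2)\bigr)$. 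You instead exploit the exact geometry of the LAD subgradient---that $g(\beta)$ points \emph{precisely} along $\beta^*-\beta$ with norm in $[\Omega(\alpha),O(1)]$---to run a bespoke ``estimate direction, then line-search'' iteration with a constant contraction factor per round.

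What each buys: the paper's route is shorter because the optimization is outsourced and the only work is verifying the four convexity/smoothness/Lipschitz/minimizer properties; it also avoids the noisy binary-search analysis you correctly flag as the delicate step (that analysis does go through: since the directional subgradient has a single sign change and magnitude $\gtrsim\alpha$ once $|\lambda-\lambda^*|\gtrsim h$, binary search with the invariant ``left reported $+$, right reported $-$'' localizes to within $O(h)$, giving $\rho_{t+1}\lesssim h=\rho_t\cdot O(\gamma/\alpha)$). Your route is more hands-on but actually yields a slightly stronger conclusion: because the contraction factor $O(\gamma/\alpha)$ is an absolute constant independent of the target accuracy, your $\VSTAT$ parameter stays at $\widetilde O(d/\alpha^2)$ and only the \emph{number} of rounds grows as $\log(1/(\epsilon\alpha))$, whereas the black-box bound carries a $1/\epsilon$ inside the $\VSTAT$ argument. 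Both approaches need the truncation of $x$ (yours via $\mathrm{clip}_T$, theirs via the indicator on $\|x\|_2$) to make the queries bounded.
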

Observe that we do not need $\eps$ to be very small to solve the hard instance of \Cref{def:lin-regr-oblivious}, i.e., we can set $\eps = \rho/\alpha =  \widetilde{\Theta}(1)$ and still solve \Cref{def:lin-regr-oblivious} with polynomial number of queries to $\VSTAT(\widetilde{\Theta}(d/\alpha^2))$. 
\begin{proof}
Define the function $g(x) : \cX \to \{0,1\}$ to be function such that $g(x) = 0$ if and only if $\|x\|_2 \geq L \sqrt{d}$ for $L =  \polylog(d/\alpha \eps)$.
Consider the loss function $\ell(\beta, u):= g(x)\cdot\ell_{\mathrm{Huber}}\left(y- x^\top \beta\right)$; here $\ell_{\mathrm{Huber}}(\cdot)$ is the Huber loss with the gradient $h(z) = z\1_{z\in [-1,1]} + \sign(z)\1_{|z|>1}$.
Consider the averaged loss $\cL(\beta) := \E_{u \sim P_{\beta^*, E}}[\ell(\beta, u)]$.

We claim the following:
\begin{enumerate}
    \item $\cL$ is $\kappa$-strongly convex on $\cB$ for $\kappa = \Theta(\alpha)$.
    \item $\cL$ is $L_1$ smooth (Lipschitz continuous gradient) on $\cB$  for $L_1 =  O(1)$.
    \item For every $z \in \cZ$, the function $\ell(\cdot,z)$ is convex, and it is $L_0$-Lipschitz for $L_0 \lesssim L \sqrt{d}$.
    \item $\beta^*$ is the unique minimizer of $\cL$.
\end{enumerate}

Therefore, we can apply  \cite[Corollary 4.12]{FelGV17} with parameters
$L_0$, $L_1$, and $\kappa$ to find an $\alpha \eps$-close estimate $\widehat{\beta}$
such that $\|\widehat{\beta} - \argmin \cL(\beta)\|_2 \lesssim \alpha \eps$ with $ {O}\left(\frac{d L_1 \log(L_1\mathrm{diam}(\cB)/\alpha\eps)}{\kappa}\right) = {O}\left(\frac{d \log(1/\alpha\eps)}{\alpha}\right)$ many queries to $\VSTAT\left(O\left(\frac{L_0^2}{\alpha\eps \kappa}\right)\right) = \VSTAT\left(\frac{d\cdot \polylog(d/\alpha)}{\alpha^2\eps }\right)$.
We get the desired conclusion by noting that $\beta^*$ uniquely minimizes $\cL(\beta)$.

We now give the details omitted earlier:
    \begin{enumerate}
        \item For any unit vector $v$,  $v^\top \nabla^2 \cL v$ is equal to $\E_u[g(x) \nabla^2 \ell_{\mathrm{Huber}} (y - x^\top \beta)(x^\top u)^2 ]$. The convexity follows by non-negativity of the Huber loss.

        \begin{align*}
            v^\top \nabla^2 \cL v  &= \E_u[g(x) \nabla^2 \ell_{\mathrm{Huber}} (y - x^\top \beta)(x^\top v)^2 ] = \E_u[g(u) \I_{|y- x^\top \beta| \leq 1}(x^\top v)^2 ] \\
            &\geq \alpha \cdot \E_{x \sim \cN(0, \bI_d)}[g(x) \I_{|x^\top \beta^*- x^\top \beta| \leq 1}(x^\top v)^2 ] \gtrsim \alpha\,.
        \end{align*}
        The last inequality follows because $g(x) \I_{|x^\top \beta^*- x^\top \beta| \leq 1}(x^\top v)^2 \gtrsim \1_{\|x\|_2 \leq L\sqrt{d}}\1_{|x^\top w |\leq 1}\1_{|x^\top v|\geq 0.5}$ for some unit vector $w$. Using triangle inequality, we obtain that its probability is lower bounded by $\P[\1_{|x^\top w |\leq 1}\1_{|x^\top v|\geq 0.5}] - \P((1-g(x))) \gtrsim 1 - d^{-100} \gtrsim 1$.
        \item The smoothness follows from the same arguments as above by upper bounding $g(x)$ and $\nabla^2 \ell_{\mathrm{Huber}}$ by $1$.
        \item Observe that the gradient satisfies $\nabla \ell(\beta,z) = g(x)h(y - x^\top \beta) x$ and therefore $\|\nabla \ell(\beta,z) x\|_2 \leq L \sqrt{{d}}$, where we use that $\|x g(x)\| \leq \sqrt L {d}$ and the gradient of Huber loss is bounded by $1$.
        
        \item By strong convexity on $\cB$, it suffices to show that $\beta^*$ has zero gradient.
        \begin{align*}
            \|\nabla \cL(\beta^*)\|_2 =
            \| \nabla \E [g(x) h(z)  x]\|_2 = 0,
        \end{align*}
        where we use that $x g(x)$ is a symmetric random variable and independent of $z$.
    \end{enumerate}

 \end{proof}

\end{document}